\pgfplotsset{width=10cm,compat=1.9}
\def\@citecolor{blue}%
\def\@urlcolor{blue}%
\def\@linkcolor{blue}%
\def\orcidID#1{\smash{\protect\raisebox{-1.25pt}{\protect\href{http://orcid.org/#1}{\includegraphics{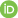}}}}}
\newcommand{\otherlabel}[2]{\protected@edef\@currentlabel{#2}\label{#1}}
\newcommandx{\unsure}[2][1=]{\todo[linecolor=red,backgroundcolor=red!25,bordercolor=red,#1]{#2}}
\newcommandx{\change}[2][1=]{\todo[linecolor=blue,backgroundcolor=blue!25,bordercolor=blue,#1]{#2}}
\newcommandx{\toadd}[2][1=]{\todo[linecolor=pink,backgroundcolor=pink!25,bordercolor=pink,#1]{#2}}
\newcommandx{\sj}[2][1=]{\todo[linecolor=orange,backgroundcolor=orange!25,bordercolor=orange,#1]{SJ:#2}}
\newcommandx{\lk}[2][1=]{\todo[linecolor=green,backgroundcolor=green!25,bordercolor=green,#1]{LK:#2}}
\newcommandx{\jr}[2][1=]{\todo[linecolor=blue,backgroundcolor=blue!25,bordercolor=blue,#1]{JR:#2}}
\newcommand{\bestvalue}[1]{%
  \begin{tikzpicture}[baseline=(t.base)]
    \node[outer sep=0pt,inner sep=0pt] (t) {#1};
    \node[overlay
    ,outer sep=0pt,inner sep=1pt,
    rounded corners=1pt,
    fill=yellow!20!white,
    ] (t) {#1};
  \end{tikzpicture}
}
\tikzset{
    recgray/.style={draw,minimum width=2cm,minimum height=2cm,align=center,text width=2cm,fill=green!15!white,font=\sffamily},
    recgray2/.style={draw,minimum width=2cm,minimum height=1.3cm,align=center,text width=2cm,fill=green!15!white,font=\sffamily},
    teacher/.style={draw,minimum width=2cm,minimum height=3.2cm,align=center,text width=10cm,fill=yellow!10!white,font=\sffamily},
    sul/.style={draw,minimum width=2cm,minimum height=4.5cm,align=center,text width=2cm,fill=teal!30!white,font=\sffamily},
    learner/.style={draw,minimum width=2cm,minimum height=4.5cm,align=center,text width=2cm,fill=teal!30!white,font=\sffamily},
}
\tikzset{
>=stealth, 
node distance=3cm, 
every state/.style={thick, fill=gray!10}, 
initial text=$ $, 
}
\tikzset{
	treenode/.style = {align=center, inner sep=0pt, text centered},
  basis/.style = {
    pattern=north east lines,
    pattern color=lightgray,
  },
  newbasis/.style={
    pattern=north west lines,
    pattern color=teal!80!black!60!white,
  },
  other/.style={
    pattern=north east lines,
    pattern color=red,
  },
  frontier/.style={
    pattern=north east lines,
    pattern color=yellow!80!black,
  },
  pw/.style={
    pattern=north east lines,
    pattern color=green!80!black,
  },
  q0class/.style={
    pattern=north west lines,
    pattern color=red!40!white,
  },
  q1class/.style={
    pattern=north west lines,
    pattern color=blue!60!white,
  },
  q2class/.style={
    pattern=crosshatch,
    pattern color=green!50!black!60!white,
  },
  basic/.style = {
    fill=white,
  }
}
\newcommand{\treeNodeLabel}[1]{\contour{white}{#1}}
\pgfplotsset{compat=1.8}
\pgfplotsset{vasymptote/.style={
    before end axis/.append code={
        \draw[densely dashed] ({rel axis cs:0,0} -| {axis cs:#1,0})
        -- ({rel axis cs:0,1} -| {axis cs:#1,0});
    }
}}
\definecolor{darkred}{rgb}{.75,0,0}
\definecolor{darkgreen}{rgb}{0,.75,0}
\newcommand{\moe}[1]{$\mathsf{MoE}$(#1)}
\newcommand{\partialto}{\ensuremath{\rightharpoonup}}
\newcommand{\takeout}[1]{\relax}
\newcommand{\weights}[0]{\mathsf{weights}}
\newcommand{\probs}[0]{\mathsf{probs}}
\renewcommand{\H}{\mathcal{H}}
\renewcommand{\S}{\mathcal{S}}
\newcommand{\M}{\mathcal{M}}
\newcommand{\bigO}{\mathcal{O}}
\newcommand{\converges}{\ensuremath{\mathord{\downarrow}}}
\newcommand{\diverges}{\ensuremath{\mathord{\uparrow}}}
\newcommand{\ETS}{\mathsf{ETS}}
\newcommand{\defineApiFunction}[1]{%
\expandafter\newcommand\csname #1\endcsname{\text{\upshape\textsc{#1}}\xspace}%
}
\newcommand{\activeMM}{\mathsf{active}}
\newcommand{\myparagraph}[1]{\smallskip\noindent \emph{#1}}
\newcommand{\mysubsubsection}[1]{\medskip\noindent \textbf{#1}}
\definecolor{ballblue}{rgb}{0.13, 0.67, 0.8}
\tikzset{
  do guard/.style={
    inner xsep=0pt,
  },
  guard line offset/.style={
    xshift=2mm,
  },
  bigtalloblong/.style={
    draw=black,
    minimum width=3pt,
    minimum height=1em,
    inner xsep=0pt,
  },
}
\newcommand{\connectDoGuards}[2]{%
  \draw[overlay,draw=black!40!white] ([guard line offset]#1) -- ([guard line offset]#2);
}
\begin{document}

\title{Small Test Suites for Active Automata Learning\thanks{%
This research is partially supported by the NWO grant No.~VI.Vidi.223.096.}}

\titlerunning{Small Test Suites for Active Automata Learning}
%
\author{Loes Kruger \orcidID{0009-0003-3275-6806} 
\and  Sebastian Junges \orcidID{0000-0003-0978-8466} 
\and Jurriaan Rot \orcidID{0000-0002-1404-6232}
  }
\authorrunning{L. Kruger et al.}
%
\institute{
 Institute for Computing and Information Sciences, \\
 Radboud University, Nijmegen, the Netherlands\\
 \email{\{loes.kruger,sebastian.junges,jurriaan.rot\}@ru.nl}
}
\maketitle              

\begin{abstract}
A bottleneck in modern active automata learning is to test whether a hypothesized Mealy machine correctly describes the system under learning. 
The search space for possible counterexamples is given by so-called test suites, consisting of input sequences that have to be checked to decide whether a counterexample exists. 
This paper shows that significantly smaller test suites suffice under reasonable assumptions on the structure of the black box. 
These smaller test suites help to refute false hypotheses during active automata learning, even when the assumptions do not hold.
We combine multiple test suites using a multi-armed bandit setup that adaptively selects a test suite.
An extensive empirical evaluation shows the efficacy of our approach. For small to medium-sized models, the performance gain is limited. However, the approach allows learning models from large, industrial case studies that were beyond the reach of known methods. 

\end{abstract}


\section{Introduction} \label{sec:introduction}

System identification algorithms aim to capture the behavior of a black-box system, often called the \emph{system under learning} (SUL), in a formal model.
Among the system identification approaches, \emph{active automata learning} (AAL)~\cite{AngluinRegularSets1987,IsbernerThesis2015,HowarS2018} is a popular methodology to extract finite automata from a black-box. AAL has been successfully applied to learn security-critical protocol implementations~\cite{FiterauTCP2016, FiterauSSH2017, RuiterTLS2015}, 
legacy code~\cite{AslamCSB20,SHV16}, smart cards~\cite{ChaluparPPR14}, interfaces of data structures~\cite{HowarISBJ12}, embedded control software~\cite{SmeenkPrinter2015}, and (explainable) neural network policies~\cite{DBLP:conf/icml/WeissGY18}.

Modern AAL methods~\cite{IsbernerTTT2014,VaandragerLsharp2022} are available via mature tool sets~\cite{IsbernerHS15,BolligKKLNP10,MuskardinAPPT21} that implement these methods. They are primarily built around Angluin's Minimal Adequate Teacher (MAT) framework~\cite{AngluinRegularSets1987}.
In essence, the theoretically elegant MAT framework requires access to two types of queries. 
First, an \emph{output query} (OQ) allows to execute a sequence of inputs on the black-box and observe its outputs. Second, an \emph{equivalence query} (EQ) asks whether a hypothesized Mealy machine is indeed equivalent to the SUL. Implementing the equivalence query provides practitioners with an impossibility~\cite{MooreGedanken1956}: \emph{How do we decide whether a learned model is equivalent to the behavior of a black-box?}  
To overcome this impossibility, practitioners take a more modest stance and only \emph{approximate} equivalence queries.
One approach is to randomly sample from all possible input sequences, which leads to a statistical guarantee\footnote{Typically, a probably-approximately correct (PAC) guarantee.}, as pioneered in the context of learning by Angluin~\cite{AngluinConceptLearning1987}. 
Alternatively, based on ideas pioneered in~\cite{ChowTesting1978, VasilevskiiFailure1973}, the structure of the hypothesis is used to select a finite set of input sequences to be checked. These sets are \emph{test suites} and the approach is called conformance testing~\cite{2004test}. This paper considers EQs via test suites; for an overview, see Sec.~\ref{sec:overview}.

\myparagraph{Challenge: Finding small test suites.}
Finite test suites can be obtained using the notion of \emph{$k$-completeness}. In short, $k$-completeness guarantees equivalence under the assumption that the number of states in the SUL is at most $k$ states larger than the hypothesis. Popular $k$-complete test are the classical W-method~\cite{ChowTesting1978, VasilevskiiFailure1973} and variations thereon, such as Wp~\cite{FujiwaraTesting1991}, HSI~\cite{LuoTesting1994, PetrenkoTesting1993} and Hybrid-ADS~\cite{MoermanThesis2019}; see empirical evaluations in~\cite{AslamThesis2021}. We call these methods \emph{Access-Step-Identify} (ASI). These are standard in tools like LearnLib~\cite{IsbernerHS15} and AALpy~\cite{MuskardinAPPT21}. However, $k$-complete test suites such as the $W$-method grow with $|I|^k$, where $|I|$ is the number of input symbols. Consequently, even for small $k$, these test suites are prohibitively large. 

\myparagraph{Our approach for smaller test suites.}
Towards smaller test suites, we adapt ASI methods and make assumptions on the \emph{shape of the SUL} in relation to
the shape of the hypothesis.
In particular, we consider several natural assumptions that may occur in real-world systems. For instance, one of these assumptions is that in most states, most inputs either lead to an error-state or are simply discarded. Other assumptions are that certain inputs are used only in the beginning (e.g.\ in the authentication phase of a protocol), or that the SUL has a component structure where inputs are primarily used together within components.
We formalize these assumptions, demonstrate the applicability on industrial benchmarks, and develop a notion of completeness under these assumptions. The resulting test suites are much smaller, as the factor $|I|^k$ is restricted to $|I'|^k$, with $|I'| < |I|$. 

\myparagraph{Challenge: finding counterexamples as soon as possible.}
The time to find a counterexample during EQs is the bottleneck in AAL applications~\cite{YangASML2019,VaandragerLearning2017}. To accelerate this process, it is helpful to constrain the search space of possible counterexamples, allowing for a targeted search. Here complete test suites are again helpful, even if they can not be fully executed and can only be approximated through sampling, as implemented for instance in the randomised W-method of LearnLib. Complete test suites then provide a constrained search space that still contains all actual counterexamples. Another relevant aspect for finding counterexamples fast is the order that tests are chosen: an adequate ordering in which counterexamples (empirically) occur early in the test suites is preferred. 

\myparagraph{Our approach for finding counterexamples faster.}
In the context of randomised W-methods, pruning input sequences that are not counterexamples yields a larger probability of sampling a counterexample and thus speeds up the procedure. 
However, for the smaller test suites described earlier, without domain-specific knowledge, we can not be certain that they contain (a larger fraction of) counterexamples, as we do not know whether the underlying assumptions are met. Instead, our idea is to \emph{combine} multiple test suites. We prefer tests from test suites that led to counterexamples in previous invocations of the EQ during the learning process. We operationalize this idea using multi-armed bandits.

\myparagraph{Contributions.}
In summary, this paper introduces three new test suites that are complete under additional assumptions on the SUL (Sec.~\ref{sec:subalphabet}). We combine these test suites via a multi-armed bandit framework to accelerate finding counterexamples in EQs (Sec.~\ref{sec:subalphabetselection}). The paper demonstrates performance on scalable self-generated benchmarks, standard benchmarks and industry benchmarks (Sec.~\ref{sec:experiments}).
\ifthenelse{\boolean{arxivversion}}{}{The proofs of all theorems, the complete benchmarks results and additional figures can be found in the appendix of the extended version of this paper~\cite{Kruger2024SmallTestSuiteArxiv}.}

\section{Overview} \label{sec:overview}

\ifoptionfinal{
\begin{figure}[t]
\begin{adjustbox}{width=\textwidth}
    \centering

    \begin{tikzpicture}[x=3.5cm,y=2cm]
    \node[teacher] (teacher) at (4.5,1.7) {};
    \node[learner] (learner) at (2,2) {\textbf{1. Learner}};
    \node[sul] (sul) at (7,2) {\textbf{2. SUL}};
    \node[recgray2] (1) at (3.5,1.35) {3b. Generate test case};
    \node[recgray] (2) at (5.5,1.85) {3a. Choose test suite \& Execute test case};

    \node[] at (3.35,2.4) {\textbf{3. Teacher}};
    
     \draw
       
        (5.82,2) edge[->,very thick] node[above,midway,inner sep=2pt] {OQ $\sigma$} (6.68,2)
        (6.68,1.6) edge[->,very thick] node[above,midway,inner sep=2pt,xshift=0.2cm] {response $v \in O^*$} (5.82,1.6)
        (6.68,2.7) edge[->,very thick] node[above,midway,inner sep=2pt] {response $v \in O^*$} (2.32,2.7)
        (2.32,3.0) edge[->,very thick] node[above,midway,inner sep=2pt] {OQ $w \in I^*$} (6.68,3.0)
        (2.32,2.1) edge[->,very thick] node[above,pos=0.12,inner sep=2pt] {EQ $\mathcal{H}$} (5.18,2.1)
        (5.18,1.8) edge[->,very thick] node[above,midway,inner sep=2pt,xshift=1cm] {\textbf{if outputs different}, $\sigma$} (2.32,1.8) 
        (5.18,1.5) edge[->,very thick] node[above,pos=0.7,inner sep=2pt] {\textbf{else}} (3.82,1.5)
        ;
    \draw[->,very thick] (3.82,1.1) -| node[above,pos=0.3,inner sep=2pt] {test case $\sigma$} (5.5,1.35);
    \end{tikzpicture}
    \end{adjustbox}
    \caption{Interaction between the learner, teacher and SUL in the MAT framework.}
    \label{fig:overview}
\end{figure}
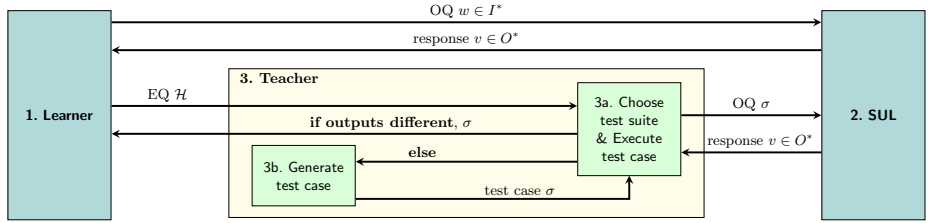}

We briefly illustrate the interactions in the MAT framework, the W-method, and our approach for generating smaller test suites, using a toy example. Recall that in the MAT framework the learner can pose output queries (OQ) and equivalence queries (EQ). This is depicted in Fig.~\ref{fig:overview}, where EQs are implemented by the teacher.
The Mealy machine in Fig.~\ref{fig:overviewtarget} depicts the SUL for a coffee machine with input alphabet $I = \{\textit{coffee}, \textit{espresso}, \textit{tea}, 1 \}$.
Coffee costs $1$~euro, espresso costs $2$~euros, and tea never gets dispensed.
Via a series of queries, we may obtain the hypothesis in Fig.~\ref{fig:overviewH0}.
The hypothesis is easy to refute with an EQ, e.g., via the counterexample $1 \cdot \textit{coffee}$.
After various OQs, we learn the hypothesis in Fig.~\ref{fig:overviewH1}. A short counterexample that distinguishes the hypothesis $\H_1$ from the SUL $\S$ is
\[ \underbrace{\vphantom{\textit{coffee}}1}_{\text{access}} \hspace{0.2cm} \cdot \underbrace{\vphantom{\textit{coffee}} 1 \cdot \textit{coffee}}_{\text{infix}} \cdot \underbrace{\vphantom{\textit{coffee}} \textit{coffee}}_{\text{distinguish}}. \]
The counterexample consists of three parts. We first \emph{access} $q_1$ and $t_1$, from which we run an infix that leads to either $q_1$ or $t_0$, and then we distinguish both states with \textit{coffee}. Executing input \textit{coffee} from $q_1$ returns output \textit{coffee} while executing input \textit{coffee} from $t_0$ returns output $-$.
The W-method generates test suites that consist of input words of a similar shape. Concretely, test suites are constructed as $P\cdot I^{\leq k+1} \cdot W$, where $P$ ensures access to the states in the hypothesis, $I^{\leq k+1}$ is the set of sequences of at most $k+1$ arbitrary input symbols, used to step to states in the (larger) SUL, and $W$ contains sequences that help to distinguish states. Test suites constructed in this way tend to contain many input sequences which do not help to refute the hypothesis. In our example, the W-method test suite with $k=2$ for $\mathcal{H}_1$ also contains uninformative sequences such as
\[ 
\underbrace{\epsilon }_{\text{access}} \hspace{0.2cm} \cdot \underbrace{\vphantom{\textit{coffee}}1 \cdot \textit{espresso} \cdot 1}_{\text{infix}} \cdot \underbrace{ \textit{coffee}}_{\text{distinguish}} \qquad \text{and} \qquad
\underbrace{\vphantom{\textit{coffee}} 1}_{\text{access}} \hspace{0.2cm} \cdot \underbrace{\vphantom{\textit{coffee}} \textit{tea} \cdot \textit{espresso}}_{\text{infix}} \cdot \underbrace{\vphantom{\textit{coffee}} \textit{espresso}}_{\text{distinguish}}.\]

\ifoptionfinal{
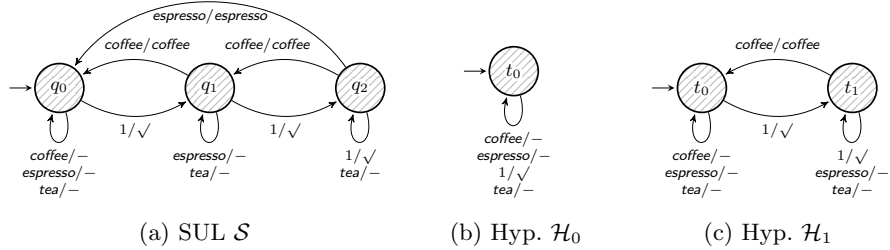
\begin{figure}[t]
    \centering
    \begin{subfigure}[b]{0.45\textwidth}
        \scalebox{0.8}{
        \begin{tikzpicture}[shorten >=1pt,auto,node distance=2.5cm,main node/.style={circle,draw,font=\sffamily\large\bfseries}]
            ]
            \node[initial ,state,basis] (q0) {\treeNodeLabel{$q_0$}};
            \node[state,basis] (q1) [right of=q0] {\treeNodeLabel{$q_1$}};
            \node[state,basis] (q2) [right of=q1] {\treeNodeLabel{$q_2$}};

            \path[->,>=stealth',every node/.style={font=\sffamily\scriptsize}]
            (q0) edge[bend right] node[below] {$1/\surd$} (q1)
                 edge[loop below] node[align=center] {$\textit{coffee}/-$\\$\textit{espresso}/-$\\$\textit{tea}/-$} (q0)
            (q1) edge[bend right] node[below] {$1/\surd$} (q2)
                 edge[bend right] node[above,xshift=0.2cm] {$\textit{coffee}/\textit{coffee}$} (q0)
                 edge[loop below] node[align=center] {$\textit{espresso}/-$\\$\textit{tea}/-$} (q0)
            (q2) edge[loop below] node[align=center] {$1/\surd$\\$\textit{tea}/-$} (q2)
                 edge[bend right] node[above,xshift=-0.3cm] {$\textit{coffee}/\textit{coffee}$} (q1)
                 edge[bend right=55] node[below] {$\textit{espresso}/\textit{espresso}$} (q0)
            ;
    
        \end{tikzpicture}}
        \caption{SUL $\mathcal{S}$}
        \label{fig:overviewtarget}
    \end{subfigure}
    \hfill
    \begin{subfigure}[b]{0.16\textwidth}
        \scalebox{0.8}{
        \begin{tikzpicture}[shorten >=1pt,auto,node distance=2cm,main node/.style={circle,draw,font=\sffamily\large\bfseries},
        ]
        \node[initial,state,basis] (q0) {\treeNodeLabel{$t_0$}};

        \path[->,>=stealth',every node/.style={font=\sffamily\scriptsize}]
        (q0) edge[loop below] node[align=center] {$\textit{coffee}/-$\\
                                            $\textit{espresso}/-$\\
                                            $1/\surd$\\
                                            $\textit{tea}/-$} (q0)
        ;

        \end{tikzpicture}}
        \caption{Hyp.\ $\mathcal{H}_0$}
        \label{fig:overviewH0}
        \end{subfigure}
        \hfill
        \begin{subfigure}[b]{0.3\textwidth}
        \scalebox{0.8}{
        \begin{tikzpicture}[shorten >=1pt,auto,node distance=2.5cm,main node/.style={circle,draw,font=\sffamily\large\bfseries},
        ]
            \node[initial,state,basis] (q0) {\treeNodeLabel{$t_0$}};
            \node[state,basis] (q1) [right of=q0] {\treeNodeLabel{$t_1$}};

            \path[->,>=stealth',every node/.style={font=\sffamily\scriptsize}]
            (q0) edge[bend right] node[below] {$1/\surd$} (q1)
                 edge[loop below] node[align=center] {$\textit{coffee}/-$\\$\textit{espresso}/-$\\$\textit{tea}/-$} (q0)
            (q1) edge[bend right] node[above] {$\textit{coffee}/\textit{coffee}$} (q0)
                 edge[loop below] node[align=center] {$1/\surd$\\$\textit{espresso}/-$\\$\textit{tea}/-$} (q0)
            ;

        \end{tikzpicture}}
        \caption{Hyp.\ $\mathcal{H}_1$}
        \label{fig:overviewH1}
    \end{subfigure}
    \caption{A coffee machine and two hypotheses which can be generated using AAL.}
    \label{fig:coffeemachine}
\end{figure}}

\myparagraph{A smaller test suite.}
In hypothesis $\mathcal{H}_1$, \textit{espresso} and \textit{tea} self-loop in all states. The counterexample to refute this hypothesis only requires the inputs \textit{coffee} and 1. It is natural that input \textit{tea} is not necessary to reach new states, as this option is obsolete. 
This leads us to a test suite for $\mathcal{H}_1$ that excludes the inputs \textit{tea} and \textit{espresso} in the infix. If we generate infixes of length at most 3 ($k=2$) with the full alphabet, the test suite contains 112 test cases. If we exclude two inputs, only 12 test cases remain.

\myparagraph{A set of smaller test suites.}
The restricted test suites that aim to exclude obsolete inputs can be refined. These restrictions can be adapted for other typical scenarios. Consider, e.g., network protocols that only perform a three-way handshake in the initial phase. In states where the communication protocol is initialized, these inputs are no longer relevant. Likewise, there are often clusters where the same input symbols are relevant. For instance, if a 10 cent coin is a relevant input in some state of a vending machine, then a 50 cent coin is likely also relevant. 

\myparagraph{Mixing test suites.}
Restricting the test suites yields the risk of missing counterexamples. 
While the test suite may be complete under (natural) additional assumptions, in a black-box setting we have no way to check whether these assumptions hold. We therefore present a methodology where various restricted test suites are combined, using multi-armed bandits to select test suites. During learning, the EQs then increasingly use test suites for which the assumptions hold, without the need for advanced knowledge of the SUL.  

\section{Complete Test Suites} \label{sec:completeness}

We recall complete test suites and start with preliminaries on Mealy machines.

\begin{definition}
A \emph{Mealy machine} is a tuple $\M = (Q, I, O, q_0, \delta, \lambda)$ with finite sets $Q$, $I$ and $O$ of \emph{states}, \emph{inputs} and \emph{outputs} respectively; the \emph{initial state} $q_0 \in Q$, the \emph{transition function} $\delta\colon Q \times I \to Q$ and the \emph{output function} $\lambda\colon Q \times I \to O$.

 \end{definition}
Below, we also use \emph{partial} Mealy machines; these are defined as above but with $\delta\colon Q \times I \partialto Q$
and $\lambda\colon Q \times I \partialto O$ partial functions with the same domain. 
For a partial function $f$ we write $f(x)\converges$ if $f(x)$ is defined and $f(x)\diverges$ otherwise. 
The transition and output functions are extended to input words of length $n \in \mathbb{N}$ in the standard way, as functions $\delta\colon Q \times I^n \partialto Q$ and $\lambda\colon Q \times I^n \partialto O^n$. 
We abbreviate $\delta(q_0, w)$ by $\delta(w)$. 
Given $Q' \subseteq Q$ and $L \subseteq I^*$, we write $\Delta^{\M}(Q',L) = \{\delta(q,w) \mid q \in Q', w \in L\}$ for the set of states reached from $Q'$ via words in $L$, and we let $\Delta^{\M}(L) = \Delta^{\M}(\{q_0\}, L)$. In particular $\Delta^{\M}(I^*)$ is the set of reachable states of $\M$. We use the superscript $\M$ to indicate to which Mealy machine we refer, e.g. $Q^{\M}$ and $\delta^{\M}$. We write $|\M|$ for the number of states in $\M$.
A state $q \in Q^{\M}$ is a \emph{sink} if for all $i \in I$, $\delta(q,i)=q$. We denote the set of sinks by $Q_{\text{sink}}$.

\begin{definition} \label{def:eq}
    Given a language $L \subseteq I^*$ and Mealy machines $\H$ and $\S$, states $p \in Q^\H$ and $q \in Q^\S$ are $L$\emph{-equivalent}, written as $p \sim_L q$, if $\lambda^\H(p,w)=\lambda^\S(q,w)$ for all $w \in L$. 
    States $p, q$ are \emph{equivalent}, written $p \sim q$, if they are $I^*$-equivalent. The Mealy machines $\H$ and $\S$ are equivalent, written  $\H \sim \S$, if $q_0^{\H} \sim q_0^{\S}$.
\end{definition} 

Conformance testing techniques construct from a current hypothesis $\H$ a suitable test suite $T \subseteq I^*$, to be executed on the (black-box) SUL $\S$. If a test case fails, we know the machines are inequivalent. Ideally, we want a test suite that contains a failing test case for every possible inequivalent Mealy machine. This is called a \emph{complete} test suite. We define completeness in a more generic way than usual to make it easier to add conditions to the set of Mealy machines for which the test suite is complete in subsequent sections.

\begin{definition}
    Given a Mealy machine $\H$ and set of Mealy machines $\mathcal{C}$, a test suite $T \subseteq I^*$ is \emph{complete} for $\H$ w.r.t. $\mathcal{C}$ if for all $\S \in \mathcal{C}$, $\H \sim_{T} \S$ implies $\H \sim \S$.
\end{definition}
In general, there are no test suites that are complete w.r.t. the (infinite) set $\mathcal{C}$ containing all (inequivalent) Mealy machines~\cite{MooreGedanken1956}. In practice, we often use $k$-completeness, where we assume that $\mathcal{C}$ only contains Mealy machines which have at most $k$ states more than the hypothesis.
\begin{definition}
    Let $\S$ and $\H$ be Mealy machines. A test suite $T \subseteq I^*$ is $k$\emph{-complete} for $\H$ if it is complete w.r.t. $\mathcal{C}^k_{\H} = \{ \S \mid |\S| - |\H|\,\leq k \}$.
\end{definition}
Conformance testing techniques often build $k$-complete test suites in a structured manner using a \emph{state cover} and a \emph{characterization set}.  We give a formal description of a classical $k$-completeness technique: the W-method~\cite{ChowTesting1978, VasilevskiiFailure1973}.

\begin{definition}
An \emph{access sequence} for $q \in Q^{\H}$ is a word $w \in I^*$ such that $\delta^{\H}(w)=q$. A language $P \subseteq I^*$ is a \emph{state cover} if $\varepsilon \in P$ and $P$ contains an access sequence for every reachable state, i.e., 
$\Delta^{\H}(P) = \Delta^{\H}(I^*)$. 
\end{definition}

\begin{definition}
	A \emph{characterization set} for a Mealy machine $\H$ is a non-empty language $W \subseteq I^*$ such that $p \sim_W q$ implies $p \sim q$
	for all $p,q \in Q^\H$.
\end{definition}

Let $P$ be a minimal state cover and $W$ a characterization set for $\H$. Then the $W$-method, given $k \in \mathbb{N}$, is given by the test suite $T = P \cdot I^{\leq k + 1} \cdot W$. The state cover $P$ makes sure all states in $\H$ are reached. The role of the set of infixes in $I^{\leq k + 1}$ is to reach states in $\S$. The characterization set $W$ checks if the states reached in $\H$ and $\S$ after reading a word from $P \cdot I^{\leq k + 1}$ match. Other ASI methods differ in the computation of the characterization set and the structure of the test suite but are constructed from the same sets $P$, $I$, and $W$.

In the remainder of this section, we prove that the W-method is $k$-complete~\cite{ChowTesting1978, VasilevskiiFailure1973}.
We recall the proof strategy from~\cite{MoermanThesis2019}, based on reachability and bisimulations up-to $\sim_{L}$, in \ifthenelse{\boolean{arxivversion}}{Appendix~\ref{sec:appA}.}{Appendix~A~of~\cite{Kruger2024SmallTestSuiteArxiv}.} With minimal changes, this proof also works for other ASI methods. Here, we summarize the approach in two main steps, which we reuse in Sec.~\ref{sec:subalphabet} to prove completeness for different test suites under additional conditions. The first step concerns reachability in $\S$. We assume that $\H$ is minimal w.r.t.\ number of states, which is an invariant of active learning algorithms, our intended application. This assumption is only used for $k$ to be correct; alternatively, one can bound the number of states of $\S$ to the sum of $k$ with the number of \emph{inequivalent} states in $\H$.
\begin{lemma} \label{lem:reach-new}
    Let $\H$ and $\S$ be Mealy machines with $|\S| - |\H|\,\leq k$ for some integer $k$, and assume $\H$ is minimal. 
    Moreover, let $P$ be a state cover for $\H$ and $W$ a characterization set for $\H$. Finally, let $L = P \cdot I^{\leq k}$ and
    $T = P \cdot W$ and suppose that $\H \sim_T \S$. Then $L$ is a state cover for $\S$.
\end{lemma}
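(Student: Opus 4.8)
The plan is to show that every reachable state of $\S$ is already reached by some word in $L = P \cdot I^{\leq k}$, i.e.\ $\Delta^\S(P \cdot I^{\leq k}) = \Delta^\S(I^*)$. Writing $n := |\H|$, I would split the argument into two parts: first, that the state cover $P$ alone already reaches at least $n$ distinct states of $\S$; second, that a breadth-first growth of depth at most $k$ suffices to pick up the at most $k$ remaining reachable states.

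\emph{Part 1 (distinguishing via the characterization set).} Since $\H$ is minimal, $P$ reaches all $n$ of its states and these are pairwise inequivalent. The key tool is the Mealy output decomposition $\lambda(q, uv) = \lambda(q,u)\cdot\lambda(\delta(q,u), v)$: applying it in both $\H$ and $\S$ to a word $p w$ with $p \in P$, $w \in W$ and comparing the last $|w|$ output symbols, the assumption $\H \sim_T \S$ (with $T = P\cdot W$) yields $\lambda^\H(\delta^\H(p), w) = \lambda^\S(\delta^\S(p), w)$ for all $p \in P$, $w \in W$; that is, $\delta^\H(p) \sim_W \delta^\S(p)$ across the two machines. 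Now take $p_1, p_2 \in P$ with $\delta^\H(p_1) \neq \delta^\H(p_2)$: by minimality these $\H$-states are inequivalent, so the characterization set provides a $w \in W$ separating them, and transferring through the matched $W$-outputs forces $\delta^\S(p_1) \neq \delta^\S(p_2)$. Hence $p \mapsto \delta^\S(p)$ induces an injection $\Delta^\H(P) \hookrightarrow \Delta^\S(P)$, giving $|\Delta^\S(P)| \geq n$.

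\emph{Part 2 (counting and closure).} Set $R_j := \Delta^\S(P \cdot I^{\leq j})$, so $R_0 = \Delta^\S(P)$ and $R_0 \subseteq R_1 \subseteq \cdots$. Using $P \cdot I^{\leq j+1} = (P\cdot I^{\leq j}) \cup (P \cdot I^{\leq j})\cdot I$ together with composition of $\Delta^\S$, one obtains the recurrence $R_{j+1} = R_j \cup \Delta^\S(R_j, I)$. Consequently, if the chain ever stabilizes, say $R_j = R_{j+1}$, then $\Delta^\S(R_j, I) \subseteq R_j$, so $R_j$ is closed under transitions and contains $q_0^\S$ (as $\varepsilon \in P$); therefore $R_j = \Delta^\S(I^*)$ and the chain is constant from then on. Combining Part 1 with the size hypothesis gives $|R_0| \geq n$ and $|\Delta^\S(I^*)| \leq |\S| \leq n + k$. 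If $R_j = R_{j+1}$ for some $j < k$ we are done by the closure step; otherwise the chain strictly increases at every step up to $k$, forcing $|R_k| \geq n + k \geq |\Delta^\S(I^*)|$, and since $R_k \subseteq \Delta^\S(I^*)$ this is an equality. Either way $R_k = \Delta^\S(I^*)$, i.e.\ $L$ is a state cover for $\S$.

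\emph{Main obstacle.} The counting/closure reasoning of Part 2 is routine. The delicate step is Part 1, namely transferring the distinguishing power of $W$ from $\H$ to $\S$: this hinges on the output decomposition to isolate, inside the common output on $p w \in T$, exactly the $W$-suffix produced from the reached states $\delta^\H(p)$ and $\delta^\S(p)$. This cross-machine matching is what the reference formalizes through bisimulation up-to $\sim_L$, and some care is needed because $W$ need not contain $\varepsilon$ and the two machines have disjoint state sets, so $\delta^\H(p) \sim_W \delta^\S(p)$ must be read as the cross-machine relation of Definition~\ref{def:eq}.
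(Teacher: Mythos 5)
Your proof is correct. Part~1 is essentially the paper's own first step: both arguments use $\H \sim_{P\cdot W} \S$ and the output decomposition to transfer $\delta^{\H}(p)\sim_W\delta^{\S}(p)$ across machines, then use minimality of $\H$ and the characterization set to conclude that inequivalent $\H$-states reached by $P$ go to distinct $\S$-states, so $|\Delta^{\S}(P)|\geq n$. Part~2, however, takes a genuinely different route. The paper first notes that $P\cdot I^{\leq n+k}$ is a state cover for $\S$, then for each reachable $q$ picks a cycle-free access path, splits it at the longest prefix landing in $\Delta^{\S}(P)$, and bounds the remaining suffix by $k$ via a pigeonhole on the at most $k$ states outside $\Delta^{\S}(P)$. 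You instead saturate the monotone chain $R_j=\Delta^{\S}(P\cdot I^{\leq j})$ using the recurrence $R_{j+1}=R_j\cup\Delta^{\S}(R_j,I)$: either the chain stabilizes before step $k$, in which case the stable set is transition-closed and contains $q_0^{\S}$ (here $\varepsilon\in P$ is used), hence equals $\Delta^{\S}(I^*)$; or it grows strictly at each step and the bound $|\S|\leq n+k$ forces $R_k=\Delta^{\S}(I^*)$. Both arguments rest on the same two counting facts ($|\Delta^{\S}(P)|\geq n$ and $|\S|\leq n+k$); yours avoids the explicit path surgery and the case analysis on where a cycle-free path last meets $\Delta^{\S}(P)$, which is arguably the more delicate part of the paper's write-up, at the price of verifying the (routine) recurrence. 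One cosmetic omission: the paper's notion of state cover also requires $\varepsilon\in L$, which you should note follows immediately from $\varepsilon\in P$ and $\varepsilon\in I^{\leq k}$.
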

It is in the above lemma that the assumption $\varepsilon \in P$ is used, to ensure that all states in $\S$ are reached from a state in $\Delta^\S(P)$. 
The second step extends this to actual equivalence of the two Mealy machines.
\begin{lemma} \label{lem:bisim-new}
    Suppose $L \subseteq I^*$ is a state cover for both $\H$ and $\S$. Let $W$ be a characterization set for $\H$, and $T = L \cdot I^{\leq 1} \cdot W$. If $\H \sim_T \S$, then $\H \sim \S$.
\end{lemma}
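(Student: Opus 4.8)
The plan is to exhibit a (functional) bisimulation between the reachable parts of $\S$ and $\H$, with $L$ supplying reachability and $W$ pinning down the states. Concretely, I would start from the relation
\[
R = \{(\delta^\H(u),\, \delta^\S(u)) \mid u \in L\} \subseteq Q^\H \times Q^\S,
\]
and aim to prove $R \subseteq {\sim}$. Since $\varepsilon \in L$ (part of being a state cover), this gives $q_0^\H = \delta^\H(\varepsilon) \sim \delta^\S(\varepsilon) = q_0^\S$, i.e. $\H \sim \S$.

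First I would unpack the hypothesis $\H \sim_T \S$ into local agreements. For any $u \in L$, $i \in I^{\leq 1}$ and $x \in W$ the word $u \cdot i \cdot x$ lies in $T$, and decomposing the two equal output words along the prefixes $u$, $i$, $x$ and comparing the length-$|x|$ suffixes yields $\lambda^\H(\delta^\H(u i), x) = \lambda^\S(\delta^\S(u i), x)$. Taking $i = \varepsilon$ gives $\delta^\H(u) \sim_W \delta^\S(u)$ for every $u \in L$, and more generally $\delta^\H(u i) \sim_W \delta^\S(u i)$ for all $u \in L$, $i \in I$; comparing instead the single output symbol at position $|u|+1$ gives the output agreement $\lambda^\H(\delta^\H(u), i) = \lambda^\S(\delta^\S(u), i)$.

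The key step is to upgrade these $W$-level agreements to full equivalence, and this is where reachability and the characterization property combine. Since $L$ is a state cover for $\S$, every reachable state of $\S$ has the form $\delta^\S(u)$ with $u \in L$, and the agreements above show each such state is $\sim_W$-matched by the $\H$-state $\delta^\H(u)$. Because $W$ is a characterization set for $\H$, any two $\H$-states that are $W$-equivalent are genuinely equivalent, so the $\sim_W$-match is unique up to $\sim$. This lets me define a map $\phi$ from reachable $\S$-states to $\H$-states, well defined up to $\sim$, with $\phi(\delta^\S(u)) \sim \delta^\H(u)$ (it depends only on the $W$-behaviour of the argument, not on $u$). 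I would then check that $\phi$ respects the Mealy structure: for reachable $s = \delta^\S(u)$ and $i \in I$ the output condition $\lambda^\S(s,i) = \lambda^\H(\phi(s),i)$ is the agreement above (using $\phi(s)\sim\delta^\H(u)$), while $\phi(\delta^\S(s,i)) = \phi(\delta^\S(u i)) \sim \delta^\H(u i) \sim \delta^\H(\phi(s), i)$, where the first $\sim$ uses $\delta^\S(u i) \sim_W \delta^\H(u i)$ together with the characterization property, and the second uses that $\sim$ is preserved by transitions.

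Finally I would close with a short induction on $|v|$ showing $\lambda^\S(s,v) = \lambda^\H(\phi(s),v)$ for every reachable $s$ and every $v \in I^*$: the step peels off the first input $i$, applies the output agreement, and then rewrites the tail along $\phi(\delta^\S(s,i)) \sim \delta^\H(\phi(s),i)$, invoking the inductive hypothesis at $\delta^\S(s,i)$. Instantiating at $s = q_0^\S$, where $\phi(q_0^\S) \sim \delta^\H(\varepsilon) = q_0^\H$, yields $\H \sim \S$. I expect the main obstacle to be exactly this upgrade from $\sim_W$ to $\sim$: the test suite only certifies agreement on $W$, and a direct coinduction on $R$ fails because two $\S$-states agreeing on $W$ need not agree on longer words. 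The argument works only because every comparison is routed through $\H$-states via $\phi$, where the characterization set turns $W$-equivalence into genuine equivalence and so permits the arbitrary tail $v$ in the induction.
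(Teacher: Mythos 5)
Your proof is correct and follows essentially the same route as the paper: the same relation $R = \{(\delta^{\H}(u),\delta^{\S}(u)) \mid u \in L\}$, the same extraction of output- and $\sim_W$-agreements from $T = L \cdot I^{\leq 1} \cdot W$, and the same key step of routing the successor comparison through a word $v \in L$ with $\delta^{\S}(v) = \delta^{\S}(ui)$ so that the characterization set upgrades $\sim_W$ to $\sim$ inside $\H$. The only difference is presentational: where the paper packages the conclusion as a bisimulation up-to $\sim$ and invokes the standard fact that $\sim R \sim$ is then a bisimulation, you inline that closure as an explicit map $\phi$ and an induction on word length, which amounts to the same argument.
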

Combining the above two lemmas, we recover $k$-completeness of the $W$-method.
\begin{corollary} \label{cor:Wk}
    The $W$-method is $k$-complete.
\end{corollary}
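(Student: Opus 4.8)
The plan is to obtain $k$-completeness by chaining the two preceding lemmas, once one notices that the $W$-method test suite $T = P \cdot I^{\leq k+1} \cdot W$ factors in exactly the shape required by \Cref{lem:bisim-new}. Fix an arbitrary $\S \in \mathcal{C}^k_{\H}$, so that $|\S| - |\H| \leq k$, and assume $\H \sim_T \S$; the goal is to conclude $\H \sim \S$. Throughout I would use the standing $W$-method assumptions: $P$ is a (minimal) state cover for $\H$, $W$ a characterization set for $\H$, and $\H$ is minimal.

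First I would record the monotonicity of $L$-equivalence: if $L' \subseteq L$ then $\H \sim_L \S$ implies $\H \sim_{L'} \S$, since agreement of outputs on all of $L$ entails agreement on any subset. Because $\varepsilon \in I^{\leq k+1}$, we have $P \cdot W \subseteq T$, so the assumption $\H \sim_T \S$ yields $\H \sim_{P \cdot W} \S$. Now \Cref{lem:reach-new} applies with the smaller test suite $P \cdot W$: its premises are precisely the hypotheses just listed together with $\H \sim_{P \cdot W} \S$, so it follows that $L := P \cdot I^{\leq k}$ is a state cover for $\S$.

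Next I would observe that $L$ is also a state cover for $\H$: since $P \subseteq L \subseteq I^*$ and $P$ is a state cover, monotonicity of $\Delta^{\H}$ gives $\Delta^{\H}(P) \subseteq \Delta^{\H}(L) \subseteq \Delta^{\H}(I^*) = \Delta^{\H}(P)$, hence $\Delta^{\H}(L) = \Delta^{\H}(I^*)$. Thus $L$ is a state cover for both $\H$ and $\S$, so \Cref{lem:bisim-new} is available. The key bookkeeping step is to check that the test suite it demands, namely $L \cdot I^{\leq 1} \cdot W$, coincides with $T$. Using the concatenation identity $I^{\leq k} \cdot I^{\leq 1} = I^{\leq k+1}$ we get $L \cdot I^{\leq 1} \cdot W = P \cdot I^{\leq k} \cdot I^{\leq 1} \cdot W = P \cdot I^{\leq k+1} \cdot W = T$. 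Hence the assumption $\H \sim_T \S$ is exactly the premise $\H \sim_{L \cdot I^{\leq 1} \cdot W} \S$ of \Cref{lem:bisim-new}, which delivers $\H \sim \S$. As $\S \in \mathcal{C}^k_{\H}$ was arbitrary, $T$ is complete w.r.t.\ $\mathcal{C}^k_{\H}$, i.e.\ the $W$-method is $k$-complete.

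The only genuine obstacle is the alignment of the test suites across the two lemmas: one must verify both that the subset $P \cdot W$ triggering \Cref{lem:reach-new} sits inside $T$ (via $\varepsilon \in I^{\leq k+1}$), and that the identity $I^{\leq k} \cdot I^{\leq 1} = I^{\leq k+1}$ lets $T$ itself serve as the premise of \Cref{lem:bisim-new}. Everything else — monotonicity of $\sim_L$ and of $\Delta^{\H}$ — is routine.
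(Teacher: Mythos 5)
Your proposal is correct and follows exactly the paper's intended argument: the paper proves the corollary by "combining the above two lemmas," which is precisely your chain of applying Lemma~\ref{lem:reach-new} (via $P \cdot W \subseteq T$) to obtain the state cover $L = P \cdot I^{\leq k}$ for $\S$, and then Lemma~\ref{lem:bisim-new} using the factorization $T = L \cdot I^{\leq 1} \cdot W$. Your write-up merely makes explicit the bookkeeping (monotonicity of $\sim_L$, $L$ being a state cover for $\H$ as well) that the paper leaves implicit.
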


\section{Complete Test Suites with Subalphabets} \label{sec:subalphabet}

We introduce test suites that are similar to the W-method but have fewer infixes. These test suites are roughly of the form $T = P \cdot I_{sub}^{\leq k + 1} \cdot W$, with different choices for $I_{sub} \subseteq I$. If the subalphabet gets smaller, the test suite size always decreases. If we choose $I$ for $I_{sub}$, we recover the original W-method test suite.

In the following subsections, we provide three new functions, called \emph{experts}, for generating subalphabets. 
These experts are tailored to perform well for certain Mealy machine structures. For each expert, we provide a parameterized family of Mealy machines for which the expert should work well, and we show they are complete under specific assumptions that strengthen those of $k$-completeness. 

The experts can be embedded in any ASI method. For conciseness, we focus on the W-method.
In the definition of expert, the output is a set of subalphabets rather than a single one $I_{sub}$ as described above; this is used in one of the experts.

\begin{definition}
    An \emph{expert} is a function $E$ which takes as arguments a Mealy machine $\H$ and a word $v \in I^*$, and returns a set of subalphabets $I_1,\ldots,I_n$.
\end{definition}
The embedding in the W-method is as follows.
\begin{definition}
\label{def:ets}
    The expert test suite $\ETS$ for $\H$, expert $E$ and $k \in \mathbb{N}$ is:
    \[ \ETS_{E,k}(\H) = \bigcup_{v \in P} ( v \cdot (\bigcup_{I_{sub} \in E(\H,v)} I_{sub}^{\leq k-1}) \cdot I^{\leq 2} \cdot W) \]
    where $P$ is a minimal state cover for $\H$ and $W$ a characterization set.
\end{definition}
Before introducing the new experts we define the trivial expert.
\begin{definition}
    The \emph{trivial expert} $E_{\mathsf{T}}$ is given by $E_{\mathsf{T}}(\H,q) = \{ I^{\H} \}$.
\end{definition}
If $P$ is a minimal state cover and $W$ a characterization set, then $\ETS_{E_{\mathsf{T}},k}(\H)$ is given by $P \cdot I^{\leq k - 1} \cdot I^{\leq 2} \cdot W$, which is precisely the W-method test suite.

\subsection{Active Inputs Expert}

\ifoptionfinal{
\begin{figure}[t]
    \centering
    \begin{adjustbox}{width=\textwidth}
    \begin{tikzpicture}[shorten >=1pt,auto,node distance=2.2cm,main node/.style={circle,draw,font=\sffamily\large\bfseries},
        ]
        \node[initial,state,basis] (a0) {\treeNodeLabel{$x_1$}};
        \node[] (a) [right of=a0] {\ldots};
        \node[state,basis] (aa) [right of=a] {\treeNodeLabel{$x_a$}};
        \node[state,basis] [right of=aa,yshift=0.8cm] (d0c0) {\treeNodeLabel{$y_{0,0}$}};	
        \node[] [right of=d0c0] (d0d) {\ldots};	
        \node[state,basis] [right of=d0d] (d0cc) {\treeNodeLabel{$y_{0,a}$}};	
        \node[state,basis] [right of=aa,yshift=-0.8cm] (ddc0) {\treeNodeLabel{$y_{a,0}$}};	
        \node[] [right of=ddc0] (ddd) {\ldots};	
        \node[state,basis] [right of=ddd] (ddcc) {\treeNodeLabel{$y_{a,a}$}};	
        \node[] [right of=aa,xshift=2.2cm] (d) {$\vdots$};	
        \node[state, basis] [right of=aa, xshift=6.6cm] (q) {\treeNodeLabel{$z$}};
        
        \path[->,>=stealth',every node/.style={font=\sffamily\scriptsize}]
        (a0) edge[above] node {$x_1/x_1$} (a)
        (a) edge[above] node {$x_a/x_a$} (aa)
        (aa) edge[above] node {$x_1/x_1$} (d0c0)
        (aa) edge[below ==] node {$x_a/x_a$} (ddc0)

        (d0c0) edge[above] node {$x_2/x_2$} (d0d)
        (d0d) edge[above] node {$x_{a-1}/x_{a-1}$} (d0cc)
        (ddc0) edge[above] node {$x_1/x_1$} (ddd)
        (ddd) edge[above] node {$x_{a-2}/x_{a-2}$} (ddcc)
        (d0cc) edge[below] node {$x_a/x_a$} (q)
        (ddcc) edge[below] node[xshift=0.3cm] {$x_{a-1}/x_{a-1}$} (q)
        (q) edge[looseness=0.5,bend right=55, below] node {$x_1/x_1$} (a0)
        ;

        
        \draw [decorate,decoration={brace,amplitude=10pt,mirror,raise=4pt},yshift=0pt]
        (0,-0.4) -- (4.5,-0.4) node [black,midway,yshift=-0.9cm] {$a$};

        \draw [decorate,decoration={brace,amplitude=10pt,mirror,raise=4pt},yshift=0pt]
        (4.6,-1.5) -- (13.2,-1.5) node [black,midway,yshift=-0.9cm] {$a$};

        \draw [decorate,decoration={brace,amplitude=10pt,raise=4pt},yshift=0pt]
        (13.5,1.5) -- (13.5,-1.5) node [black,midway,xshift=0.5cm] {$a$};

        \end{tikzpicture}
        \end{adjustbox}
    \caption{$\mathsf{ASML}_{a,b}$ models over inputs and outputs $\{ x_i \mid 1 \leq i \leq a \} \cup \{ y_i \mid 1 \leq i \leq b \}$. Transitions not drawn, including all transitions $y_i$ with $1 \leq i \leq b$, lead to a sink with a unique output.}
    \label{fig:asml}
\end{figure}
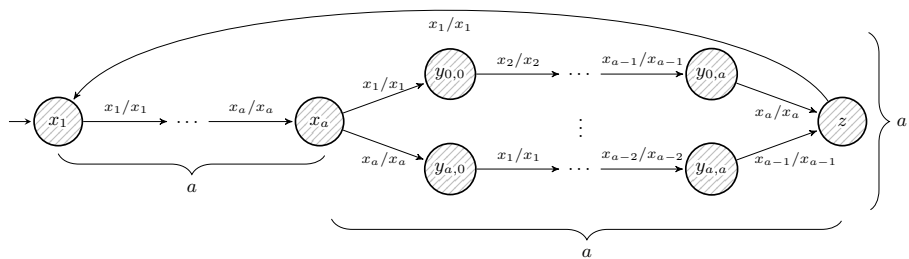


\paragraph{Motivation.} Mealy machines with many inputs are challenging, even when most inputs induce no interesting behavior, i.e., when most inputs transition to sinks. This challenge is exemplified by the \emph{ASML models} which were first described in~\cite{YangASML2019} and partially made available for the 2019 RERS challenge~\cite{JasperRERS2019}. The ASML models represent components of lithography systems used at ASML. These models feature many inputs that often lead to a sink state. Model \emph{m135} in particular has approximately 100 inputs that always transition to the sink state with the same output. The Mealy machines $\mathsf{ASML}_{a,b}$ where $a, b \in \mathbb{N}$, displayed in Fig.~\ref{fig:asml}, closely resemble \emph{m135}. The model starts with a spine, then there is a choice between $a$ branches, and the spine inputs are reused in a different order after the choice. There are $b$ distinct inputs that always lead to a sink.

\myparagraph{The expert.}
The \emph{active inputs expert} addresses Mealy machines where there is a significant set of inputs that always lead to the sink state or self-loop. We define the active version of a Mealy machine and then the active inputs expert.

\begin{definition}
 An input $i\in I$ is active in $q \in Q$, if $\delta(q,i) \notin Q_{sink} \text{ and } \delta(q,i) \neq q$.
    The \emph{active Mealy machine} of $\H = (Q,I,O,q_0,\delta,\lambda)$ is the partial Mealy machine $\activeMM(\H) = (Q \setminus Q_{sink},I',O,q_0,\delta',\lambda')$ such that
    \begin{align*}
       I' &= \Biggl\{ i \in I \mid \exists q \in Q.~ i \text{ active in } q\Biggr\},\\  
        \delta'(q,i) &= 
         \begin{cases}
             \delta(q,i) &  \text{if } i \text{ active in }q,\\
             \diverges & \text{otherwise},\\
         \end{cases}\quad\text{ and }\quad
        \lambda'(q,i) = 
        \begin{cases}
            \lambda(q,i) & \text{if } \delta'(q,i)\downarrow,\\
            \diverges &  \text{otherwise}.\\
        \end{cases}
    \end{align*}
\end{definition}

\begin{definition}
The active inputs expert $E_{\mathsf{AI}}$ is given by $E_{\mathsf{AI}}(\H,p) = \{I^{\activeMM(\H)} \}$.
\end{definition}

\myparagraph{Complexity.} The time complexity of $E_{\mathsf{AI}}$ is $\bigO(nk)$, where $n$ is the number of states and $k$ the number of inputs. This is achieved by first determining the set $Q_{sink}$ in $\bigO(nk)$, and then computing $\delta'$ and $I'$ simultaneously in $\bigO(nk)$.

\myparagraph{Completeness.} Test suite $\ETS_{E_{\mathsf{AI}},k}$ is complete for the set of Mealy machines which 1) have at most $k$ additional states and 2) where all non-sink states can be reached by a word in the state cover followed by at most $k$ active inputs.

\begin{theorem} \label{thm:activeinputs}
    Suppose $\ETS_{E_{\mathsf{AI}},k}(\H)$ uses state cover $P$.
    Let $\mathcal{C} = \{\S \in \mathcal{C}^k_{\H} \mid
    Q^{\S} \setminus Q_{sink}^{\S} \subseteq \Delta^\S(P \cdot  (I^{\activeMM(\H)})^{\leq k})\}
    $.
    Then $\ETS_{E_{\mathsf{AI}},k}(\H)$ is complete for $\mathcal{C}$.
\end{theorem}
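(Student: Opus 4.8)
The plan is to mirror the two-step strategy (\cref{lem:reach-new,lem:bisim-new}) used for the $W$-method, reusing \cref{lem:bisim-new} almost verbatim for the second step while reading the reachability step directly off the defining assumption of $\mathcal C$. Write $I_A = I^{\activeMM(\H)}$, so that, unfolding \cref{def:ets} for $E_{\mathsf{AI}}$, the test suite is $\ETS_{E_{\mathsf{AI}},k}(\H) = P\cdot I_A^{\leq k-1}\cdot I^{\leq 2}\cdot W$. Set $L = P\cdot I_A^{\leq k}$. Since $\varepsilon\in I_A^{\leq k}$ and $P$ is a state cover for $\H$, $L$ is a state cover for $\H$; and the defining condition of $\mathcal C$ states exactly that $Q^{\S}\setminus Q^{\S}_{sink}\subseteq \Delta^{\S}(L)$, i.e.\ $L$ reaches every non-sink state of $\S$. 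Because $I_A^{\leq k}\subseteq I_A^{\leq k-1}\cdot I^{\leq 1}$, we get $L\cdot I^{\leq 1}\cdot W\subseteq P\cdot I_A^{\leq k-1}\cdot I^{\leq 2}\cdot W = \ETS_{E_{\mathsf{AI}},k}(\H)$, so the assumption $\H\sim_{\ETS_{E_{\mathsf{AI}},k}(\H)}\S$ yields $\H\sim_{L\cdot I^{\leq 1}\cdot W}\S$. If $L$ were a full state cover for $\S$, this would be precisely the hypothesis of \cref{lem:bisim-new} and give $\H\sim\S$. The only states that may be missing from $\Delta^{\S}(L)$ are sinks of $\S$.

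So the remaining work is to re-run the bisimulation-up-to-$\sim_W$ argument of \cref{lem:bisim-new} on $R = \{(\delta^{\H}(w),\delta^{\S}(w))\mid w\in L\}$ and to close it over the sink states of $\S$. For a pair $(\delta^{\H}(w),\delta^{\S}(w))$ with $w\in L$ and $i\in I$, output matching follows from $w\cdot i\cdot W\subseteq L\cdot I^{\leq 1}\cdot W\subseteq \ETS_{E_{\mathsf{AI}},k}(\H)$. For the successor condition, if $\delta^{\S}(wi)$ is a non-sink it lies in $\Delta^{\S}(L)$, so there is $w_1\in L$ with $\delta^{\S}(w_1)=\delta^{\S}(wi)$; then $\delta^{\H}(wi)\sim_W\delta^{\S}(wi)\sim_W\delta^{\H}(w_1)$ via tests in $L\cdot I^{\leq 1}\cdot W$, and since $W$ characterizes $\H$ this gives $\delta^{\H}(wi)\sim\delta^{\H}(w_1)$, exactly as in \cref{lem:bisim-new}. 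If instead $\delta^{\S}(wi)$ is a sink $s$, I would add the pair $(\delta^{\H}(wi),s)$ to the relation: since $s$ is absorbing, every successor of $s$ is $s$ again, so no genuinely new $\S$-states are generated, and the bisimulation conditions for this pair reduce to $\delta^{\H}(wi)\sim_W s$ together with the statement that $\delta^{\H}(wi)$ is absorbing up to $\sim_W$, i.e.\ $\delta^{\H}(wij)\sim_W \delta^{\H}(wi)$ for every $j\in I$. Both reduce to output agreement on the words $w\cdot i\cdot W$ and $w\cdot i\cdot j\cdot W$.

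The crux, and the step I expect to be the main obstacle, is the bookkeeping for these sink pairs: the words $w\cdot i\cdot j\cdot v$ needed above lie in $T$ only when $w\cdot i\cdot j\in P\cdot I_A^{\leq k-1}\cdot I^{\leq 2}$, i.e.\ when the predecessor access word $w$ uses at most $k-1$ active inputs, whereas the assumption on $\mathcal C$ permits non-sinks (hence sink predecessors) to require the full $I_A^{\leq k}$. Reconciling this off-by-one is where the definition of $\activeMM(\H)$ must be used: any input $j\notin I_A$ is non-active in every state of $\H$, so in $\H$ it either self-loops or steps to a sink, which pins down the $\H$-side $\delta^{\H}(wi)$ of a sink pair without spending an extra active-input slot, and the absorbing nature of the sink removes the need for a further bisimulation step past it. I would therefore charge the transition into a sink to the $I^{\leq 2}$ suffix rather than to the $I_A^{\leq k-1}$ prefix, treating it separately from the active transitions used to reach the predecessor, so that the characterization suffix $W$ still fits within the available budget. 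Verifying that this accounting is tight — that one $I^{\leq 1}$ of the suffix always suffices both to reach the sink and to then compare via $W$ — is the delicate part that the self-looping (sink) structure is designed to make work.
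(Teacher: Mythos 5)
Your first step (reading off from the assumption that $P \cdot (I^{\activeMM(\H)})^{\leq k}$ reaches all non-sink states, and checking that $P \cdot (I^{\activeMM(\H)})^{\leq k}\cdot I^{\leq 1}\cdot W \subseteq \ETS_{E_{\mathsf{AI}},k}(\H)$) matches the paper. The divergence — and the gap — is in how you handle the sink states of $\S$. You propose to keep $L = P \cdot (I^{\activeMM(\H)})^{\leq k}$ and to patch the bisimulation relation of \cref{lem:bisim-new} by adding pairs $(\delta^{\H}(wi), s)$ for sinks $s$. As you yourself observe, checking the bisimulation conditions for such a pair requires output agreement on words of the form $w\cdot i\cdot j\cdot v$ with $v\in W$; when $w$ already spends $k$ active inputs, these words live in $P\cdot (I^{\activeMM(\H)})^{\leq k-1}\cdot I^{\leq 3}\cdot W$ and need not belong to the test suite. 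Your proposed repair ("charge the transition into the sink to the $I^{\leq 2}$ suffix") is not carried out, and it cannot work as stated: the suffix $I^{\leq 2}$ must absorb both the step into the sink \emph{and} the extra symbol $j$ of the successor check, on top of the $k$ active inputs already consumed, so the accounting genuinely fails by one. This is an acknowledged but unresolved hole, not a routine verification.

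The paper closes this hole differently, and without touching the bisimulation argument at all. It takes the larger access language $L = P \cdot (I^{\activeMM(\H)})^{\leq k-1}\cdot I^{\leq 1}$ (the prefix of $\ETS_{E_{\mathsf{AI}},k}(\H)$ before the final $I^{\leq 1}\cdot W$) and shows that this $L$ is already a state cover for \emph{all} of $\S$, sinks included: every reachable sink has a non-sink predecessor, and a counting argument on the at most $k$ additional states of $\S$ shows that this predecessor is reachable using at most $k-1$ active inputs, so the sink itself is reached by one further arbitrary input from $I^{\leq 1}$. Then \cref{lem:bisim-new} applies verbatim with $T = L\cdot I^{\leq 1}\cdot W \subseteq \ETS_{E_{\mathsf{AI}},k}(\H)$. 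The ingredient your write-up is missing is precisely this: do not try to extend the bisimulation past the sinks; instead put the sinks into $\Delta^{\S}(L)$ up front, which requires the $k{-}1$ bound on the active-input distance to the sink's predecessor rather than the $k$ bound the assumption gives you directly.
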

The proof follows from Lemma~\ref{lem:bisim-new}; the hypotheses make sure that a variant of Lemma~\ref{lem:reach-new} holds.
The above theorem applies in particular, if $\H$ is minimal, for the restriction of $\mathcal{C}^k_{\H}$ to those Mealy machines $\S$ where all non-sink states are reachable by the sub-alphabet generated by $E_{\mathsf{AI}}$.

The active inputs expert performs well on $\mathsf{ASML}_{a,b}$ once the spine is learned because it will not generate infixes with inputs that always lead to the sink state. In the empirical evaluation performed in Sec.~\ref{sec:experiments}, it can be observed that $E_{\mathsf{AI}}$ requires significantly fewer symbols to learn $\mathsf{ASML}_{a,b}$ compared to $E_{\mathsf{T}}$.

\subsection{Future Expert}

\myparagraph{Motivation.} Real-world systems often contain an `initialization phase' where inputs like $start$ or $login$ are used that are not used later in the system.  
Fig.~\ref{fig:tcp} shows the family of Mealy machines $\mathsf{TCP}_{a,b}$ inspired by the TCP models~\cite{FiterauTCP2016}.
The models of TCP clients contain two distinct phases: the three-way handshake and the connected part. After the three-way handshake, some inputs are never active again. $\mathsf{TCP}_{a,b}$ has the same two phases. For the last few hypotheses that arise during learning, all inputs will be active. Therefore, $E_{\mathsf{AI}}$ will generate the same $\ETS$ as $E_{\mathsf{T}}$. $E_{\mathsf{AI}}$ is too coarse here because, at different parts of the system, different sets of inputs are active.


\ifoptionfinal{
\begin{figure}[t]
    \centering
    \begin{adjustbox}{width=\textwidth}
    \begin{tikzpicture}[shorten >=1pt,auto,node distance=2.2cm,main node/.style={circle,draw,font=\sffamily\large\bfseries},
        ]
        \node[initial,state,basis] (a0) {\treeNodeLabel{$y_1$}};
        \node[] (a) [right of=a0] {\ldots};
        \node[state,basis] (aa) [right of=a] {\treeNodeLabel{$y_a$}};
        \node[state,basis] [right of=aa,yshift=0.8cm] (d0c0) {\treeNodeLabel{$x_{0,0}$}};	
        \node[] [right of=d0c0] (d0d) {\ldots};	
        \node[state,basis] [right of=d0d] (d0cc) {\treeNodeLabel{$x_{0,a}$}};	
        \node[state,basis] [right of=aa,yshift=-0.8cm] (ddc0) {\treeNodeLabel{$x_{a,0}$}};	
        \node[] [right of=ddc0] (ddd) {\ldots};	
        \node[state,basis] [right of=ddd] (ddcc) {\treeNodeLabel{$x_{a,a}$}};	
        \node[] [right of=aa,xshift=2cm] (d) {$\vdots$};	
        \node[state, basis] [right of=aa, xshift=6.6cm] (q) {\treeNodeLabel{$z$}};
        
        \path[->,>=stealth',every node/.style={font=\sffamily\scriptsize}]
        (a0) edge[above] node {$y_1/y_1$} (a)
        (a) edge[above] node {$y_b/y_b$} (aa)
        (aa) edge[below] node {$x_1/x_1$} (d0c0)
        (aa) edge[below] node {$x_a/x_a$} (ddc0)

        (d0c0) edge[above] node {$x_2/x_2$} (d0d)
        (d0d) edge[above] node {$x_{a-1}/x_{a-1}$} (d0cc)
        (ddc0) edge[above] node {$x_1/x_1$} (ddd)
        (ddd) edge[above] node {$x_{a-2}/x_{a-2}$} (ddcc)
        (d0cc) edge[below] node {$x_a/x_a$} (q)
        (ddcc) edge[below] node[xshift=0.3cm] {$x_{a-1}/x_{a-1}$} (q)
        (q) edge[looseness=0.7,bend right=50, below] node {$x_0/x_0$} (aa)
        ;

        
        \draw [decorate,decoration={brace,amplitude=10pt,mirror,raise=4pt},yshift=0pt]
        (0,-0.5) -- (4,-0.5) node [black,midway,yshift=-0.9cm] {$b$};

        \draw [decorate,decoration={brace,amplitude=10pt,mirror,raise=4pt},yshift=0pt]
        (4.6,-1.5) -- (13.2,-1.5) node [black,midway,yshift=-0.9cm] {$a$};


        \draw [decorate,decoration={brace,amplitude=10pt,raise=4pt},yshift=0pt]
        (13.6,1.5) -- (13.6,-1.5) node [black,midway,xshift=0.5cm] {$a$};

        \end{tikzpicture}
        \end{adjustbox}
    \caption{$\mathsf{TCP}_{a,b}$ models over inputs and outputs $\{ x_i \mid 1 \leq i \leq a \} \cup \{ y_i \mid 1 \leq i \leq b \}$. Transitions not shown lead to a sink  with a unique output.}
    \label{fig:tcp}
\end{figure}
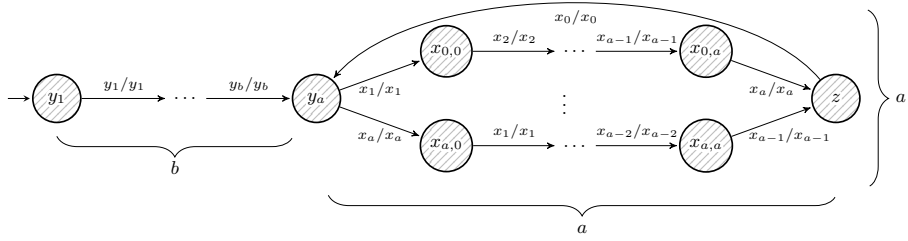


\myparagraph{The expert.}
The \emph{future expert} generates a subalphabet for each state in the hypothesis. This subalphabet contains all inputs that are active from that state onwards, within a given number of steps.
Bounding the number can be useful in large models, and avoids that we end up with the complete alphabet if the Mealy machine is strongly connected. 

\begin{definition}
The future expert $E^l_{\mathsf{F}}$, is given for $l \in \mathbb{N}$ by $E^l_{\mathsf{F}}(\H,v) = \{ I_{v,l} \}$ where $I_{v,l} = \{ i \mid \exists q \in \Delta^{\activeMM(\H)}(v \cdot I^{\leq {l-1}}) \land \delta^{\activeMM(\H)}(q,i)\downarrow \}$.
\end{definition}

\myparagraph{Complexity.} The time complexity $\bigO(n(n + n|I|))$ can be achieved for $E_{\mathsf{F}}$ with a bounded BFS for each state.

\myparagraph{Completeness.} For $E^l_{\mathsf{F}}$, we have the following completeness result.

\begin{theorem} \label{thm:future}
    Suppose $\ETS_{E^l_{\mathsf{F}},k}(\H)$ uses state cover $P$.
    Let $\mathcal{C} = \{\S \in \mathcal{C}^k_{\H} \mid Q^\S \setminus Q^\S_{sink} \subseteq \bigcup_{v\in P} \Delta^\S(v \cdot I_{v, l}^{\leq k})\}$.
    Then $\ETS_{E^l_{\mathsf{F}},k}(\H)$ is complete for $\mathcal{C}$.  
\end{theorem}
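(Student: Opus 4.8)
The plan is to reuse the two–step strategy behind Corollary~\ref{cor:Wk}: first exhibit a set $L$ that is a state cover for both $\H$ and $\S$, and then invoke Lemma~\ref{lem:bisim-new}. Fix $\S\in\mathcal C$ and assume $\H\sim_{\ETS_{E^l_{\mathsf F},k}(\H)}\S$. I would take $L=\bigcup_{v\in P} v\cdot I_{v,l}^{\le k}$, chosen so that $L$ matches the reachability condition defining $\mathcal C$ verbatim. The first routine step is to check that the test suite decomposes in the shape required by Lemma~\ref{lem:bisim-new}: since $I_{v,l}\subseteq I$ we have $v\cdot I_{v,l}^{\le k}\cdot I^{\le 1}\subseteq v\cdot I_{v,l}^{\le k-1}\cdot I^{\le 2}$, hence $L\cdot I^{\le 1}\cdot W\subseteq \ETS_{E^l_{\mathsf F},k}(\H)$. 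As $\sim$ on a test suite only weakens when the suite shrinks, the assumption restricts to $\H\sim_{L\cdot I^{\le 1}\cdot W}\S$, so it remains to prove that $L$ is a state cover for both machines.

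For $\H$ this is immediate: $\varepsilon\in P\subseteq L$, so $\Delta^\H(L)\supseteq\Delta^\H(P)=\Delta^\H(I^*)$. For $\S$ I would prove the analogue of Lemma~\ref{lem:reach-new}, but here reachability of the non-sink states is handed to us by the definition of $\mathcal C$: the defining condition says exactly that every state of $Q^\S\setminus Q^\S_{sink}$ lies in $\bigcup_{v\in P}\Delta^\S(v\cdot I_{v,l}^{\le k})=\Delta^\S(L)$. So all non-sink reachable states of $\S$ are covered, and for them the bisimulation-up-to-$\sim_W$ argument of Lemma~\ref{lem:bisim-new} goes through unchanged: for $(p,q)=(\delta^\H(w),\delta^\S(w))$ with $w\in L$ and input $i$, the output equality comes from the tested word $w\,i\cdot W$, and when $\delta^\S(q,i)$ is itself a non-sink it equals $\delta^\S(w')$ for some $w'\in L$, which closes the transfer condition using that $W$ characterizes the $\H$-states.

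The remaining, and genuinely delicate, case is the reachable \emph{sinks} of $\S$, which the condition on $\mathcal C$ does not constrain. The idea is that a reachable sink is never more than one input beyond a covered non-sink: along a shortest access word to the sink, its last non-sink predecessor is reachable and hence lies in $\Delta^\S(L)$, so the sink itself is reached by a word in $L\cdot I^{\le 1}$ and is therefore tested against $W$. Because sinks self-loop on every input, I would argue that validating the single entering transition together with $W$ already pins down the sink's behaviour, so that equivalence established on the non-sink fragment propagates into the sink via the transition $\delta^\H(w,i)\sim\delta^\S(w,i)$. Concretely, I would run the up-to argument on $R=\{(\delta^\H(w),\delta^\S(w))\mid w\in L\}$ and, in the transfer step, treat a sink successor by this self-looping argument rather than by finding a covering word in $L$.

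I expect the sink case to be the main obstacle, and it is the only place where the restricted alphabet $I_{v,l}$ actually costs something: the lone extra input-budget in $I^{\le 1}$ separating $L$ from the frontier must simultaneously reach the sink and leave room to certify its outputs, which is tight precisely for sinks reachable only via maximal-length active paths. The crux is thus to show that testing the entry transition plus $W$ suffices to identify a self-looping sink, rather than requiring the sink to be covered by $L$; the two-step suffix $I^{\le 2}$ in $\ETS$ (in contrast to the single frontier step needed for mere state identification) is what buys the room for this. Once this is settled, Lemma~\ref{lem:bisim-new} delivers $\H\sim\S$, establishing completeness for $\mathcal C$.
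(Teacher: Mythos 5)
Your overall plan (exhibit an $L$ that is a state cover for both machines, then invoke Lemma~\ref{lem:bisim-new}) is the paper's strategy, and your treatment of $\H$ and of the non-sink states of $\S$ is correct. The genuine gap is the sink case, which you rightly flag as the crux but do not resolve. By choosing $L=\bigcup_{v\in P}v\cdot I_{v,l}^{\leq k}$ you make $L\cdot I^{\leq 1}\cdot W$ a \emph{proper} subset of $\ETS_{E^l_{\mathsf{F}},k}(\H)$: it omits, in particular, the tests $v\cdot u\cdot ab\cdot W$ with $u\in I_{v,l}^{\leq k-1}$, $a\notin I_{v,l}$ and $b\in I$, i.e.\ those with two consecutive arbitrary inputs after the restricted infix. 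A sink $q=\delta^\S(wi)$ with $w\in L$ that does not lie in $\Delta^\S(L)$ is then exercised only by the tests $w\cdot i\cdot W$, which yield $\delta^\H(wi)\sim_W q$ and nothing more. Your claim that ``validating the single entering transition together with $W$ already pins down the sink's behaviour'' is not justified and is false in general: $\sim_W$-agreement plus the fact that $q$ self-loops does not determine $\lambda^\S(q,j)$ for an input $j$ that is not the first letter of any word in $W$, so $\delta^\H(wi)\sim q$ does not follow. Consequently the pair $(\delta^\H(wi),q)$ can neither be closed under the transfer condition of the bisimulation up-to $\sim$ (no pair of $R$ has its $\S$-component equivalent to $q$) nor be added to $R$, since verifying output agreement for that pair on all of $I$ would require exactly the tests $w\cdot i\cdot j\cdot W$ that your decomposition discards.

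The paper avoids this by taking $L=\bigcup_{v\in P}v\cdot I_{v,l}^{\leq k-1}\cdot I^{\leq 1}$, for which $L\cdot I^{\leq 1}\cdot W$ is the whole expert test suite, and by showing that this $L$ already covers the sinks: the last non-sink state on a shortest access path to a reachable sink is a non-sink reachable within $k-1$ steps (the counting argument behind Lemma~\ref{lem:reach-new}, combined with the reachability assumption defining $\mathcal{C}$), so the sink itself is reached by one further \emph{arbitrary} input, which the trailing $I^{\leq 1}$ in $L$ supplies. With $L$ a full state cover for $\S$, Lemma~\ref{lem:bisim-new} applies verbatim and no special sink case arises in the bisimulation argument. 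To repair your proof you must either fold the extra arbitrary step into $L$ as the paper does, or genuinely exploit the second step of the $I^{\leq 2}$ suffix; the decomposition $L\cdot I^{\leq 1}\cdot W$ with your $L$ does neither.
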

$E_{\mathsf{F}}$ performs well on $\mathsf{TCP}_{a,b}$ once the spine is learned because the subalphabet for states after $y_a$ does not contain $y$-symbols, contrary to subalphabet from $E_{\mathsf{T}}$. Sec.~\ref{sec:experiments} shows that $E^l_{\mathsf{F}}$ often outperforms the trivial expert $E_{\mathsf{T}}$.

\subsection{Components Expert}

\myparagraph{Motivation.}
In some systems, sets of inputs are often used together. For example, after entering a username you often enter a password as well. It is possible that the set of inputs that are used together occur at multiple places in the system. 
Fig.~\ref{fig:ssh} shows Mealy machines $\mathsf{SSH}_{a,b}$, loosely inspired by OpenSSH~\cite{FiterauSSH2017}. The OpenSSH model contains three phases: the key exchange, the authentication, and then the connection phase where re-keying is possible. For the family of Mealy machines $\mathsf{SSH}_{a,b}$, we assume there is a fixed set of possible keys and the key exchange and re-keying uses the same key-specific inputs, i.e., the inputs  for the key exchange of key $k$ are the relevant inputs  for re-keying with key $k$.

\myparagraph{The expert.}
The component expert generates subalphabets based on sets of states and is defined as follows.\footnote{$E^l_{\mathsf{F}}$ can be seen as a refined form of the $E^g_{\mathsf{C}}$ which returns the subalphabets $I_X$ with $X$ consisting of the states reachable in at most $l$ steps from state $p$.}

\ifoptionfinal{
\begin{figure}[t]
    \centering
    \begin{adjustbox}{width=\textwidth}
    \begin{tikzpicture}[shorten >=1pt,auto,node distance=1.6cm,main node/.style={circle,draw,font=\sffamily\large\bfseries},
        ]
        \node[initial,state,basis] (a0) {\treeNodeLabel{$x_0$}};
        
        \node[state,basis] [right of=a0,yshift=1cm] (a11) {\treeNodeLabel{$x_{1,1}$}};
        \node[state,basis] [right of=a11] (a12) {\treeNodeLabel{$x_{1,2}$}};
        \node[state,basis] [right of=a12] (a13) {\treeNodeLabel{$x_{1,3}$}};
        \node[state,basis] [right of=a13,yshift=-1cm] (a4) {\treeNodeLabel{$x_4$}};

        \node[state,basis] [right of=a0,yshift=-1cm] (aa1) {\treeNodeLabel{$x_{a,1}$}};
        \node[state,basis] [right of=aa1] (aa2) {\treeNodeLabel{$x_{1,2}$}};
        \node[state,basis] [right of=aa2] (aa3) {\treeNodeLabel{$x_{1,3}$}};

        \node[state,basis] [right of=a4] (bs) {\treeNodeLabel{$y$}};	
        \node[state,basis] [below left of=bs,yshift=-0.5cm] (b0) {\treeNodeLabel{$y_1$}};	
        \node[] [below of=bs,yshift=-0.5cm] (b) {\ldots};	
        \node[state,basis] [below right of=bs,yshift=-0.5cm] (bb) {\treeNodeLabel{$y_b$}};

        \node[state,basis] [right of=bs] (c0) {\treeNodeLabel{$z_0$}};
        
        \node[state,basis] [right of=c0,yshift=1cm] (c11) {\treeNodeLabel{$z_{1,1}$}};
        \node[state,basis] [right of=c11] (c12) {\treeNodeLabel{$z_{1,2}$}};
        \node[state,basis] [right of=c12] (c13) {\treeNodeLabel{$z_{1,3}$}};
        \node[state,basis] [right of=c13,yshift=-1cm] (c4) {\treeNodeLabel{$z_4$}};

        \node[state,basis] [right of=c0,yshift=-1cm] (ca1) {\treeNodeLabel{$z_{a,1}$}};
        \node[state,basis] [right of=ca1] (ca2) {\treeNodeLabel{$z_{1,2}$}};
        \node[state,basis] [right of=ca2] (ca3) {\treeNodeLabel{$z_{1,3}$}};

        \node[] [right of=a0,xshift=1.6cm] (d) {$\vdots$};	
        \node[] [right of=c0,xshift=1.6cm] (d) {$\vdots$};	
        
        \path[->,>=stealth',every node/.style={font=\sffamily\scriptsize}]

        (a0) edge[above] node[xshift=-0.2cm,yshift=0.2cm] {$x_{1,1}/x_1$} (a11)
        (a0) edge[above] node[xshift=-0.2cm,yshift=-0.6cm] {$x_{a,1}/x_{a}$} (aa1)
        (a11) edge[bend left] node {$x_{1,2}/x_1$} (a12)
        (aa1) edge[bend left] node {$x_{a,2}/x_{a}$} (aa2)
        (a12) edge[bend left] node {$x_{1,1}/x_1$} (a13)
        (aa2) edge[bend left] node {$x_{a,1}/x_{a}$} (aa3)
        (a13) edge[above] node[xshift=0.4cm] {$x_{1,2}/x_1$} (a4)
        (aa3) edge[above] node[xshift=-0.4cm] {$x_{a,2}/x_{a}$} (a4)
        (a12) edge[bend left] node {$x_{1,2}/x_1$} (a11)
        (aa2) edge[bend left] node {$x_{a,2}/x_{a}$} (aa1)
        (a13) edge[bend left] node {$x_{1,1}/x_1$} (a12)
        (aa3) edge[bend left] node {$x_{a,1}/x_{a}$} (aa2)

        (c0) edge[above] node[xshift=-0.2cm,yshift=0.2cm] {$x_{1,1}/x_1$} (c11)
        (c0) edge[above] node[xshift=0.2cm,yshift=0.2cm] {$x_{a,1}/x_{a}$} (ca1)
        (c11) edge[bend left] node {$x_{1,2}/x_1$} (c12)
        (ca1) edge[bend left] node {$x_{a,2}/x_{a}$} (ca2)
        (c12) edge[bend left] node {$x_{1,1}/x_1$} (c13)
        (ca2) edge[bend left] node {$x_{a,1}/x_{a}$} (ca3)
        (c13) edge[above] node[xshift=0.4cm] {$x_{1,2}/x_1$} (c4)
        (ca3) edge[above] node[xshift=-0.4cm] {$x_{a,2}/x_{a}$} (c4)
        (c12) edge[bend left] node {$x_{1,2}/x_1$} (c11)
        (ca2) edge[bend left] node {$x_{a,2}/x_{a}$} (ca1)
        (c13) edge[bend left] node {$x_{1,1}/x_1$} (c12)
        (ca3) edge[bend left] node {$x_{a,1}/x_{a}$} (ca2)

        (bs) edge[] node[left] {$y_1/y_{fail}$} (b0)
        (b0) edge[loop below] node[] {$y_1/y_{fail}$} (b0)
        (bs) edge[] node[right] {$y_b/y_{fail}$} (bb)
        (bb) edge[loop below] node[] {$y_b/y_{fail}$} (bb)
        
        (a4) edge[above] node {$y/y$} (bs)
        (bs) edge[above] node {$y/y$} (c0)
        ;

        

        \draw [decorate,decoration={brace,amplitude=10pt,raise=4pt},yshift=0pt]
        (16.5,2) -- (16.5,-2) node [black,midway,xshift=0.5cm] {$a$};

        \draw [decorate,decoration={brace,amplitude=10pt,mirror,raise=4pt},yshift=0pt]
        (6,-2.8) -- (10,-2.8) node [black,midway,yshift=-1cm] {$b$};


        \end{tikzpicture}
        \end{adjustbox}
    \caption{$\mathsf{SSH}_{a,b}$ models over inputs $\{ x_{i,j} \mid 1 \leq i \leq a, j = 1, 2 \} \cup \{ y_i \mid 1 \leq i \leq b \} \cup \{y\}$ and the outputs
    $\{ x_i \mid 1 \leq i \leq a \} \cup \{y, y_{fail}\}$. Transitions not shown lead to a sink with a unique output.}
    \label{fig:ssh}
\end{figure}
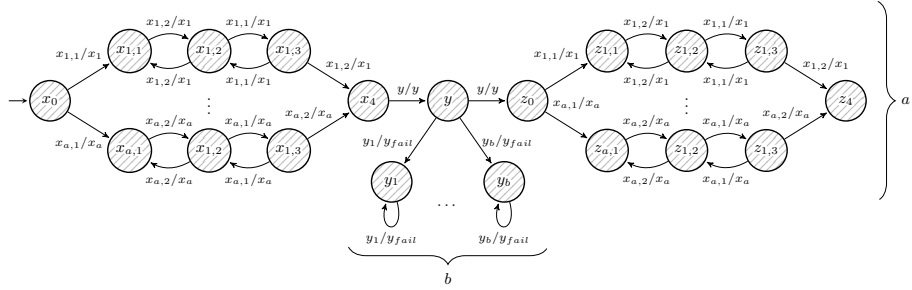


\begin{definition}
    Let $g$ be a function that takes a Mealy machine $\H$ and returns a set of subsets of $Q$, referred to below as \emph{components}. The \emph{component expert} $E^g_{\mathsf{C}}$ with parameter $g$ is defined s.t. $E_C^g(\H, p) = \{ I_X \mid X \in g(\H) \}$ where $I_X = \{ i \mid \exists q,q' \in X. \delta(q,i)=q' \}$.
\end{definition}

\ifoptionfinal{
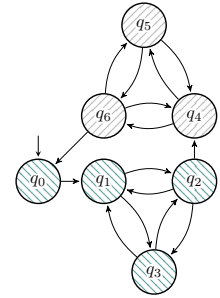
\begin{wrapfigure}{r}{0.22\textwidth}
    \vspace{-0.1cm}
    \begin{center}
    \begin{adjustbox}{width=0.22\textwidth}
    \begin{tikzpicture}[shorten >=1pt,auto,node distance=1.8cm,main node/.style={circle,draw,font=\sffamily\large\bfseries},
        ]
        \node[initial above,state,newbasis] (q0) {\treeNodeLabel{$q_0$}};
        \node[state,newbasis] (q1) [right of=q0,xshift=-0.5cm] {\treeNodeLabel{$q_1$}};
        \node[state,newbasis] (q2) [right of=q1] {\treeNodeLabel{$q_2$}};
        \node[state,newbasis] (q3) [below of=q1, xshift=1cm] {\treeNodeLabel{$q_3$}};
        \node[state,basis] (q4) [above of=q2,yshift=-0.5cm] {\treeNodeLabel{$q_4$}};
        \node[state,basis] (q5) [above of=q4, xshift=-1cm] {\treeNodeLabel{$q_5$}};
        \node[state,basis] (q6) [left of=q4] {\treeNodeLabel{$q_6$}};
        
        \path[->,>=stealth',every node/.style={font=\sffamily\scriptsize}]
        (q0) edge node[above]  {} (q1)
        (q1) edge[bend left=20] node[above] {} (q2)
        (q1) edge[bend left=20] node[above] {} (q3)
        (q2) edge[bend left=20] node[above] {} (q1)
        (q2) edge[bend left=20] node[above] {} (q3)
        (q2) edge[] node[above] {} (q4)
        (q3) edge[bend left=20] node[above] {} (q1)
        (q3) edge[bend left=20] node[above] {} (q2)
        
        (q4) edge[bend left=15] node[above] {} (q5)
        (q4) edge[bend left=20] node[above] {} (q6)
        (q5) edge[bend left=20] node[above] {} (q4)
        (q5) edge[bend left=20] node[above] {} (q6)
        (q6) edge[bend left=20] node[above] {} (q4)
        (q6) edge[bend left=20] node[above] {} (q5)
        (q6) edge[] node[above] {} (q0)
        ;

        \end{tikzpicture}
        \end{adjustbox}
    \caption{Example with colored communities.}
    \label{fig:community}
    \end{center}
    \vspace{-0.8cm}
\end{wrapfigure}} 

\paragraph{Finding components.}
Finding a suitable subroutine $g$ to determine components from a hypothesis is a non-trivial task. One relatively easy method for finding components is to compute the strongly connected components (SCCs). However, if the system can be reset at any state, then the complete model is an SCC and the components expert reduces to the trivial expert. Therefore, SCCs are often too strict. Another possibility is to utilize algorithms used in graph theory to decompose graphs into subgraphs.
We propose to use Newman's algorithm for detecting community structure~\cite{NewmanCommunity2004} to identify components. The algorithm outputs sets of states with high transition density between states within the group. It starts with singleton communities and then greedily joins communities based on the maximal change in modularity, as long as it is positive. The modularity value $\text{mod}(c)$ for component $c$ is:
\[ \text{mod}(c) = \frac{\#\text{edges staying in c}}{\#\text{edges}} - \frac{\#\text{outgoing edges of c} \cdot \#\text{incoming edges of c}}{\#\text{edges}^2} \]

\begin{example} We illustrate Newman's algorithm on  Fig.~\ref{fig:community}.
Initially, $\text{mod}(\{q_1\}) = 0 - \frac{2\cdot 3}{15^2} \approx -0.027, \text{mod}(\{q_3\}) = 0 - \frac{2\cdot 2}{15^2} \approx -0.018$. The difference between the initial modularity and the modularity of $\{q_1,q_3\}$ ($2 - \frac{4\cdot 5}{15^2} = 0.0444$) is 
the highest possible change in modularity. We thus merge communities $\{ q_1 \}$ and $\{ q_3 \}$. Likewise, we then merge $\{q_1, q_3\}$ and $\{q_2\}$. After several steps we get to the final communities $\{q_0, q_1, q_2, q_3\}$ and $\{q_4,q_5,q_6\}$. 
\end{example}

To apply Newman's algorithm, the subroutine $g$ transforms $\activeMM(\H)$ to a directed graph $G = (Q,E)$ where $E = \{ (q,q') \mid q,q' \in Q \land \exists i \in I. \delta(q,i)=q' \}$ and then applies Newman's algorithm on $G$. 

\myparagraph{Complexity.} The time complexity of $E_C^g$ is $\bigO(g + nk)$ where $g$ is the complexity of the subroutine. The subterm $\bigO(nk)$ originates from the active transformation. With Newman's algorithm, the total complexity is in  $\bigO(n(n + n|I|))$ \cite{NewmanCommunity2004}.

\paragraph{Completeness.}
$\ETS_{E^{g}_{\mathsf{C}},k}$ is $k$-complete if all non-sink states in the SUL can be reached from a state $p$ in the hypothesis with at most $k$ inputs from some $I_X$. 

\begin{theorem} \label{thm:components}
    Suppose $\ETS_{E_C^g,k}(\H)$ uses state cover $P$. 
    Let $\mathcal{C} = \{\S \in \mathcal{C}^k_{\H} \mid
    Q^{\S} \setminus Q_{sink}^{\S} \subseteq \bigcup_{X \in g(\H)} \Delta^\S(P \cdot I_X^{\leq k})\}$.
    Then
    $\ETS_{E_C^g,k}(\H)$ is complete for $\mathcal{C}$.
\end{theorem}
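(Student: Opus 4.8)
The plan is to run the same two-phase argument that proves $k$-completeness of the $W$-method (Lemma~\ref{lem:reach-new} for reachability, then Lemma~\ref{lem:bisim-new} for the bisimulation), but to let the membership condition defining $\mathcal{C}$ take over the role of the reachability phase. Write $T = \ETS_{E_C^g,k}(\H)$, fix $\S \in \mathcal{C}$ and assume $\H \sim_T \S$; the goal is $\H \sim \S$. Unfolding Definitions~\ref{def:ets} and the component expert gives $T = \bigcup_{X \in g(\H)} P \cdot I_X^{\leq k-1} \cdot I^{\leq 2} \cdot W$. The observation that drives everything is the inclusion
\[
  T \;\supseteq\; L \cdot I^{\leq 1} \cdot W, \qquad L \;=\; \bigcup_{X \in g(\H)} P \cdot I_X^{\leq k},
\]
which holds because any infix $u \in I_X^{\leq k}$ followed by one free symbol can be re-read as $I_X^{\leq k-1}$ (dropping the last restricted symbol) followed by two free symbols, so $L\cdot I^{\leq 1} \subseteq \bigcup_X P\cdot I_X^{\leq k-1}\cdot I^{\leq 2}$. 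Hence if $L$ is a state cover for both $\H$ and $\S$, then Lemma~\ref{lem:bisim-new} applied to this $L$ and $W$ upgrades $\H \sim_T \S$ (which entails $\H \sim_{L\cdot I^{\leq 1}\cdot W} \S$) to $\H \sim \S$, completing the proof.

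That $L$ is a state cover for $\H$ is immediate: $\varepsilon \in P \subseteq L$, and $P$ already reaches every reachable state of $\H$, so $\Delta^\H(L) \supseteq \Delta^\H(P) = \Delta^\H(I^*)$. The entire content of the theorem is therefore to establish that $L$ is a state cover for $\S$ as well; this is the \emph{variant of Lemma~\ref{lem:reach-new}} alluded to for the earlier experts, now specialised to the subalphabets $I_X$.

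For the non-sink states this is read off directly from the definition of $\mathcal{C}$. That definition gives $Q^\S \setminus Q^{\S}_{sink} \subseteq \bigcup_{X} \Delta^\S(P \cdot I_X^{\leq k})$, and since $\bigcup_X P\cdot I_X^{\leq k}$ is exactly $L$, every reachable non-sink state of $\S$ already lies in $\Delta^\S(L)$. No counting is needed here: the hypothesis on $\S$ is precisely the restricted reachability statement that the classical argument would otherwise have to derive from $\H\sim_{P\cdot W}\S$, minimality, and $|\S|-|\H|\le k$.

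The hard part, and the step I expect to be the main obstacle, is covering the \emph{sink} states of $\S$, which the condition on $\mathcal{C}$ deliberately omits. For a reachable sink $s \neq q_0^\S$, pick a shortest access path; its immediate predecessor $q$ must be a non-sink (a sink predecessor would self-loop and never reach $s$), so $q = \delta^\S(w)$ for some $w \in L$ by the previous paragraph, and $s = \delta^\S(w\,i)$ for the connecting input $i$. When $i \in I_X$ and the restricted depth of $w$ is still below $k$ we get $w\,i \in L$ and $s \in \Delta^\S(L)$ outright; the delicate residual case is a sink whose only covered predecessor already sits at restricted depth $k$ (or is reached through an input outside every $I_X$), so that $w\,i$ escapes $L$ and only lands in $\Delta^\S(L\cdot I^{\leq 1})$. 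Here the one extra free symbol in the $I^{\leq 2}$ factor of $T$ — one more than the $I^{\leq 1}$ consumed by Lemma~\ref{lem:bisim-new} — is exactly what guarantees that $w\,i\cdot W \subseteq T$, so that the hypothesis still forces $\delta^\H(w\,i) \sim_W s$. To close the argument one must then convert this $W$-agreement at a sink into the state-cover requirement of Lemma~\ref{lem:bisim-new}, either by re-running its reachability count over the subalphabet (at most $k$ states lie outside $\Delta^\S(P)$, bounding the depth at which genuinely new sinks can appear and forcing them to be reached earlier) or by exploiting the structural property of the target models that each sink carries a unique output, so that $W$-equivalence to $s$ already entails full equivalence and the pair $(\delta^\H(w\,i),s)$ can be absorbed into the bisimulation. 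Once $L$ is confirmed to be a state cover for $\S$, Lemma~\ref{lem:bisim-new} yields $\H \sim \S$.
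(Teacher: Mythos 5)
Your skeleton is the paper's: rewrite $\ETS_{E_C^g,k}(\H)$ as $L\cdot I^{\leq 1}\cdot W$ for a suitable prefix language $L$, show $L$ is a state cover for $\S$, and invoke Lemma~\ref{lem:bisim-new}; the non-sink states are indeed covered verbatim by the membership condition defining $\mathcal{C}$. The problem is that you stop exactly where work remains. For the sink states you correctly isolate the residual case --- a sink whose covered non-sink predecessors all sit at restricted depth $k$, or are left through an input outside every $I_X$ --- but you then only list two candidate ways out without executing either, and only one of them is viable. The ``unique output per sink'' route is not available: that is a feature of the benchmark families in Figs.~\ref{fig:asml}, \ref{fig:tcp} and~\ref{fig:ssh}, not a hypothesis of the theorem, and $\mathcal{C}$ contains machines without it; moreover, knowing $\delta^\H(wi)\sim_W \delta^\S(wi)$ does not discharge the state-cover hypothesis of Lemma~\ref{lem:bisim-new}, which is a precondition on $L$ as a whole, so ``absorbing the pair into the bisimulation'' would amount to reproving a modified version of that lemma rather than applying it.

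The viable route --- and the one the paper actually takes --- is your first option, the counting argument, run over the \emph{restricted} alphabet: write a non-sink predecessor $p$ of the sink $s$ as $p=\delta^\S(vu)$ with $v\in P$, $u\in I_X^{\leq k}$ and $u$ of minimal length; the induced path after $\delta^\S(v)$ is then cycle-free, avoids $\Delta^\S(P)$, and consists of non-sinks (a sink on the path would trap it), so together with $s$ it exhibits $|u|+1$ distinct states outside $\Delta^\S(P)$; since $\H\sim_{P\cdot W}\S$ (note $P\cdot W\subseteq T$) and minimality of $\H$ give $|\Delta^\S(P)|\geq|\H|$, this forces $|u|\leq k-1$, whence $s=\delta^\S(vui)\in\Delta^\S\bigl(P\cdot I_X^{\leq k-1}\cdot I\bigr)$. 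Note this also requires taking $L=\bigcup_X P\cdot I_X^{\leq k-1}\cdot I^{\leq 1}$ (as the paper does, still satisfying $L\cdot I^{\leq 1}\cdot W\subseteq T$) rather than your $\bigcup_X P\cdot I_X^{\leq k}$, since the final input $i$ need not lie in $I_X$. Without some version of this step your $L$ is not shown to be a state cover for $\S$ and Lemma~\ref{lem:bisim-new} cannot be applied, so as written the proposal has a genuine gap at precisely the point you flag as the main obstacle.
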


$E^{\text{Newman}}_{\mathsf{C}}$ performs well on $\mathsf{SSH}_{a,b}$ once the key exchange and authentication phase have been learned because the subalphabet mostly contains symbols that belong together and allows discovery of a whole new key exchange component. Ideally, $\{x_{i,1}, x_{i,2}, x_{i,3}\}$, $\{z_{i_1}, z_{i_2}, z_{i,3}\}$ for $1 \leq i \leq a$ and $\{y, y_0, ..., y_b\}$ form components for $\mathsf{SSH}_{a,b}$. In our experiments, Newman's algorithm sometimes finds slightly bigger components. 
\section{Test Case Prioritization} \label{sec:subalphabetselection}
To establish equivalence, \emph{all} tests in a complete test suite need to be executed and their order is then irrelevant. However, to find a counterexample, we only need to execute tests until we hit that counterexample. This means that different orderings lead to significant performance changes~\cite{AslamThesis2021}. In this section, we first describe the state-of-the-art in (ordered) test suites. We then create new, ordered test suites, that combine the $\ETS$'s from Sec.~\ref{sec:subalphabet} adaptively.

\subsection{Randomised Test Suites} \label{sec:5.1}
Test suites are often stored in a tree-like data structure. The straightforward ordering iterates over this tree to process the test cases deterministically.
However, a variety of deterministic orderings for $P$, $I$, $W$ are all (on average) outperformed by randomised methods that do a better job in diversification~\cite[Ch.\ 4]{AslamThesis2021}.
 State-of-the-art randomised test suite generation methods are described in~\cite{SmeenkPrinter2015,MoermanThesis2019} and make use of a geometric distribution to determine the length of the infix. We present a simpler\footnote{The randomised Hybrid-ADS method in~\cite{SmeenkPrinter2015} first exhausts $P \cdot I^2 \cdot W$ and then generates test cases from $P \cdot I^2 \cdot I^* \cdot W$ where the length of the infix is described by a geometric distribution.} and more generic variation: 
 Given an expert $e$ and a distribution $\mu$ over natural numbers, the \emph{randomised $\ETS$} $S_{e,\mu}$ is a distribution over words $v \cdot i\cdot w \in P \cdot I^* \cdot W$  such that:
\begin{equation}\label{eq:ets_distribution}
 S_{e,\mu}(v \cdot i \cdot w) = \frac{\mu(l)}{|\ETS_{e,l}|} \qquad \text{for } |i|=l
\end{equation}
Informally, \eqref{eq:ets_distribution} indicates that the probability of sampling a test case with infix length $l$ is the probability of sampling infix length $l$ from distribution $\mu$ and then uniformly sampling a test case from $\ETS_{e,l}$. 

For any $\mu$ with infinite support, the generated test suite is infinite. Thus, randomised ASI methods are test case prioritizations over infinite test suites $P \cdot I^* \cdot W$. Still, randomised ASI methods often find counterexamples faster than $k$-complete ASI methods \cite{AichernigBenchmarking2020, GarhewalExperimental2023}. 
To ensure $k$-completeness in randomised ASI methods, we need extra bookkeeping to determine whether the right tests have been executed and we can only guarantee that we execute these tests in the limit.

\subsection{Multi-Armed Bandits}
We want to use all experts from Sec.\ref{sec:subalphabet} to generate test cases. A naive solution is to determine a static distribution that describes how often an expert should be selected for generating a test case. However, it is unclear how such a distribution should be determined. Instead, we use so-called multi-armed bandits to dynamically update the distribution over available experts using information from previous testing rounds. We refer to this algorithm as the Mixture of Experts. The multi-armed bandits problem was first described by Robbins~\cite{RobbinsMAB1952} and is a classic reinforcement learning problem. We instantiate the $\mathsf{EXP3}$ algorithm for adversarial multi-armed bandits~\cite{AuerMAB2002}. 
Intuitively, our instantiation prioritizes test cases by better performing experts. We embed the Mixture of Expert algorithm in the MAT framework from Fig.~\ref{fig:overview} and list the pseudocode in Algorithm~\ref{alg:mab}.

\begin{algorithm}[t]
\caption{Instantiated $\mathsf{EXP3}$ Algorithm for Test Case Generation } \label{alg:mab}
\begin{algorithmic}[1]
\Procedure{MAB\_EQ}{$\mathcal{H}$, $\weights$}
\While{true}
\State $\probs \xleftarrow[]{}$ \textsc{UpdateProbs}($\weights$)  \Comment{Eq. \ref{eq:updateprobs}}
\State $e \xleftarrow{} $ sample expert proportional to $\probs$
\State $\sigma \xleftarrow{}$ sample next test case from $S_{{e,\mu}}$ \Comment{Eq. \ref{eq:ets_distribution}}
\State $v \xleftarrow[]{} \lambda^{\mathcal{H}}(q_0^{\mathcal{H}},\sigma)$
\State $v' \xleftarrow[]{}$ \textsc{OutputQuery}($\sigma$)
\State $\weights \xleftarrow[]{}$ \textsc{UpdateWeights}($\probs, \weights, e, v \neq v'$) \Comment{Eq. \ref{eq:updateweights}}
\State \textbf{if} {$v \neq v'$} \textbf{return} Some($\sigma$), $\weights$
\EndWhile
\EndProcedure
\end{algorithmic}
\end{algorithm}

Algorithm~\ref{alg:mab} is used with randomised ASI-methods. The algorithm is called with a hypothesis $\H$ and $\weights$. The parameter $\weights$ indicates how good an expert is and is initialized to 1 for each enabled expert. The algorithm uses the set of enabled experts $E$, constant $k$, distribution $\mu$, and exploration parameter $\gamma$ as global parameters. The exploration parameter determines how often we choose an expert at random. 
In Algorithm~\ref{alg:mab}, each iteration of the loop represents the generation of one test case. In each iteration, we first update the distribution $\probs$ for each expert $i \in E$ using Eq.~\ref{eq:updateprobs}.
\begin{equation}\label{eq:updateprobs}
    \probs(i) \gets (1-\gamma) \cdot \frac{\weights(i)}{\Sigma_{j \in E} \weights(j)} + \frac{\gamma}{|E|} 
\end{equation} 
Next, we sample an expert from $\probs$ and sample a test case from $S_{e,\mu}$. We determine the output of the test case on $\H$ and $\S$ and update the $\weights$ for chosen expert $e$ using Eq.~\ref{eq:updateweights} if $v\neq v'$, otherwise the $\weights$ remain the same.
\begin{equation}\label{eq:updateweights}
\weights(e) \gets 
    \weights(e) \cdot \mathit{exp}\left( {\frac{\gamma}{\probs(i) \cdot |E|}}\right)
\end{equation}
If $v \neq v'$, then we have found a counterexample and the $\weights$ value for the chosen expert significantly increases. Consequently, the expert is more likely to be chosen to generate test cases in the next rounds. Finally, if $v \neq v'$, we return the counterexample. Otherwise, generate a new test case.

\section{Experimental Evaluation} \label{sec:experiments}
In this section, we empirically investigate the performance of our implementation of~\Cref{alg:mab} in comparison with a state-of-the-art baseline. The source code and all benchmarks are available online\footnote{\url{https://gitlab.science.ru.nl/sws/lsharp/-/tree/testingstrategy}}~\cite{KrugerZenodo}. We investigate the performance on four benchmark sets with varying complexities in the first three experiments:

\begin{description}[beginpenalty=99,topsep=1pt]
    \item[RQ1:] How does \Cref{alg:mab} scale on the models from Figs.~\ref{fig:asml},~\ref{fig:tcp}, and~\ref{fig:ssh}?
    \item[RQ2:] How does \Cref{alg:mab} compare to the state-of-the-art on industrial benchmarks from the RERS challenge \cite{JasperRERS2019}?
    \item[RQ3:] How does \Cref{alg:mab} perform on the standard automata wiki~\cite{NeiderBenchmarks2018} benchmark suite and randomly generated Mealy machines? 
\end{description}
In Experiment~3, we additionally consider an alternative non-randomised version of the presented algorithm which is not feasible to apply to the benchmarks of Experiment~2 given the worse performance of non-randomised test suites.
Experiment~4 provides an in-depth analysis of runs on two benchmarks from the RERS challenge. Detailed benchmark results can be found in \ifthenelse{\boolean{arxivversion}}{Appendix~\ref{sec:appC}.}{Appendix~C~of~\cite{Kruger2024SmallTestSuiteArxiv}.}

\mysubsubsection{Experimental Setup}
We have extended the $L^{\#}$ learning library~\cite{VaandragerLsharp2022} 
with the multi-armed bandits approach described in Sec.~\ref{sec:subalphabetselection}. We compare our implementation instantiated with different experts. 
We write \moe{$*$} to refer to our key contribution, using the Mixture of all Experts, i.e., \moe{$E_{\mathsf{T}}, E_{\mathsf{AI}}, E^k_{\mathsf{F}}, E^{\text{Newman}}_{\mathsf{C}}$}. The exploration parameter $\gamma$ used in Algorithm~\ref{alg:mab} is set to 0.2 (determined by grid search) and the number of hypothesis states before we start sampling experts to 5.
We evaluate within a MAT framework as in~\Cref{fig:overview}. Our contributions can be paired with any learning algorithm in the MAT framework. We use $L^{\#}$~\cite{VaandragerLsharp2022}, as this is a recent learning algorithm.
We sample test cases from $S_{E_{\mathsf{T}},\mu}$ as our \emph{baseline}. More precisely, we use randomised Hybrid-ADS, as formulated in~\cite[Ch.\ 1]{MoermanThesis2019}, as conformance testing technique. 
For both the baseline and \cref{alg:mab}, the $\mu$ in Eq.~\eqref{eq:ets_distribution} is instantiated as follows:
Let $\textsf{geom}$ be the geometric distribution with mean $2$, then randomised Hybrid-ADS generates $S_{e,\mu}$ as in Eq.~\eqref{eq:ets_distribution}, where $\mu(x) =  \textsf{geom}(x)$ if $x > 3$,
$\mu(3) = \nicefrac{7}{8}$,
and $\mu(x) = 0$ otherwise. 
These hyperparameters are chosen to match~\cite{GarhewalExperimental2023}. 
We run Experiments~1 and~3 with 30 seeds, and Experiments~2 and~4 with 50~seeds.
In Experiments~1,~3 and~4 we evaluate the performance based on the total number of symbols and resets which is the sum of the length of all test cases plus the number of test cases sent to the SUT. Additional plots based on only the symbols or only the resets can be found in \ifthenelse{\boolean{arxivversion}}{Appendix~\ref{sec:appD}.}{Appendix~D~of~\cite{Kruger2024SmallTestSuiteArxiv}.}

\mysubsubsection{Experiment 1}
We evaluate the performance on the benchmark families $\mathsf{ASML}_{a,b}$, $\mathsf{TCP}_{a,b}$, and $\mathsf{SSH}_{a,b}$, for several choices of $a$ and $b$.
In all models, increasing $a$ leads to a general increase in difficulty, while $b$ adds the number of `irrelevant' inputs. 
Beyond the baseline and \moe{$*$}, we include for each family the associated experts discussed in Sec.~\ref{sec:subalphabet}, to validate that they indeed perform well on these families. Thus, for $\mathsf{ASML}_{a,b}$ we run \moe{$E_{\mathsf{T}}, E_{\mathsf{AI}}$}, for $\mathsf{TCP}_{a,b}$ we run \moe{$E_{\mathsf{T}}, E^k_{F}$}, and for $\mathsf{SSH}_{a,b}$ we run \moe{$E_{\mathsf{T}}, E^{\text{Newman}}_{\mathsf{C}}$}.

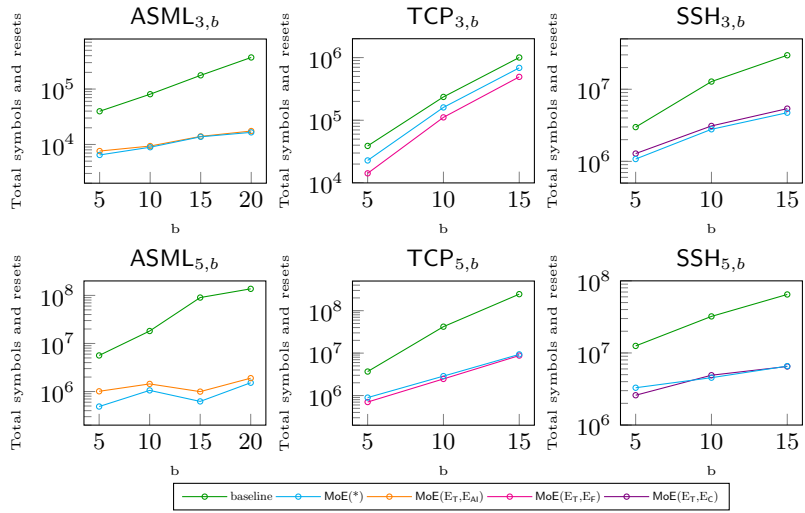
\begin{figure}[t] 
    \centering
    \begin{tikzpicture}

\begin{axis}[
    legend style={nodes={scale=0.5, transform shape}},
    title style={yshift=-1.5ex,},
    title={$\mathsf{ASML}_{3,b}$},
    width=4cm,height=3.5cm,
    ymode=log,
    xlabel=\tiny{b},
    ylabel=\tiny{Total symbols and resets},
    xtick={5, 10, 15, 20},
    ytick={1000, 10000, 100000, 1000000},
    xtick pos=bottom,
    ytick pos=left,
    ymin=2000, ymax=800000,
    name=plot1,
]

\addplot[
    color=green!60!black,
    mark=o,
    mark size=1pt
    ]
    coordinates {
        (5,39728.63333333333)(10,80729.0)(15,176924.33333333334)(20,371881.73333333334)
    };

\addplot[
    color=orange,
    mark=o,
    mark size=1pt
    ]
    coordinates {
        (5,7607.633333333333)(10,9365.066666666668)(15,13991.2)(20,17474.533333333333)
    };

\addplot[
    color=cyan,
    mark=o,
    mark size=1pt
    ]
    coordinates {
        (5,6466.8)(10,8917.033333333333)(15,13755.333333333334)(20,16536.4)
    };

\end{axis}

\begin{axis}[
    at=(plot1.right of south east), anchor=left of south west,
    legend style={nodes={scale=0.5, transform shape}},
    title style={yshift=-1.5ex,},
    title={$\mathsf{TCP}_{3,b}$},
    width=4cm,height=3.5cm,
    ymode=log,
    xlabel=\tiny{b},
    ylabel=\tiny{Total symbols and resets},
    xtick={5, 10, 15},
    ytick={10000, 100000, 1000000, 10000000},
    xtick pos=bottom,
    ytick pos=left,
    ymin=10000, ymax=2000000,
    legend pos=north west,
    name=plot2,
    ]

    \addplot[
    color=green!60!black,
    mark=o,
    mark size=1pt
    ]
    coordinates {
        (5,38949.066666666666)(10,235650.26666666666)(15,1003336.6666666666)
    };

    \addplot[
        color=magenta,
        mark=o,
        mark size=1pt
        ]
        coordinates {
            (5,14218.266666666666)(10,111168.36666666667)(15,492372.6)
        };

    \addplot[
        color=cyan,
        mark=o,
        mark size=1pt
        ]
        coordinates {
            (5,22835.0)(10,160213.16666666666)(15,685838.0333333333)
        };

\end{axis}

\begin{axis}[
    at=(plot2.right of south east), anchor=left of south west,
    legend style={nodes={scale=0.5, transform shape}},
    title style={yshift=-1.5ex,},
    title={$\mathsf{SSH}_{3,b}$},
    width=4cm,height=3.5cm,
    ymode=log,
    xlabel=\tiny{b},
    ylabel=\tiny{Total symbols and resets},
    xtick={5, 10, 15},
    ytick={100000, 1000000, 10000000, 100000000},
    xtick pos=bottom,
    ytick pos=left,
    ymin=500000, ymax=50000000,
    legend pos=north west,
    name=plot3,
    ]

    \addplot[
    color=green!60!black,
    mark=o,
    mark size=1pt
    ]
    coordinates {
        (5,2985299.9)(10,12790314.566666666)(15,29631939.9)
    };

    \addplot[
        color=violet,
        mark=o,
        mark size=1pt
        ]
        coordinates {
            (5,1283096.9333333333)(10,3100288.3666666667)(15,5387210.366666666)
        };

    \addplot[
        color=cyan,
        mark=o,
        mark size=1pt
        ]
        coordinates {
            (5,1077119.4)(10,2784301.5)(15,4730388.3)
        };

\end{axis}

\begin{axis}[
    at=(plot3.below south west), anchor=above north west,
    legend style={nodes={scale=0.5, transform shape}},
    title style={yshift=-1.5ex,},
    title={$\mathsf{SSH}_{5,b}$},
    width=4cm,height=3.5cm,
    ymode=log,
    xlabel=\tiny{b},
    ylabel=\tiny{Total symbols and resets},
    xtick={5, 10, 15},
    ytick={100000, 1000000, 10000000, 100000000},
    xtick pos=bottom,
    ytick pos=left,
    ymin=1000000, ymax=100000000,
    legend pos=north west,
    name=plot4,
    ]

    \addplot[
    color=green!60!black,
    mark=o,
    mark size=1pt
    ]
    coordinates {
        (5,12517277.833333334)(10,32132599.833333332)(15,64748140.4)
    };

    \addplot[
        color=violet,
        mark=o,
        mark size=1pt
        ]
        coordinates {
            (5,2590051.8)(10,4910185.2)(15,6490321.233333333)
        };

    \addplot[
        color=cyan,
        mark=o,
        mark size=1pt
        ]
        coordinates {
            (5,3290519.2)(10,4539635.933333334)(15,6562893.066666666)
        };

\end{axis}

\begin{axis}[
    at=(plot4.left of south west), anchor=right of south east,
    legend style={nodes={scale=0.5, transform shape}},
    title style={yshift=-1.5ex,},
    title={$\mathsf{TCP}_{5,b}$},
    width=4cm,height=3.5cm,
    ymode=log,
    xlabel=\tiny{b},
    ylabel=\tiny{Total symbols and resets},
    xtick={5, 10, 15},
    ytick={100000, 1000000, 10000000, 100000000, 1000000000},
    xtick pos=bottom,
    ytick pos=left,
    ymin=200000, ymax=500000000,
    legend pos=north west,
    name=plot5,
    ]

    \addplot[
    color=green!60!black,
    mark=o,
    mark size=1pt
    ]
    coordinates {
        (5,3678440.5)(10,41988918.93333333)(15,245136860.16666666)
    };

    \addplot[
        color=magenta,
        mark=o,
        mark size=1pt
        ]
        coordinates {
            (5,703832.5)(10,2465377.6666666665)(15,8755705.566666666)
        };

    \addplot[
        color=cyan,
        mark=o,
        mark size=1pt
        ]
        coordinates {
            (5,895228.4)(10,2871385.1666666665)(15,9333639.9)
        };

\end{axis}

\begin{axis}[
    at=(plot5.left of south west),
    shift={(-71pt,0)},
    legend style={legend columns=-1, nodes={scale=0.5, transform shape}, at={(3.55, -0.4)}},
    title style={yshift=-1.5ex,},
    title={$\mathsf{ASML}_{5,b}$},
    width=4cm,height=3.5cm,
    ymode=log,
    xlabel=\tiny{b},
    ylabel=\tiny{Total symbols and resets},
    xtick={5, 10, 15, 20},
    ytick={100000, 1000000, 10000000, 100000000},
    xtick pos=bottom,
    ytick pos=left,
    ymin=200000, ymax=200000000,
    name=plot6,
    ]

    \addplot[
    color=green!60!black,
    mark=o,
    mark size=1pt
    ]
    coordinates {
        (5,5630923.6)(10,18160950.933333334)(15,90199497.23333333)(20,136953725.5)
    };
    \addlegendentry{baseline}

    \addplot[
    color=cyan,
    mark=o,
    mark size=1pt
    ]
    coordinates {
        (5,487304.13333333336)(10,1056658.0666666667)(15,622854.8666666667)(20,1523979.5666666667)
    };
    \addlegendentry{\moe{*}}

    \addplot[
        color=orange,
        mark=o,
        mark size=1pt
        ]
        coordinates {
            (5,1011102.6666666666)(10,1440401.0666666667)(15,995535.7)(20,1897652.5666666667)
        };
    \addlegendentry{\moe{E$_{\mathsf{T}}$,E$_{\mathsf{AI}}$}}

    \addplot[
        color=magenta,
        mark=o,
        mark size=1pt
        ]
        coordinates {
            (2,2)(3,3)
        };
    \addlegendentry{\moe{E$_{\mathsf{T}}$,E$_{\mathsf{F}}$}}

    \addplot[
        color=violet,
        mark=o,
        mark size=1pt
        ]
        coordinates {
            (2,2)(3,3)
        };
    \addlegendentry{\moe{E$_{\mathsf{T}}$,E$_{\mathsf{C}}$}}

\end{axis}

\end{tikzpicture}
    \caption{Results Experiment 1.}
    \label{fig:res1}
\end{figure}

\myparagraph{Results.}
Fig.~\ref{fig:res1} plots the results, distinguishing six cases. Each column reflects another benchmark family. The top row shows the values for the parameterized models with $a=3$, while the bottom row shows the values for the parameterized models with $a=5$.
In each figure, the x-axis reflects the value of $b$. The y-axis (log scale) shows the total number of symbols and resets to learn and test a model. The y-axis is different for all subplots.  

\myparagraph{Discussion.}
From the plot, we observe that the baseline is outperformed by the other algorithms. Interestingly, the performance of \moe{$*$} and the algorithm belonging to the parameterized model is often comparable. Increasing $a$ leads to an increase in the total number of symbols and resets, which illustrates the scalability of the parameterized models. Increasing the value $b$ has more influence on the baseline than the other algorithms, as expected.

\mysubsubsection{Experiment 2} We compare \moe{$*$} to the baseline on the ASML benchmarks introduced in the RERS challenge~\cite{JasperRERS2019}. We consider 23 models with 25-289 states and 10-177 inputs. We skip models with less than 15 states because $\mathsf{MoE}$ needs time to learn which expert works best. The ASML models frequently do not terminate within a timeout of an hour~\cite{YangASML2019}. Therefore, we set a maximal symbol budget. The SUL rejects new OQs once the budget is depleted.

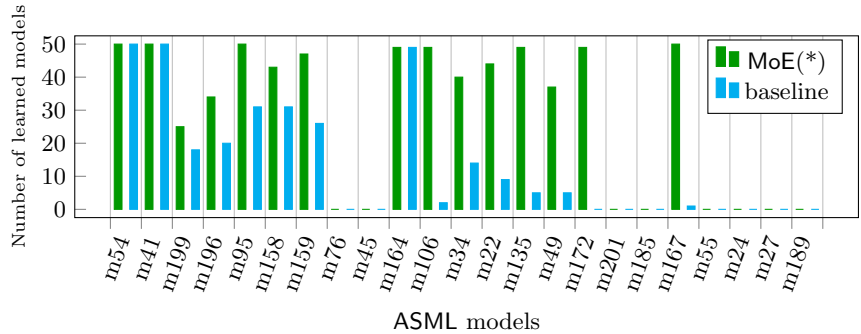
\begin{figure}[t] 
    \centering
    \begin{tikzpicture}
\begin{axis}[
    width=12.0cm,height=4.0cm,
    xticklabels={m54, m41, m199, m196, m95, m158, m159, m76, m45, m164, m106, m34, m22, m135, m49, m172, m201, m185, m167, m55, m24, m27, m189},
    ytick={0,10,20,30,40,50},
    xtick pos=bottom,
    ytick pos=left,
    enlargelimits=0.05,
	xlabel=$\mathsf{ASML}$ models,
    ylabel=\scriptsize{Number of learned models},
    x tick label style={rotate=70,anchor=east},
	ybar interval=0.5,
]
\addplot[color=green!60!black,fill=green!60!black] 
	coordinates {(0,50)(1,50)(2,25)(3,34)(4,50)(5,43)(6,47)(7,0)(8,0)(9,49)(10,49)(11,40)(12,44)(13,49)(14,37)(15,49)(16,0)(17,0)(18,50)(19,0)(20,0)(21,0)(22,0)(23,0)};
\addplot[color=cyan,fill=cyan]
	coordinates {(0,50)(1,50)(2,18)(3,20)(4,31)(5,31)(6,26)(7,0)(8,0)(9,49)(10,2)(11,14)(12,9)(13,5)(14,5)(15,0)(16,0)(17,0)(18,1)(19,0)(20,0)(21,0)(22,0)(23,0)};
\legend{\moe{*}, baseline}
\end{axis}
\end{tikzpicture}
    \caption{Results Experiment 2. }
    \label{fig:res2}
\end{figure}

\myparagraph{Results.}
Fig.~\ref{fig:res2} lists different models sorted by the number of transitions. For each model, we show how often out of 50 seeds an algorithm learns the model within a symbol budget of $10^8$. We provide a similiar figure with half the budget \ifthenelse{\boolean{arxivversion}}{in Appendix~\ref{sec:appD}}{in Appendix~D~of~\cite{Kruger2024SmallTestSuiteArxiv}}.

\myparagraph{Discussion.}
From the plot, we observe that \moe{$*$} learns the model more often than the baseline. The \moe{$*$} algorithm can learn 12 models with at least $80\%$ of the seeds while the baseline only learns 3 models with at least $80\%$ of the seeds. The same pattern can be observed for half the budget.

\mysubsubsection{Experiment 3} 
We consider the protocol implementations used in~\cite{VaandragerLsharp2022,FiterauDTLS2020} (38 models, 15-133 states, 7-22 inputs) and randomly generated models (27 models, 20-60 states, 11-31 inputs).
 For the standard benchmarks, we perform the experiment with the randomised $\ETS$, as used in the other experiments, and the deterministically ordered $\ETS$ from Sec.~\ref{sec:5.1} with $k=2$.

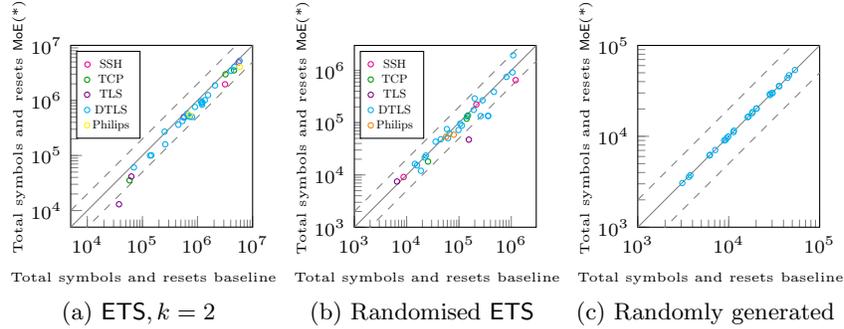
\begin{figure}[t]
    \centering
    \begin{subfigure}[b]{0.3\textwidth}
        \begin{tikzpicture}
    \begin{axis}[
        legend style={nodes={scale=0.5, transform shape}},
        width=4cm,height=4cm,
        xmode=log,
        ymode=log,
        x label style={at={(axis description cs:0.4,-0.2)},anchor=north},
        y label style={at={(axis description cs:-0.2,.55)},anchor=south},
        xlabel=\tiny{Total symbols and resets baseline},
        ylabel=\tiny{Total symbols and resets \moe{*}},
        xtick={10000, 100000, 1000000, 10000000},
        ytick={10000, 100000, 1000000, 10000000},
        every x tick label/.append style={font=\scriptsize},
        every y tick label/.append style={font=\scriptsize},
        xtick pos=bottom,
        ytick pos=left,
        xmin=5000, xmax=10000000,
        ymin=5000, ymax=10000000,
        legend pos=north west,
    ]

    \addplot[
        only marks,
        color=magenta,
        mark=o,
        mark size=1pt
        ]
        coordinates {
            (3159542.566666667,1964511.0)(563191.5333333333,498650.56666666665)(5573300.533333333,4932625.2)
        };
        \addlegendentry{SSH}

    \addplot[
        only marks,
        color=green!60!black,
        mark=o,
        mark size=1pt
        ]
        coordinates {
            (4597336.066666666,3516907.7333333334)(58438.166666666664,35363.86666666667)(3224711.6333333333,2982922.533333333)
        };
        \addlegendentry{TCP}

    \addplot[
        only marks,
        color=violet,
        mark=o,
        mark size=1pt
        ]
        coordinates {
            (62998.36666666667,41525.46666666667)(37648.53333333333,12998.233333333334)
        };
    \addlegendentry{TLS}

    \addplot[
        only marks,
        color=cyan,
        mark=o,
        mark size=1pt
        ]
        coordinates {
            (534266.9666666667,422220.06666666665)(1154151.4333333333,966842.4333333333)(145653.16666666666,100836.8)(252877.26666666666,270951.0)(572966.9,486506.13333333336)(70162.4,60712.0)(1187491.2666666666,891233.9333333333)(1554596.9666666666,1248682.3)(795638.7666666667,501185.4)(4060110.933333333,3444971.9)(4694579.233333333,4085035.566666667)(5809570.5,5254003.233333333)(659778.3333333334,526530.8666666667)(449322.13333333336,361758.76666666666)(2080336.4666666666,1871946.8333333333)(901325.9333333333,760143.9666666667)(1217805.1,926991.3666666667)(1217805.1,926991.3666666667)(1358487.1,1008384.7)(725724.3666666667,503509.9666666667)(1214254.0333333334,836982.0)(137408.9,99754.2)(260236.63333333333,159422.0)(646872.9333333333,571200.9666666667)
        };
    \addlegendentry{DTLS}

    \addplot[
        only marks,
        color=yellow,
        mark=o,
        mark size=1pt
        ]
        coordinates {
            (721852.3333333334,548674.5)(5898493.466666667,4068602.7333333334)
        };
    \addlegendentry{Philips}

    \addplot[color=gray]
        coordinates {(5000,5000)(10000000,10000000) };

    \addplot[color=gray,style=dashed]
        coordinates {(5000,5000/2)(10000000,10000000/2) };

    \addplot[color=gray,style=dashed]
        coordinates {(5000,5000*2)(10000000,10000000*2) };

    \end{axis}
    \end{tikzpicture}
        \vspace{-0.5cm}
        \caption{$\mathsf{ETS}, k=2$}
        \label{fig:res3a}
    \end{subfigure}
    \begin{subfigure}[b]{0.3\textwidth}
        \begin{tikzpicture}
    \begin{axis}[
        legend style={nodes={scale=0.5, transform shape}},
        width=4cm,height=4cm,
        xmode=log,
        ymode=log,
        x label style={at={(axis description cs:0.4,-0.2)},anchor=north},
        y label style={at={(axis description cs:-0.2,.55)},anchor=south},
        xlabel=\tiny{Total symbols and resets baseline},
        ylabel=\tiny{Total symbols and resets \moe{*}},
        every x tick label/.append style={font=\scriptsize},
        every y tick label/.append style={font=\scriptsize},
        xtick pos=bottom,
        ytick pos=left,
        xmin=1000, xmax=3000000,
        ymin=1000, ymax=3000000,
        legend pos=north west,
    ]

    \addplot[
        only marks,
        color=magenta,
        mark=o,
        mark size=1pt
        ]
        coordinates {
            (214764.8,222264.3)(8738.8,9141.866666666667)(1211962.5333333334,651492.4333333333)
        };
        \addlegendentry{SSH}

    \addplot[
        only marks,
        color=green!60!black,
        mark=o,
        mark size=1pt
        ]
        coordinates {
            (148088.3,136556.66666666666)(25644.3,18017.366666666665)(140042.03333333333,118705.66666666667)
        };
        \addlegendentry{TCP}

    \addplot[
        only marks,
        color=violet,
        mark=o,
        mark size=1pt
        ]
        coordinates {
            (154357.36666666667,47337.933333333334)(6550.666666666667,7488.033333333334)
        };
    \addlegendentry{TLS}

    \addplot[
        only marks,
        color=cyan,
        mark=o,
        mark size=1pt
        ]
        coordinates {
            (23303.766666666666,23435.533333333333)(44832.066666666666,48068.666666666664)(15783.766666666666,15232.5)(65901.03333333334,62596.26666666667)(58714.86666666667,75417.6)(22129.133333333335,21530.166666666668)(61618.066666666666,50570.13333333333)(199834.36666666667,290645.5333333333)(282695.5,267385.1666666667)(465316.0333333333,387655.63333333336)(808463.2,749662.0333333333)(1054771.3333333333,915221.4666666667)(35670.9,43249.8)(97907.8,72170.06666666667)(141945.8,131194.46666666667)(191224.36666666667,175270.63333333333)(360074.76666666666,134775.83333333334)(360073.4,134774.8)(259871.46666666667,133688.8)(105421.2,95534.5)(113951.96666666666,88040.4)(18753.933333333334,12025.133333333333)(14360.766666666666,16313.6)(1087443.3666666667,1940210.5)
        };
    \addlegendentry{DTLS}

    \addplot[
        only marks,
        color=orange,
        mark=o,
        mark size=1pt
        ]
        coordinates {
            (56365.73333333333,53473.26666666667)(79367.03333333334,58859.333333333336)
        };
    \addlegendentry{Philips}

    \addplot[color=gray]
        coordinates {(1000,1000)(3000000,3000000) };

    \addplot[color=gray,style=dashed]
        coordinates {(1000,1000/2)(3000000,3000000/2) };

    \addplot[color=gray,style=dashed]
        coordinates {(1000,1000*2)(3000000,3000000*2) };

    \end{axis}
    \end{tikzpicture}
        \vspace{-0.5cm}
        \caption{Randomised $\mathsf{ETS}$}
        \label{fig:res3b}
    \end{subfigure}
    \begin{subfigure}[b]{0.3\textwidth}
        \begin{tikzpicture}
    \begin{axis}[
        legend style={nodes={scale=0.5, transform shape}},
        width=4cm,height=4cm,
        xmode=log,
        ymode=log,
        x label style={at={(axis description cs:0.4,-0.2)},anchor=north},
        y label style={at={(axis description cs:-0.2,.55)},anchor=south},
        xlabel=\tiny{Total symbols and resets baseline},
        ylabel=\tiny{Total symbols and resets \moe{*}},
        every x tick label/.append style={font=\scriptsize},
        every y tick label/.append style={font=\scriptsize},
        xtick pos=bottom,
        ytick pos=left,
        xmin=1000, xmax=100000,
        ymin=1000, ymax=100000,
        legend pos=north west,
    ]

    \addplot[
        only marks,
        color=cyan,
        mark=o,
        mark size=1pt
        ]
        coordinates {
            (3787.6666666666665,3732.9666666666667)(3646.733333333333,3575.133333333333)(3087.5333333333333,3059.2)(7105.133333333333,7064.566666666667)(6141.566666666667,6154.2)(6220.3,6265.466666666666)(11406.366666666667,11529.133333333333)(9880.3,9940.733333333334)(9233.066666666668,9105.133333333333)(11290.066666666668,11132.366666666667)(9538.8,9574.1)(8915.2,8993.566666666668)(20196.933333333334,20222.6)(18353.233333333334,18537.266666666666)(18103.2,17986.6)(30400.7,30008.733333333334)(28372.966666666667,29161.3)(28415.433333333334,28446.9)(20244.8,20229.433333333334)(16148.1,16243.766666666666)(16546.166666666668,16490.9)(35962.26666666667,35555.03333333333)(35690.26666666667,35694.26666666667)(29778.6,29812.9)(53469.333333333336,53683.36666666667)(46156.566666666666,47006.1)(44327.166666666664,43940.066666666666)
        };

    \addplot[color=gray]
        coordinates {(1000,1000)(100000,100000) };

    \addplot[color=gray,style=dashed]
        coordinates {(1000,1000/2)(100000,100000/2) };

    \addplot[color=gray,style=dashed]
        coordinates {(1000,1000*2)(100000,100000*2) };

    \end{axis}
    \end{tikzpicture}
        \vspace{-0.5cm}
        \caption{Randomly generated}
        \label{fig:res3c}
    \end{subfigure}
    \caption{Results Experiment 3.}
    \label{fig:res3}
\end{figure}

\myparagraph{Results.}
Fig.~\ref{fig:res3} shows the number of symbols and resets needed to learn and test a model (log-scaled). The y-axis shows \moe{$*$} and the x-axis shows the baseline. The diagonal solid lines correspond to using the same number of symbols and resets, the dotted lines indicate a factor two difference. Points in the right triangle indicate that \moe{$*$} used fewer symbols and resets than the baseline.

\myparagraph{Discussion}
From Fig.~\ref{fig:res3a}, we observe that \moe{$*$} slightly outperforms the baseline in the $k$-complete test suite setting. From Fig.~\ref{fig:res3b}, we observe that the performance of \moe{$*$} leads to slightly better results than the baseline. The performance is comparable for the randomly generated models (Fig.~\ref{fig:res3c}).

\mysubsubsection{Experiment 4}
We analyze runs of \moe{$*$} and the baseline for models \emph{m159} and \emph{m189} to provide insights on the behavior of the algorithms.

\myparagraph{Results.} Fig.~\ref{fig:res4} shows the runs of the first 3 seeds for \emph{m159} and \emph{m189}. Each data point at $(x, y)$ in the subplots represents one hypothesis, with $x$ states, that was learned using a total of $y$ symbols (notice the log scale).
The green (or blue) lines correspond to runs with the baseline (or \moe{$*$}). The different markers for \moe{$*$} indicate which expert was used to generate the counterexample. 
The vertical lines extending to $10^8$ indicate that the algorithm ran out of budget before learning the correct model.

\myparagraph{Discussion.}
In line with Experiment 2, we see that more runs lead to learning the full model using \moe{$*$}. The plots use the number of states as a rough progress measure. Based on this progress measure, we see that the difference is negligible for small hypothesis sizes, but for larger hypotheses, the difference is substantial. For \emph{m159}, we observe that the baseline runs out of budget before all states have been found, whereas the \moe{$*$} is able to learn the correct model within the budget (using the smaller test suites). In \emph{m189}, we observe a significant divergence in progress. On average, the future expert is most used to find counterexamples.

\begin{figure}[t]
    \centering
    \begin{subfigure}[b]{0.48\textwidth}
        \begin{tikzpicture}
\begin{axis}[
legend style={nodes={scale=0.5, transform shape}},width=9.0cm,height=6.0cm,ymode=log,ytick pos=left,xtick pos=bottom,xlabel={Hypothesis size},ylabel={Total symbols and resets},xmin=1, xmax=30,ymin=100, ymax=100000000,xtick={0,10,20,30},ytick={100, 10000, 1000000, 100000000},legend pos=south east,scale=0.6,vasymptote=5,]

\addplot[color=cyan,mark=o,mark size=0.5pt,]
coordinates{
(1,240)(2,747)(3,23992)(5,68169)(7,90667)(10,264266)(13,569648)(17,1415781)(23,1462931)(27,1787698)(27,200000000)};
\addlegendentry{baseline}

\addplot[color=green!60!black]
coordinates{
(1,240)(2,747)(3,23992)(5,68169)(7,77118)(11,653953)(14,704809)(17,794057)(23,836171)(27,908909)(30,23096833)};
\addlegendentry{\moe{*}}

\addplot[only marks,color=green!60!black,mark=o,mark size=0.5pt,mark options={fill=green!60!black},]
coordinates{
(1,240)(2,747)(3,23992)(5,68169)(7,77118)(11,653953)};
\addlegendentry{E$_{\mathsf{T}}$}

\addplot[only marks,color=green!60!black,mark=triangle,mark size=0.5pt,mark options={fill=green!60!black},]
coordinates{
(30,23096833)};
\addlegendentry{E$_{\mathsf{AI}}$}

\addplot[only marks,color=green!60!black,mark=square,mark size=0.5pt,mark options={fill=green!60!black},]
coordinates{
(14,704809)(17,794057)(23,836171)(27,908909)};
\addlegendentry{E$_{\mathsf{F}}$}

\addplot[only marks,color=green!60!black,mark=diamond,mark size=0.5pt,mark options={fill=green!60!black},]
coordinates{
};

\addplot[color=cyan,mark=o,mark size=0.5pt,]
coordinates{
(1,240)(2,763)(3,40403)(4,54495)(5,225179)(7,263735)(11,437603)(12,515753)(17,575493)(27,655543)(30,675590)};

\addplot[color=cyan,mark=o,mark size=0.5pt,]
coordinates{
(1,240)(2,757)(3,18017)(4,42412)(5,52192)(7,61953)(10,85708)(13,103916)(14,669317)(23,871320)(23,200000000)};

\addplot[color=green!60!black]
coordinates{
(1,240)(2,763)(3,40403)(4,54495)(5,225179)(7,245072)(11,311211)(13,340736)(17,402710)(20,425326)(26,465926)(30,561186)};

\addplot[color=green!60!black]
coordinates{
(1,240)(2,757)(3,18017)(4,42412)(5,52192)(7,71757)(11,398077)(13,407414)(17,535224)(20,551784)(26,594484)(30,594484)};

\addplot[only marks,color=green!60!black,mark=o,mark size=0.5pt,mark options={fill=green!60!black},]
coordinates{
(1,240)(2,763)(3,40403)(4,54495)(5,225179)(11,311211)};

\addplot[only marks,color=green!60!black,mark=triangle,mark size=0.5pt,mark options={fill=green!60!black},]
coordinates{
(7,245072)(20,425326)};

\addplot[only marks,color=green!60!black,mark=square,mark size=0.5pt,mark options={fill=green!60!black},]
coordinates{
(13,340736)(17,402710)(26,465926)(30,561186)};

\addplot[only marks,color=green!60!black,mark=diamond,mark size=0.5pt,mark options={fill=green!60!black},]
coordinates{
};

\addplot[only marks,color=green!60!black,mark=o,mark size=0.5pt,mark options={fill=green!60!black},]
coordinates{
(1,240)(2,757)(3,18017)(4,42412)(5,52192)(7,71757)(11,398077)};

\addplot[only marks,color=green!60!black,mark=triangle,mark size=0.5pt,mark options={fill=green!60!black},]
coordinates{
(13,407414)};

\addplot[only marks,color=green!60!black,mark=square,mark size=0.5pt,mark options={fill=green!60!black},]
coordinates{
(17,535224)(20,551784)(26,594484)(30,594484)};

\addplot[only marks,color=green!60!black,mark=diamond,mark size=0.5pt,mark options={fill=green!60!black},]
coordinates{
};

\end{axis}
\end{tikzpicture}
    \end{subfigure}
    \begin{subfigure}[b]{0.48\textwidth}
        \begin{tikzpicture}
\begin{axis}[
legend style={nodes={scale=0.5, transform shape}},width=9.0cm,height=6.0cm,ymode=log,xtick pos=bottom,ytick pos=left,xlabel={Hypothesis size},ylabel={Total symbols and resets},xmin=1, xmax=189,ymin=100, ymax=100000000,xtick={0,50,100,150,189},ytick={100, 10000, 1000000, 100000000},legend pos=south east,scale=0.6,vasymptote=5,]

\addplot[color=cyan,mark=o,mark size=0.5pt,]
coordinates{
(1,276)(2,885)(3,4506)(4,28763)(5,34080)(6,81434)(8,263772)(9,533943)(11,580323)(17,689048)(19,710403)(20,722548)(21,801673)(26,899407)(29,1079103)(30,1783792)(36,2466944)(37,3551794)(39,3957477)(40,4011477)(42,4051487)(44,4784311)(45,4898956)(46,5207361)(49,5977900)(53,8209627)(54,9511971)(55,11983477)(56,13448863)(57,13565205)(63,13649494)(64,13734931)(69,13829121)(71,18060106)(74,20309063)(77,20501794)(81,21823982)(82,29834930)(92,32340271)(93,32785385)(95,33588415)(102,34159396)(105,35587531)(106,35726127)(107,36004808)(109,43041219)(110,43093642)(118,47011955)(119,53937707)(120,83308813)(122,91972117)(123,93490068)(123,200000000)};
\addlegendentry{baseline}

\addplot[color=green!60!black]
coordinates{
(1,276)(2,885)(3,4506)(4,28763)(5,34080)(6,153249)(8,163269)(9,173044)(10,518825)(12,845758)(19,912714)(22,950074)(23,962170)(27,1010920)(31,1048813)(32,1059283)(36,1131455)(37,1151718)(38,1165666)(39,1180808)(40,1194703)(41,1227463)(42,1353225)(43,1508551)(46,1624242)(47,2044694)(48,2060079)(59,2405329)(60,2428852)(63,2484314)(65,2550597)(66,2575445)(69,2716143)(73,2834379)(74,3071452)(75,3116172)(77,5244944)(82,5298758)(85,6922009)(90,9818977)(99,10086496)(101,10183182)(102,10244411)(103,10303585)(104,10477315)(107,11537211)(114,11797885)(115,11927993)(117,12467973)(126,12835310)(129,12898823)(130,13893145)(131,14981157)(133,15023899)(135,40207941)(146,40610227)(148,40713481)(149,41137962)(151,44622629)(153,45145343)(157,49911189)(158,49969338)(160,50300847)(161,87624228)(165,87712602)(166,87734703)(166,200000000)};
\addlegendentry{\moe{*}}

\addplot[only marks,color=green!60!black,mark=o,mark size=0.5pt,mark options={fill=green!60!black},]
coordinates{
(1,276)(2,885)(3,4506)(4,28763)(5,34080)(6,153249)(8,163269)(12,845758)(19,912714)(82,5298758)(99,10086496)(146,40610227)};
\addlegendentry{E$_{\mathsf{T}}$}

\addplot[only marks,color=green!60!black,mark=triangle,mark size=0.5pt,mark options={fill=green!60!black},]
coordinates{
(38,1165666)(48,2060079)(104,10477315)(135,40207941)};
\addlegendentry{E$_{\mathsf{AI}}$}

\addplot[only marks,color=green!60!black,mark=square,mark size=0.5pt,mark options={fill=green!60!black},]
coordinates{
(9,173044)(10,518825)(22,950074)(23,962170)(27,1010920)(31,1048813)(32,1059283)(36,1131455)(37,1151718)(39,1180808)(40,1194703)(41,1227463)(42,1353225)(43,1508551)(46,1624242)(47,2044694)(59,2405329)(60,2428852)(63,2484314)(65,2550597)(66,2575445)(73,2834379)(74,3071452)(75,3116172)(77,5244944)(85,6922009)(90,9818977)(101,10183182)(102,10244411)(103,10303585)(107,11537211)(114,11797885)(115,11927993)(117,12467973)(126,12835310)(130,13893145)(131,14981157)(133,15023899)(148,40713481)(149,41137962)(151,44622629)(153,45145343)(157,49911189)(158,49969338)(160,50300847)(161,87624228)(165,87712602)(166,87734703)(166,200000000)};
\addlegendentry{E$_{\mathsf{F}}$}

\addplot[only marks,color=green!60!black,mark=diamond,mark size=0.5pt,mark options={fill=green!60!black},]
coordinates{
(69,2716143)(129,12898823)};
\addlegendentry{E$_{\mathsf{C}}$}

\addplot[color=cyan,mark=o,mark size=0.5pt,]
coordinates{
(1,276)(2,904)(3,2989)(4,14234)(5,25643)(6,31314)(7,87483)(9,189890)(10,294086)(12,415543)(15,513205)(20,558914)(22,585627)(24,1319682)(28,1356199)(34,1488508)(35,1510244)(36,1522550)(40,1673305)(41,1788517)(46,2390420)(50,4410024)(55,4916638)(56,5227680)(57,5489239)(61,7428422)(62,7814032)(63,8416439)(64,8912066)(65,12980015)(68,15822171)(78,16072520)(79,16213900)(82,16396418)(84,16477676)(85,16640311)(86,18156581)(93,18498137)(96,18570363)(97,19764394)(99,20990415)(100,21189139)(101,27932786)(102,31942805)(104,34992193)(111,35660799)(113,38637092)(114,40803176)(115,45485276)(119,47475440)(120,47551280)(122,48611203)(123,53902290)(124,56203312)(124,200000000)};

\addplot[color=cyan,mark=o,mark size=0.5pt,]
coordinates{
(1,276)(2,897)(3,9367)(4,131682)(5,139942)(6,263729)(8,381543)(14,459657)(15,471314)(17,497379)(18,509536)(19,602216)(20,696825)(23,1142199)(26,1251372)(30,1309700)(32,1988043)(34,2411252)(36,2616874)(37,2685887)(38,3617208)(41,4805461)(47,5329396)(48,8323923)(52,8601434)(53,8950411)(55,9113860)(56,9585740)(59,23417363)(61,23445634)(62,24587889)(63,29148949)(76,39079307)(77,39760259)(78,39796450)(79,39832424)(83,40538284)(84,40555260)(85,40574359)(92,40852161)(95,40937085)(98,42115128)(99,43584690)(100,45831314)(107,46183107)(108,47069845)(110,47515629)(112,47575067)(114,47890127)(116,53084563)(117,84753745)(118,84875226)(119,99214347)(119,200000000)};

\addplot[color=green!60!black]
coordinates{
(1,276)(2,904)(3,2989)(4,14234)(5,25643)(6,121724)(7,364303)(9,378440)(10,390555)(12,648414)(16,683775)(17,692897)(19,1908995)(23,1945532)(30,2040844)(34,2109045)(35,2116615)(36,2132990)(38,2166258)(39,2188411)(43,2259669)(47,2313806)(49,2345056)(52,2393556)(53,2436253)(56,2493922)(63,2664503)(68,2788339)(69,2830095)(70,2889795)(71,2920935)(74,3162466)(75,3208696)(80,3415644)(85,3554557)(86,3784364)(90,3891636)(91,3927208)(92,3973807)(96,4074872)(97,4094478)(98,4186639)(101,4221284)(102,4321153)(103,4708076)(106,5236479)(109,5408028)(113,5560678)(115,5674765)(116,5762792)(121,7253506)(122,7832065)(124,11685823)(134,29999675)(135,30028846)(147,30588369)(149,30983300)(151,31060919)(153,31179830)(155,31232333)(158,31343788)(160,32986905)(164,33164506)(166,54580324)(166,200000000)};

\addplot[color=green!60!black]
coordinates{
(1,276)(2,897)(3,9367)(4,131682)(5,139942)(6,517338)(8,527921)(9,538733)(10,557942)(11,566377)(21,816950)(25,866942)(29,944822)(32,1034306)(33,1062069)(36,1122374)(40,1170421)(42,1195000)(43,1252672)(49,1403251)(61,1686767)(62,1703867)(66,1788857)(67,1809427)(68,1831239)(69,1848363)(73,1931851)(74,1963638)(76,2015514)(77,2047212)(78,2429431)(79,2731026)(80,5651340)(84,7742181)(85,8037001)(95,11567848)(96,11587551)(105,11799416)(106,13185979)(110,13268258)(111,13297699)(113,13414749)(114,13639349)(115,16150973)(121,21685931)(122,21756614)(125,21856828)(130,25094100)(138,25369332)(139,25403302)(140,25566989)(142,25640826)(144,26183368)(154,32059559)(159,32148148)(160,32170887)(162,32593108)(166,32771799)(168,36258755)(168,200000000)};

\addplot[only marks,color=green!60!black,mark=o,mark size=0.5pt,mark options={fill=green!60!black},]
coordinates{
(1,276)(2,904)(3,2989)(4,14234)(5,25643)(6,121724)(7,364303)(9,378440)(16,683775)(23,1945532)(36,2132990)(85,3554557)};

\addplot[only marks,color=green!60!black,mark=triangle,mark size=0.5pt,mark options={fill=green!60!black},]
coordinates{
(19,1908995)(43,2259669)(68,2788339)(115,5674765)};

\addplot[only marks,color=green!60!black,mark=square,mark size=0.5pt,mark options={fill=green!60!black},]
coordinates{
(10,390555)(12,648414)(17,692897)(30,2040844)(35,2116615)(38,2166258)(39,2188411)(49,2345056)(52,2393556)(53,2436253)(56,2493922)(63,2664503)(69,2830095)(70,2889795)(71,2920935)(74,3162466)(75,3208696)(86,3784364)(90,3891636)(91,3927208)(92,3973807)(96,4074872)(97,4094478)(98,4186639)(101,4221284)(102,4321153)(103,4708076)(106,5236479)(109,5408028)(113,5560678)(116,5762792)(121,7253506)(122,7832065)(124,11685823)(134,29999675)(135,30028846)(147,30588369)(149,30983300)(153,31179830)(155,31232333)(158,31343788)(164,33164506)(166,54580324)(166,200000000)};

\addplot[only marks,color=green!60!black,mark=diamond,mark size=0.5pt,mark options={fill=green!60!black},]
coordinates{
(34,2109045)(47,2313806)(80,3415644)(151,31060919)(160,32986905)};

\addplot[only marks,color=green!60!black,mark=o,mark size=0.5pt,mark options={fill=green!60!black},]
coordinates{
(1,276)(2,897)(3,9367)(4,131682)(5,139942)(6,517338)(11,566377)(25,866942)(121,21685931)(159,32148148)};

\addplot[only marks,color=green!60!black,mark=triangle,mark size=0.5pt,mark options={fill=green!60!black},]
coordinates{
(8,527921)(9,538733)(29,944822)(49,1403251)(66,1788857)(78,2429431)(96,11587551)(154,32059559)};

\addplot[only marks,color=green!60!black,mark=square,mark size=0.5pt,mark options={fill=green!60!black},]
coordinates{
(10,557942)(21,816950)(32,1034306)(33,1062069)(40,1170421)(42,1195000)(61,1686767)(62,1703867)(67,1809427)(68,1831239)(74,1963638)(76,2015514)(77,2047212)(79,2731026)(80,5651340)(84,7742181)(85,8037001)(95,11567848)(105,11799416)(106,13185979)(110,13268258)(111,13297699)(113,13414749)(114,13639349)(115,16150973)(122,21756614)(125,21856828)(130,25094100)(138,25369332)(139,25403302)(140,25566989)(142,25640826)(144,26183368)(160,32170887)(162,32593108)(166,32771799)(168,36258755)(168,200000000)};

\addplot[only marks,color=green!60!black,mark=diamond,mark size=0.5pt,mark options={fill=green!60!black},]
coordinates{
(36,1122374)(43,1252672)(69,1848363)(73,1931851)};

\end{axis}
\end{tikzpicture}
    \end{subfigure}
    \vspace{-0.5cm}
    \caption{Results Experiment 4 for \emph{m159} (left) and \emph{m189} (right).}
    \label{fig:res4}
\end{figure}
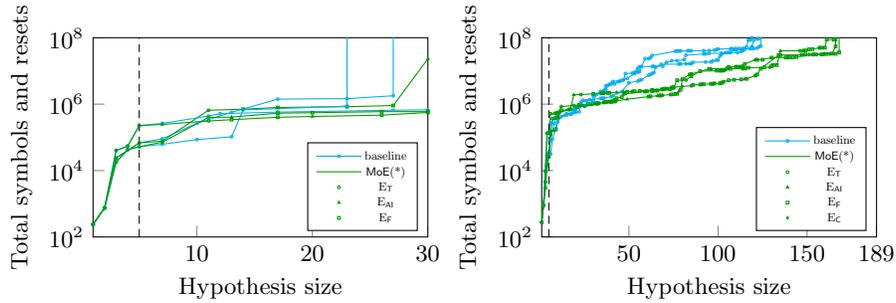

\section{Related Work} \label{sec:related}

\myparagraph{Test suites.}
The use of conformance testing~\cite{2004test} is standard in automata learning~\cite{VaandragerLearning2017} and goes back to~\cite{PeledVY02}.
There are several recent evaluations comparing sample-based conformance testing techniques~\cite{AichernigBenchmarking2020,AslamThesis2021,GarhewalExperimental2023}; these comparisons are orthogonal to the current paper.
Another idea is to use mutation testing~\cite{AichernigMutationTesting2019}. Mutation testing performs well on small models ($<100$\,states) but \cite{AichernigMutationTesting2019} notices that this technique is computationally too expensive for large models. 

\myparagraph{Increasing the alphabet size.} 
Instead of reducing the alphabet size for a more guided counterexample finding, a common theme is to use abstraction refinement~\cite{VaandragerLearning2017, HowarS2018} during learning to iteratively refine the alphabet. Bobaru et al.~\cite{BobaruPG08CAV} learn models using abstractions of components to later show that a property holds or is violated. Additionally, Vaandrager and Wi{\ss}man~\cite{VaandragerW23ActionCodes} formally describe the relation between high-level state machines and low-level models using abstraction refinement.

\myparagraph{Using the automata structure.}
A recent trend is gray-box automata learning, which assumes partial information on the SUL and aims to exploit this information. In particular, learning algorithms addressing various types of composition (sequential, parallel, product) have been investiged~\cite{AbelSerialComposition2016,MoermanProduct2018,LabbafCompositional2023, NeeleCompositional2023}.
However, all these techniques adapt the learning algorithm, not the testing algorithm, as in the current paper. Furthermore, while the results in Sec.~\ref{sec:subalphabet} are similar to a gray-box setting, the idea in Sec.~\ref{sec:subalphabetselection} is that this work leads to better performance in the strict black-box setting, as highlighted by the experiments.  

\myparagraph{Algorithm selection.} 
Machine learning for algorithm selection is an active area of research, see e.g.,~\cite{DBLP:journals/ml/AbdulrahmanBRV18,DBLP:journals/ec/KerschkeHNT19} and has been applied successfully, e.g., in the context of SAT checking~\cite{DBLP:series/faia/HoosHL21}. In formal methods, multi-armed bandits framework has been used, e.g., to prioritize SMT solver over others~\cite{PimpalkhareMedleysolver2021} or to guide falsification processes for hybrid systems~\cite{DBLP:conf/cav/ZhangHA19}. In  automata learning, bandits have recently been applied to select between different oracles for answering output queries~\cite{DBLP:journals/corr/abs-2307-10434}.

\section{Conclusion} \label{sec:conclusion}
In this paper, we introduced smaller test suites for conformance testing that preserve the typical completeness guarantees under natural assumptions on the learned system. The paper demonstrates that a combination of these test suites and a multi-armed bandit formulation significantly accelerates modern active automata learning, even when the assumptions do not hold. Natural extensions include adding additional small test suites, designing variations of the presented experts to, for example, handle parallel components\cite{LabbafCompositional2023,NeeleCompositional2023}, and using a multi-armed bandit to select the essential parameter $k$. Furthermore, our approach paves the way for using similar assumptions to those made for the completeness of the expert test suites in other aspects of active automata learning. 

\newpage
\bibliographystyle{plainurl}
\bibliography{references.bib}

\ifthenelse{\boolean{arxivversion}}{
  \newpage 
  \appendix

\section{Proofs for Sec.~\ref{sec:completeness}} \label{sec:appA}

\begin{proof}[Lemma~\ref{lem:reach-new}]
Let $n = |\H|$ be the number of states in $\H$. First, we prove that the state cover $P$ for $\H$ reaches $n$
distinct states in $\S$. To this end, let $w,v \in P$, $p = \delta^{\H}(w)$ and $q = \delta^{\H}(v)$ such that $p \nsim_W q$. Let $p' = \delta^{\S}(w)$ and $q' = \delta^{\S}(v)$. Then $p' \nsim_W q'$ because the machines agree on tests of the form $P \cdot W$. This indicates that $\Delta^\S(P)$ contains at least $n$ elements.

Next, we prove that $P \cdot I^{\leq k}$ is a state cover for $\S$. First, notice that $I^{\leq k + n}$ is trivially a state cover for $\S$, and since $\varepsilon \in P$, so is $P \cdot I^{\leq k+n}$.
Now suppose $q \in Q^{\S}$ is reachable; take a word $w \in P \cdot I^{n+k}$ so that the corresponding path in $\S$ from $q_0$ is cycle-free.
Decompose $w=vu$ where $v$ is the longest prefix such that  $\delta^\S(v) \in \Delta^\S(P)$; at least one such prefix exists since $w \in P \cdot I^*$. The path reading $u$ starting from $\delta^\S(v)$ is cycle-free (by the assumption on $w$); and since $v$ is chosen to be of maximal length, we must have $|u| \leq k$ (for $u$ to be longer, it either needs a cycle in the underlying path or encounter a state in $\Delta^\S(P)$, since $|\Delta^S(P)| \geq n$ and $|\S| \leq n+k$).

Finally, since $\delta^\S(v) \in \Delta^\S(P)$ there exists $v' \in P$ such that $\delta^\S(v) = \delta^\S(v')$, and we derive:
$$\delta^\S(v'u) = \delta^{\S}(vu) = \delta^{\S}(w) = q \,.$$ 
Since $v'u \in P \cdot I^{\leq k}$, we are done. 
\end{proof}

The proof of Lemma~\ref{lem:bisim-new} follows~\cite[Ch.\ 2]{MoermanThesis2019} closely, which in turn is based on~\cite{ChowTesting1978}. We reproduce the proof from~\cite{MoermanThesis2019} here for convenience. It makes use of bisimulations.

\begin{definition}
    A relation $R \subseteq Q \times Q$ on states is a \emph{bisimulation up-to} $\sim$ if for every $p ~R~ q$ we have
    \begin{itemize}[beginpenalty=99,topsep=1pt]
        \item equal outputs: $\lambda(p,i)=\lambda(q,i)$ for all $i \in I$,
        \item related successor states: $\delta(p,i) ~R~ \delta(q,i)$ or 
        there exists states $s, t \in Q$ such that $\delta(p,i) \sim s ~R~ t \sim \delta(q,i)$ for all $i \in I$.
    \end{itemize}
\end{definition}
\begin{lemma} \label{lem:bisimeq}
    Let $R$ be a bisimulation up-to $\sim$. If $(q_0^{\S},q_0^{\H}) \in R$, then $\H \sim \S$.
\end{lemma}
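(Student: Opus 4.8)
The plan is to show that $R$, though only a bisimulation \emph{up-to} $\sim$, can be saturated into a genuine bisimulation whose relatedness forces equality of all outputs. Concretely, I would work with the composite relation $\bar{R} = {\sim} \circ R \circ {\sim}$, that is, $p' \mathrel{\bar{R}} q'$ iff there exist states $p,q$ with $p' \sim p$, $p \mathrel{R} q$, and $q \sim q'$. Since $\sim$ is reflexive, $R \subseteq \bar{R}$, so in particular $(q_0^{\S}, q_0^{\H}) \in \bar{R}$.

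Before the main step, I would record that $\sim$ is a \emph{congruence} for the Mealy structure: if $p \sim p'$, then for every $i \in I$ we have $\lambda(p,i) = \lambda(p',i)$ and $\delta(p,i) \sim \delta(p',i)$. Both facts follow by unfolding Definition~\ref{def:eq}: equivalence gives $\lambda(p,iw) = \lambda(p',iw)$ for all $w \in I^{*}$; reading off the first symbol yields equal outputs on $i$, while the remaining outputs $\lambda(\delta(p,i),w) = \lambda(\delta(p',i),w)$ for all $w$ give exactly $\delta(p,i) \sim \delta(p',i)$.

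The core step is to verify that $\bar{R}$ is an \emph{ordinary} bisimulation, dropping the up-to slack and requiring directly $\bar{R}$-related successors. Take $p' \mathrel{\bar{R}} q'$ witnessed by $p' \sim p \mathrel{R} q \sim q'$, and fix $i \in I$. For outputs, chaining the congruence ($\lambda(p',i) = \lambda(p,i)$), the up-to output condition ($\lambda(p,i) = \lambda(q,i)$), and the congruence again ($\lambda(q,i) = \lambda(q',i)$) gives $\lambda(p',i) = \lambda(q',i)$. For successors, the congruence gives $\delta(p',i) \sim \delta(p,i)$ and $\delta(q,i) \sim \delta(q',i)$, while the up-to successor condition on $p \mathrel{R} q$ yields either $\delta(p,i) \mathrel{R} \delta(q,i)$ or states $s,t$ with $\delta(p,i) \sim s \mathrel{R} t \sim \delta(q,i)$. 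In both cases transitivity of $\sim$ lets me assemble a chain of the shape $\delta(p',i) \sim (\cdot) \mathrel{R} (\cdot) \sim \delta(q',i)$, i.e.\ $\delta(p',i) \mathrel{\bar{R}} \delta(q',i)$. This case analysis, together with the repeated appeals to the congruence of $\sim$, is the delicate part and the main obstacle, since it is precisely where the soundness of the up-to technique is earned.

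Finally, an ordinary bisimulation soundly implies equivalence: a routine induction on the length of $w \in I^{*}$ shows that $p' \mathrel{\bar{R}} q'$ implies $\lambda(p',w) = \lambda(q',w)$, using equal outputs for the first symbol and $\bar{R}$-related successors for the inductive tail. Applying this to $(q_0^{\S}, q_0^{\H}) \in \bar{R}$ gives $q_0^{\S} \sim q_0^{\H}$, which by Definition~\ref{def:eq} is exactly $\H \sim \S$.
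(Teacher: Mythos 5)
Your proposal is correct and is exactly the argument the paper has in mind: the paper's proof consists of the single line ``show that $\sim R \sim$ is a bisimulation,'' and you have carried out precisely that saturation, including the congruence property of $\sim$ and the case analysis on the up-to successor clause that the paper leaves implicit. No gaps.
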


\begin{proof}
    This follows by a standard argument: show that $\sim R \sim$ is a bisimulation.
\end{proof}

\begin{proof}[Lemma~\ref{lem:bisim-new}]
We show that all reached states are bisimilar up-to $\sim$ using the following relation: $$R = \{ (\delta^{\H}(w), \delta^{\S}(w)) \mid w \in L \}\,.$$

First, we prove the outputs are equal. Since $T=L \cdot I^{\leq 1} \cdot W$ and $W$ is non-empty, for all $w\in L$ and $i \in I$ there exists
$v \in W$ such that $wiv \in T$. Since $\H \sim_T \S$ we get $\lambda^{\H}(q_0^{\H}, wiv) = \lambda^{\S}(q_0^{\S},wiv)$ which implies 
$\lambda^{\H}(q_0^{\H}, wi) = \lambda^{\S}(q_0^{\S},wi)$. As a consequence we have 
$\lambda^{\H}(\delta^{\H}(w),i) = \lambda^{\S}(\delta^{\S}(w),i)$ as needed.

Next, we prove that the successors are related, that is, $\delta^\H(wi) \sim R \sim \delta^\S(wi)$; the challenge here
is that we do not know whether $wi \in L$. Since $L$ is a state cover for $\S$, there is a word $v \in L$ such that
$\delta^\S(wi) = \delta^\S(v)$, and since $\H \sim_T \S$, we have $\delta^\H(v) \sim_W \delta^\S(v)$. 
We derive $\delta^\H(wi) \sim_W \delta^S(wi) = \delta^\S(v) \sim_W \delta^\H(v)$ so that $\delta^\H(wi) \sim_W \delta^\H(v)$.
Since $W$ is a characterization set for $\H$, this implies $\delta^\H(wi) \sim \delta^\H(v)$. We conclude
$$
\delta^\H(wi) \sim \delta^\H(v) \mathrel{R} \delta^\S(v) = \delta^\S(wi)
$$
as desired.
\end{proof}

\section{Proofs for Sec.~\ref{sec:subalphabet}} \label{sec:proofs-experts}

In this section we prove the specific completeness results for the various experts, which are stated in Theorems~\ref{thm:activeinputs},~\ref{thm:future} and Theorem~\ref{thm:components}. The common proof strategy is, in each of the cases, to observe that the assumptions yield variants of Lemma~\ref{lem:reach-new}. The completeness results then follow from Lemma~\ref{lem:bisim-new}.

In order to apply this lemma, it is useful to reformulate the test suite generated by an expert a bit. We have:
\begin{align*}
\ETS_{E,k}(\H) &= \bigcup_{v \in P} ( v \cdot (\bigcup_{J \in E(\H,v)} J^{\leq k-1}) \cdot I^{\leq 2} \cdot W) \\
&= \left(\bigcup_{v \in P} v \cdot (\bigcup_{J \in E(\H,v)} J^{\leq k-1}) \cdot I^{\leq 1}\right)\cdot I^{\leq 1} \cdot W
\end{align*}
Let $L$ be the left part:
$$
 L = \bigcup_{v \in P} v \cdot (\bigcup_{J \in E(\H,v)} J^{\leq k-1}) \cdot I^{\leq 1} \,.
$$
Now, it suffices to show that in each of the theorems, $L$ is a state cover for $\S$. 

\begin{itemize}
    \item \textbf{Theorem~\ref{thm:activeinputs}.} Here $L$ reduces to 
    \[ P \cdot (I^{\activeMM{(\H)}})^{\leq k-1} \cdot I^{\leq 1} \,.\]

    We prove that $P \cdot (I^{\activeMM(\H)})^{\leq k}$ is a state cover for $\S$. From assumption 
    \[Q^{\S} \setminus Q_{sink}^{\S} \subseteq \Delta^\S(P \cdot (I^{\activeMM(\H)})^{\leq k}) \,,\]
    it follows that $L$ reaches all states in $Q^{\S} \setminus Q_{sink}^{\S}$. 
    Next, we prove that all sink states are also reached by $L$. Let $q \in Q_{sink}$ and $p$ such that for some $i \in I$, $\delta^{\S}(\delta^{\S}(p),i)=q$. Due to the definition of sink states, $p \in Q^{\S} \setminus Q_{sink}^{\S}$. Because $\S$ has at most $k$ more states than $\H$, in the worst case $q$ is reachable in $k$ steps from some state in $\H$. Therefore, $p$ is reachable in at most $k-1$ steps. By rewriting the assumption to
    \[Q^{\S} \setminus Q_{sink}^{\S} \subseteq \Delta^\S(P \cdot (I^{\activeMM(\H)})^{\leq k-1} \cdot I) \,, \]
    it follows that all sink states are also reached. Because we reach all states of $\S$ using $L$, $L$ is a state cover for $\S$.
    \item \textbf{Theorem~\ref{thm:future}.} Here $L$ reduces to 
    \[ \bigcup_{v \in P} v \cdot (I_{v,l})^{\leq k-1} \cdot I^{\leq 1} \,. \]
    From assumption
    \[ Q^\S \setminus Q^\S_{sink} \subseteq \bigcup_{v\in P} \Delta^\S(v \cdot I_{v, l}^{\leq k})\,,\]
    it follows that $L$ reaches all states in $Q^\S \setminus Q^\S_{sink}$. Using the same rewriting trick as above, we find that all sink states are also reached. Because we reach all states of $\S$ using $L$, $L$ is a state cover for $\S$.
    \item \textbf{Theorem~\ref{thm:components}.} Here $L$ reduces to
    \[ P \cdot (\bigcup_{X \in g(\H)} I_X^{\leq k-1}) \cdot I^{\leq 1} \,, \]
    From assumption
    \[ Q^{\S} \setminus Q_{sink}^{\S} \subseteq \bigcup_{X \in g(\H)} \Delta^\S(P \cdot I_X^{\leq k})\,,\]
    it follows that $L$ reaches all states in $Q^\S \setminus Q^\S_{sink}$. Using the same rewriting trick as above, we find that all sink states are also reached. Because we reach all states of $\S$ using $L$, $L$ is a state cover for $\S$.
    
\end{itemize}

  \section{Complete benchmarking results} \label{sec:appC}
In Tables~\ref{tab:exp1} to~\ref{tab:exp3_random}, we list the number of times the model was fully learned, the average number of learned states and the average number of symbols and resets for each model. For Experiment 2, the total number of symbols and resets is sometimes higher than the budget because we always complete an asked OQ.

\begin{table}[t]
    \centering
    \begin{adjustbox}{width=\textwidth}
{\footnotesize

}
\end{adjustbox}
\caption{Experiment 3, Randomly Generated Mealy Machines.}
\label{tab:exp3_random}
\end{table}
\clearpage
  \section{Additional Figures} \label{sec:appD}
In Fig.~\ref{fig:res_exp2half}, we show the results for Experiment 2 with budget $5 \cdot 10^7$.

\begin{figure}[h]
    \centering
    \begin{tikzpicture}
\begin{axis}[
    width=12.0cm,height=4.0cm,
    xticklabels={m54, m41, m199, m196, m95, m158, m159, m76, m45, m164, m106, m34, m22, m135, m49, m172, m201, m185, m167, m55, m24, m27, m189},
    ytick={0,10,20,30,40,50},
    xtick pos=bottom,
    ytick pos=left,
    enlargelimits=0.05,
	xlabel=$\mathsf{ASML}$ models,
    ylabel=\scriptsize{Number of learned models},
    x tick label style={rotate=70,anchor=east},
	ybar interval=0.5,
]
\addplot[color=green!60!black,fill=green!60!black] 
	coordinates {(0,50)(1,50)(2,18)(3,21)(4,50)(5,42)(6,45)(7,0)(8,0)(9,49)(10,47)(11,38)(12,32)(13,49)(14,18)(15,41)(16,0)(17,0)(18,49)(19,0)(20,0)(21,0)(22,0)(23,0)};
\addplot[color=cyan,fill=cyan]
	coordinates {(0,50)(1,50)(2,16)(3,11)(4,31)(5,31)(6,26)(7,0)(8,0)(9,49)(10,0)(11,11)(12,2)(13,3)(14,2)(15,0)(16,0)(17,0)(18,0)(19,0)(20,0)(21,0)(22,0)(23,0)};
\legend{\moe{*}, baseline}
\end{axis}
\end{tikzpicture}
    \caption{Results Experiment 2 with budget $5 \cdot 10^7$}
    \label{fig:res_exp2half}
\end{figure}
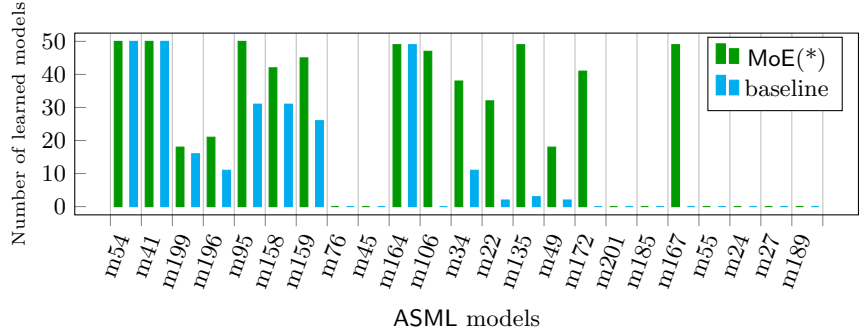

In Figs.~\ref{fig:res_exp1_sym}-\ref{fig:res4_resets}, we show the results for Experiments 1, 3 and 4 again but with separate plots for the total number of symbols and total number of resets.

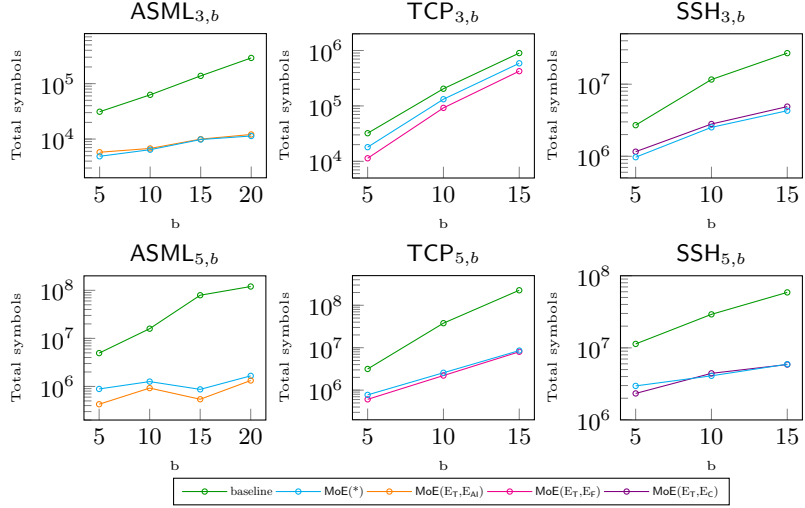
\begin{figure}[h]
    \centering
    \begin{tikzpicture}

\begin{axis}[
    legend style={nodes={scale=0.5, transform shape}},
    title style={yshift=-1.5ex,},
    title={$\mathsf{ASML}_{3,b}$},
    width=4cm,height=3.5cm,
    ymode=log,
    xlabel=\tiny{b},
    ylabel=\tiny{Total symbols},
    xtick={5, 10, 15, 20},
    ytick={1000, 10000, 100000, 1000000},
    xtick pos=bottom,
    ytick pos=left,
    ymin=2000, ymax=800000,
    name=plot1,
]

\addplot[
    color=green!60!black,
    mark=o,
    mark size=1pt
    ]
    coordinates {
        (5,31120.466666666667)(10,62831.63333333333)(15,138188.2)(20,291072.1666666667)
    };

\addplot[
    color=orange,
    mark=o,
    mark size=1pt
    ]
    coordinates {
        (5,5782.566666666667)(10,6768.733333333334)(15,9974.466666666667)(20,12094.766666666666)
    };

\addplot[
    color=cyan,
    mark=o,
    mark size=1pt
    ]
    coordinates {
        (5,4883.766666666666)(10,6428.633333333333)(15,9808.866666666667)(20,11360.933333333332)
    };

\end{axis}

\begin{axis}[
    at=(plot1.right of south east), anchor=left of south west,
    legend style={nodes={scale=0.5, transform shape}},
    title style={yshift=-1.5ex,},
    title={$\mathsf{TCP}_{3,b}$},
    width=4cm,height=3.5cm,
    ymode=log,
    xlabel=\tiny{b},
    ylabel=\tiny{Total symbols},
    xtick={5, 10, 15},
    ytick={1000, 10000, 100000, 1000000, 10000000},
    xtick pos=bottom,
    ytick pos=left,
    ymin=5000, ymax=2000000,
    legend pos=north west,
    name=plot2,
    ]

    \addplot[
    color=green!60!black,
    mark=o,
    mark size=1pt
    ]
    coordinates {
        (5,32156.0)(10,204155.9)(15,898425.9666666667)
    };

    \addplot[
        color=magenta,
        mark=o,
        mark size=1pt
        ]
        coordinates {
            (5,11351.466666666667)(10,92072.5)(15,423420.4)
        };

    \addplot[
        color=cyan,
        mark=o,
        mark size=1pt
        ]
        coordinates {
            (5,18018.633333333335)(10,131955.76666666666)(15,586999.6333333333)
        };

\end{axis}

\begin{axis}[
    at=(plot2.right of south east), anchor=left of south west,
    legend style={nodes={scale=0.5, transform shape}},
    title style={yshift=-1.5ex,},
    title={$\mathsf{SSH}_{3,b}$},
    width=4cm,height=3.5cm,
    ymode=log,
    xlabel=\tiny{b},
    ylabel=\tiny{Total symbols},
    xtick={5, 10, 15},
    ytick={100000, 1000000, 10000000, 100000000},
    xtick pos=bottom,
    ytick pos=left,
    ymin=500000, ymax=50000000,
    legend pos=north west,
    name=plot3,
    ]

    \addplot[
    color=green!60!black,
    mark=o,
    mark size=1pt
    ]
    coordinates {
        (5,2699592.3)(10,11606604.9)(15,26950059.6)
    };

    \addplot[
        color=violet,
        mark=o,
        mark size=1pt
        ]
        coordinates {
            (5,1158045.6)(10,2800175.966666667)(15,4883949.833333333)
        };

    \addplot[
        color=cyan,
        mark=o,
        mark size=1pt
        ]
        coordinates {
            (5,970604.7333333333)(10,2518161.066666667)(15,4284411.4)
        };

\end{axis}

\begin{axis}[
    at=(plot3.below south west), anchor=above north west,
    legend style={nodes={scale=0.5, transform shape}},
    title style={yshift=-1.5ex,},
    title={$\mathsf{SSH}_{5,b}$},
    width=4cm,height=3.5cm,
    ymode=log,
    xlabel=\tiny{b},
    ylabel=\tiny{Total symbols},
    xtick={5, 10, 15},
    ytick={100000, 1000000, 10000000, 100000000},
    xtick pos=bottom,
    ytick pos=left,
    ymin=1000000, ymax=100000000,
    legend pos=north west,
    name=plot4,
    ]

    \addplot[
    color=green!60!black,
    mark=o,
    mark size=1pt
    ]
    coordinates {
        (5,11334694.533333333)(10,29158639.666666668)(15,58781063.8)
    };

    \addplot[
        color=violet,
        mark=o,
        mark size=1pt
        ]
        coordinates {
            (5,2332324.8666666667)(10,4431812.266666667)(15,5868706.633333334)
        };

    \addplot[
        color=cyan,
        mark=o,
        mark size=1pt
        ]
        coordinates {
            (5,2965857.7)(10,4093180.466666667)(15,5931307.166666667)
        };

\end{axis}

\begin{axis}[
    at=(plot4.left of south west), anchor=right of south east,
    legend style={nodes={scale=0.5, transform shape}},
    title style={yshift=-1.5ex,},
    title={$\mathsf{TCP}_{5,b}$},
    width=4cm,height=3.5cm,
    ymode=log,
    xlabel=\tiny{b},
    ylabel=\tiny{Total symbols},
    xtick={5, 10, 15},
    ytick={100000, 1000000, 10000000, 100000000, 1000000000},
    xtick pos=bottom,
    ytick pos=left,
    ymin=200000, ymax=500000000,
    legend pos=north west,
    name=plot5,
    ]

    \addplot[
    color=green!60!black,
    mark=o,
    mark size=1pt
    ]
    coordinates {
        (5,3154885.8333333335)(10,37843659.46666667)(15,225843328.73333332)
    };

    \addplot[
        color=magenta,
        mark=o,
        mark size=1pt
        ]
        coordinates {
            (5,607241.5333333333)(10,2206862.466666667)(15,8015914.9)
        };

    \addplot[
        color=cyan,
        mark=o,
        mark size=1pt
        ]
        coordinates {
            (5,773839.5)(10,2569388.0)(15,8548868.533333333)
        };

\end{axis}

\begin{axis}[
    at=(plot5.left of south west),
    shift={(-71pt,0)},
    legend style={legend columns=-1, nodes={scale=0.5, transform shape}, at={(3.55, -0.4)}},
    title style={yshift=-1.5ex,},
    title={$\mathsf{ASML}_{5,b}$},
    width=4cm,height=3.5cm,
    ymode=log,
    xlabel=\tiny{b},
    ylabel=\tiny{Total symbols},
    xtick={5, 10, 15, 20},
    ytick={100000, 1000000, 10000000, 100000000},
    xtick pos=bottom,
    ytick pos=left,
    ymin=200000, ymax=200000000,
    name=plot6,
    ]

    \addplot[
    color=green!60!black,
    mark=o,
    mark size=1pt
    ]
    coordinates {
        (5,4935534.0)(10,15912023.033333333)(15,79032178.9)(20,119982587.23333333)
    };
    \addlegendentry{baseline}

    \addplot[
    color=cyan,
    mark=o,
    mark size=1pt
    ]
    coordinates {
        (5,886282.4)(10,1261706.8333333333)(15,869237.3666666667)(20,1660758.0666666667)
    };
    \addlegendentry{\moe{*}}

    \addplot[
        color=orange,
        mark=o,
        mark size=1pt
        ]
        coordinates {
            (5,426461.9666666667)(10,924480.3666666667)(15,541363.9666666667)(20,1328530.9333333333)
        };
    \addlegendentry{\moe{E$_{\mathsf{T}}$,E$_{\mathsf{AI}}$}}

    \addplot[
        color=magenta,
        mark=o,
        mark size=1pt
        ]
        coordinates {
            (2,2)(3,3)
        };
    \addlegendentry{\moe{E$_{\mathsf{T}}$,E$_{\mathsf{F}}$}}

    \addplot[
        color=violet,
        mark=o,
        mark size=1pt
        ]
        coordinates {
            (2,2)(3,3)
        };
    \addlegendentry{\moe{E$_{\mathsf{T}}$,E$_{\mathsf{C}}$}}

\end{axis}

\end{tikzpicture}
    \caption{Results Experiment 1 (Total number of symbols).}
    \label{fig:res_exp1_sym}
\end{figure}

\begin{figure}[h]
    \centering
    \begin{tikzpicture}

\begin{axis}[
    legend style={nodes={scale=0.5, transform shape}},
    title style={yshift=-1.5ex,},
    title={$\mathsf{ASML}_{3,b}$},
    width=4cm,height=3.5cm,
    ymode=log,
    xlabel=\tiny{b},
    ylabel=\tiny{Total resets},
    xtick={5, 10, 15, 20},
    ytick={100, 1000, 10000, 100000, 1000000},
    xtick pos=bottom,
    ytick pos=left,
    ymin=500, ymax=200000,
    name=plot1,
]

\addplot[
    color=green!60!black,
    mark=o,
    mark size=1pt
    ]
    coordinates {
        (5,8608.166666666666)(10,17897.366666666665)(15,38736.13333333333)(20,80809.56666666667)
    };

\addplot[
    color=orange,
    mark=o,
    mark size=1pt
    ]
    coordinates {
        (5,1825.0666666666666)(10,2596.3333333333335)(15,4016.733333333333)(20,5379.766666666666)
    };

\addplot[
    color=cyan,
    mark=o,
    mark size=1pt
    ]
    coordinates {
        (5,1583.0333333333333)(10,2488.4)(15,3946.4666666666667)(20,5175.466666666666)
    };

\end{axis}

\begin{axis}[
    at=(plot1.right of south east), anchor=left of south west,
    legend style={nodes={scale=0.5, transform shape}},
    title style={yshift=-1.5ex,},
    title={$\mathsf{TCP}_{3,b}$},
    width=4cm,height=3.5cm,
    ymode=log,
    xlabel=\tiny{b},
    ylabel=\tiny{Total resets},
    xtick={5, 10, 15},
    ytick={1000, 10000, 100000, 1000000, 10000000},
    xtick pos=bottom,
    ytick pos=left,
    ymin=1000, ymax=200000,
    legend pos=north west,
    name=plot2,
    ]

    \addplot[
    color=green!60!black,
    mark=o,
    mark size=1pt
    ]
    coordinates {
        (5,6793.066666666667)(10,31494.366666666665)(15,104910.7)
    };

    \addplot[
        color=magenta,
        mark=o,
        mark size=1pt
        ]
        coordinates {
            (5,2866.8)(10,19095.866666666665)(15,68952.2)
        };

    \addplot[
        color=cyan,
        mark=o,
        mark size=1pt
        ]
        coordinates {
            (5,4816.366666666667)(10,28257.4)(15,98838.4)
        };

\end{axis}

\begin{axis}[
    at=(plot2.right of south east), anchor=left of south west,
    legend style={nodes={scale=0.5, transform shape}},
    title style={yshift=-1.5ex,},
    title={$\mathsf{SSH}_{3,b}$},
    width=4cm,height=3.5cm,
    ymode=log,
    xlabel=\tiny{b},
    ylabel=\tiny{Total resets},
    xtick={5, 10, 15},
    ytick={1000, 10000, 100000, 1000000, 10000000, 100000000},
    xtick pos=bottom,
    ytick pos=left,
    ymin=50000, ymax=5000000,
    legend pos=north west,
    name=plot3,
    ]

    \addplot[
    color=green!60!black,
    mark=o,
    mark size=1pt
    ]
    coordinates {
        (5,285707.6)(10,1183709.6666666667)(15,2681880.3)
    };

    \addplot[
        color=violet,
        mark=o,
        mark size=1pt
        ]
        coordinates {
            (5,125051.33333333333)(10,300112.4)(15,503260.5333333333)
        };

    \addplot[
        color=cyan,
        mark=o,
        mark size=1pt
        ]
        coordinates {
            (5,106514.66666666667)(10,266140.43333333335)(15,445976.9)
        };

\end{axis}

\begin{axis}[
    at=(plot3.below south west), anchor=above north west,
    legend style={nodes={scale=0.5, transform shape}},
    title style={yshift=-1.5ex,},
    title={$\mathsf{SSH}_{5,b}$},
    width=4cm,height=3.5cm,
    ymode=log,
    xlabel=\tiny{b},
    ylabel=\tiny{Total resets},
    xtick={5, 10, 15},
    ytick={10000, 100000, 1000000, 10000000, 100000000},
    xtick pos=bottom,
    ytick pos=left,
    ymin=100000, ymax=10000000,
    legend pos=north west,
    name=plot4,
    ]

    \addplot[
    color=green!60!black,
    mark=o,
    mark size=1pt
    ]
    coordinates {
        (5,1182583.3)(10,2973960.1666666665)(15,5967076.6)
    };

    \addplot[
        color=violet,
        mark=o,
        mark size=1pt
        ]
        coordinates {
            (5,257726.93333333332)(10,478372.93333333335)(15,621614.6)
        };

    \addplot[
        color=cyan,
        mark=o,
        mark size=1pt
        ]
        coordinates {
            (5,324661.5)(10,446455.4666666667)(15,631585.9)
        };

\end{axis}

\begin{axis}[
    at=(plot4.left of south west), anchor=right of south east,
    legend style={nodes={scale=0.5, transform shape}},
    title style={yshift=-1.5ex,},
    title={$\mathsf{TCP}_{5,b}$},
    width=4cm,height=3.5cm,
    ymode=log,
    xlabel=\tiny{b},
    ylabel=\tiny{Total resets},
    xtick={5, 10, 15},
    ytick={10000, 100000, 1000000, 10000000, 100000000, 1000000000},
    xtick pos=bottom,
    ytick pos=left,
    ymin=20000, ymax=50000000,
    legend pos=north west,
    name=plot5,
    ]

    \addplot[
    color=green!60!black,
    mark=o,
    mark size=1pt
    ]
    coordinates {
        (5,523554.6666666667)(10,4145259.466666667)(15,19293531.433333334)
    };

    \addplot[
        color=magenta,
        mark=o,
        mark size=1pt
        ]
        coordinates {
            (5,96590.96666666666)(10,258515.2)(15,739790.6666666666)
        };

    \addplot[
        color=cyan,
        mark=o,
        mark size=1pt
        ]
        coordinates {
            (5,121388.9)(10,301997.1666666667)(15,784771.3666666667)
        };

\end{axis}

\begin{axis}[
    at=(plot5.left of south west),
    shift={(-71pt,0)},
    legend style={legend columns=-1, nodes={scale=0.5, transform shape}, at={(3.55, -0.4)}},
    title style={yshift=-1.5ex,},
    title={$\mathsf{ASML}_{5,b}$},
    width=4cm,height=3.5cm,
    ymode=log,
    xlabel=\tiny{b},
    ylabel=\tiny{Total resets},
    xtick={5, 10, 15, 20},
    ytick={10000, 100000, 1000000, 10000000, 100000000},
    xtick pos=bottom,
    ytick pos=left,
    ymin=20000, ymax=40000000,
    name=plot6,
    ]

    \addplot[
    color=green!60!black,
    mark=o,
    mark size=1pt
    ]
    coordinates {
        (5,695389.6)(10,2248927.9)(15,11167318.333333334)(20,16971138.266666666)
    };
    \addlegendentry{baseline}

    \addplot[
    color=cyan,
    mark=o,
    mark size=1pt
    ]
    coordinates {
        (5,124820.26666666666)(10,178694.23333333334)(15,126298.33333333333)(20,236894.5)
    };
    \addlegendentry{\moe{*}}

    \addplot[
        color=orange,
        mark=o,
        mark size=1pt
        ]
        coordinates {
            (5,60842.166666666664)(10,132177.7)(15,81490.9)(20,195448.63333333333)
        };
    \addlegendentry{\moe{E$_{\mathsf{T}}$,E$_{\mathsf{AI}}$}}

    \addplot[
        color=magenta,
        mark=o,
        mark size=1pt
        ]
        coordinates {
            (2,2)(3,3)
        };
    \addlegendentry{\moe{E$_{\mathsf{T}}$,E$_{\mathsf{F}}$}}

    \addplot[
        color=violet,
        mark=o,
        mark size=1pt
        ]
        coordinates {
            (2,2)(3,3)
        };
    \addlegendentry{\moe{E$_{\mathsf{T}}$,E$_{\mathsf{C}}$}}

\end{axis}

\end{tikzpicture}
    \caption{Results Experiment 1 (Total number of resets).}
    \label{fig:res_exp1_resets}
\end{figure}

\begin{figure}[h]
    \centering
    \begin{subfigure}[b]{0.3\textwidth}
        \begin{tikzpicture}
    \begin{axis}[
        legend style={nodes={scale=0.5, transform shape}},
        width=4cm,height=4cm,
        xmode=log,
        ymode=log,
        xlabel=\tiny{Total symbols baseline},
        ylabel=\tiny{Total symbols \moe{*}},
        xtick={10000, 100000, 1000000, 10000000},
        ytick={10000, 100000, 1000000, 10000000},
        xtick pos=bottom,
        ytick pos=left,
        xmin=5000, xmax=50000000,
        ymin=5000, ymax=50000000,
        legend pos=north west,
    ]

    \addplot[
        only marks,
        color=magenta,
        mark=o,
        mark size=1pt
        ]
        coordinates {
            (2905447.7333333334,1799503.0666666667)(489514.7,432338.1666666667)(5069243.966666667,4485182.1)
        };
        \addlegendentry{SSH}

    \addplot[
        only marks,
        color=green!60!black,
        mark=o,
        mark size=1pt
        ]
        coordinates {
            (4268748.733333333,3261608.1)(52100.53333333333,31421.266666666666)(2992687.7333333334,2767105.7666666666)
        };
        \addlegendentry{TCP}

    \addplot[
        only marks,
        color=violet,
        mark=o,
        mark size=1pt
        ]
        coordinates {
            (53524.7,35015.1)(33011.7,11219.733333333334)
        };
    \addlegendentry{TLS}

    \addplot[
        only marks,
        color=cyan,
        mark=o,
        mark size=1pt
        ]
        coordinates {
            (486747.3333333333,384260.13333333336)(1061297.7333333334,888813.0333333333)(132592.16666666666,91403.8)(229461.3,245982.1)(526978.3666666667,448320.4)(62928.23333333333,54468.4)(1084531.0,812485.8333333334)(1380266.3666666667,1105133.9)(703937.4333333333,438601.5333333333)(3719475.3333333335,3155057.1)(4287866.166666667,3731477.8333333335)(5302385.833333333,4797545.833333333)(586697.3666666667,467017.3333333333)(398495.93333333335,321694.5)(1845612.1333333333,1656538.9)(794251.7666666667,668540.5)(1106766.5333333334,843947.1333333333)(1106766.5333333334,843947.1333333333)(1232244.6333333333,915307.1666666666)(658191.1,455862.9666666667)(1104944.7666666666,759375.2)(123487.53333333334,89098.0)(232036.06666666668,139664.56666666668)(562558.9333333333,496423.8)
        };
    \addlegendentry{DTLS}

    \addplot[
        only marks,
        color=orange,
        mark=o,
        mark size=1pt
        ]
        coordinates {
            (643360.2666666667,488135.4)(5420750.766666667,3743541.9)
        };
    \addlegendentry{Philips}

    \addplot[color=gray]
        coordinates {(5000,5000)(50000000,50000000) };

    \addplot[color=gray,style=dashed]
        coordinates {(5000,5000/2)(50000000,50000000/2) };

    \addplot[color=gray,style=dashed]
        coordinates {(5000,5000*2)(50000000,50000000*2) };

    \end{axis}
    \end{tikzpicture}
        \caption{$\mathsf{ETS}, k=2$}
        \label{fig:res3a_sym}
    \end{subfigure}
    \begin{subfigure}[b]{0.3\textwidth}
        \begin{tikzpicture}
    \begin{axis}[
        legend style={nodes={scale=0.5, transform shape}},
        width=4cm,height=4cm,
        xmode=log,
        ymode=log,
        xlabel=\tiny{Total symbols baseline},
        ylabel=\tiny{Total symbols \moe{*}},
        xtick pos=bottom,
        ytick pos=left,
        xmin=1000, xmax=3000000,
        ymin=1000, ymax=3000000,
        legend pos=north west,
    ]

    \addplot[
        only marks,
        color=magenta,
        mark=o,
        mark size=1pt
        ]
        coordinates {
            (198402.43333333332,205413.23333333334)(7666.833333333333,8028.333333333333)(1106376.6,593576.4666666667)
        };
        \addlegendentry{SSH}

    \addplot[
        only marks,
        color=green!60!black,
        mark=o,
        mark size=1pt
        ]
        coordinates {
            (137651.06666666668,126764.06666666667)(23130.733333333334,16163.866666666667)(130165.33333333333,110151.23333333334)
        };
        \addlegendentry{TCP}

    \addplot[
        only marks,
        color=violet,
        mark=o,
        mark size=1pt
        ]
        coordinates {
            (136682.93333333332,41775.3)(5720.033333333334,6558.166666666667)
        };
    \addlegendentry{TLS}

    \addplot[
        only marks,
        color=cyan,
        mark=o,
        mark size=1pt
        ]
        coordinates {
            (21210.8,21330.8)(41398.666666666664,44416.36666666667)(14359.933333333332,13854.466666666667)(60757.9,57711.333333333336)(54510.03333333333,70173.03333333334)(20261.966666666667,19714.833333333332)(56432.73333333333,46182.666666666664)(180682.63333333333,263005.3)(256328.23333333334,242053.1)(423644.36666666664,353622.9666666667)(737008.9333333333,685067.8333333334)(961007.3666666667,835343.1333333333)(32118.133333333335,39084.76666666667)(88792.9,65315.6)(128170.83333333333,118470.36666666667)(172864.5,158276.36666666667)(329224.06666666665,122988.2)(329222.76666666666,122987.2)(237137.0,121953.46666666666)(96305.86666666667,87170.33333333333)(104356.43333333333,80638.6)(16937.466666666667,10818.666666666666)(12927.2,14752.533333333333)(976405.4333333333,1744136.8333333333)
        };
    \addlegendentry{DTLS}

    \addplot[
        only marks,
        color=orange,
        mark=o,
        mark size=1pt
        ]
        coordinates {
            (51191.23333333333,48580.53333333333)(72314.96666666666,53315.13333333333)
        };
    \addlegendentry{Philips}

    \addplot[color=gray]
        coordinates {(1000,1000)(3000000,3000000) };

    \addplot[color=gray,style=dashed]
        coordinates {(1000,1000/2)(3000000,3000000/2) };

    \addplot[color=gray,style=dashed]
        coordinates {(1000,1000*2)(3000000,3000000*2) };

    \end{axis}
    \end{tikzpicture}
        \caption{Randomised $\mathsf{ETS}$}
        \label{fig:res3b_sym}
    \end{subfigure}
    \begin{subfigure}[b]{0.3\textwidth}
        \begin{tikzpicture}
    \begin{axis}[
        legend style={nodes={scale=0.5, transform shape}},
        width=4cm,height=4cm,
        xmode=log,
        ymode=log,
        xlabel=\tiny{Total symbols baseline},
        ylabel=\tiny{Total symbols \moe{*}},
        xtick pos=bottom,
        ytick pos=left,
        xmin=1000, xmax=100000,
        ymin=1000, ymax=100000,
        legend pos=north west,
    ]

        \addplot[
        only marks,
        color=cyan,
        mark=o,
        mark size=1pt
        ]
        coordinates {
            (3184.0,3131.0)(3115.0333333333333,3047.1)(2625.5333333333333,2601.233333333333)(5952.0,5911.333333333333)(5187.733333333334,5202.066666666667)(5315.7,5355.133333333333)(9536.5,9661.833333333334)(8393.1,8458.366666666667)(7820.166666666667,7690.966666666666)(9761.566666666668,9613.533333333333)(8371.666666666666,8410.633333333333)(7810.633333333333,7886.866666666667)(17362.566666666666,17392.5)(15970.933333333332,16153.2)(16143.933333333332,16029.7)(26449.6,26059.8)(24740.133333333335,25525.5)(25323.233333333334,25351.7)(17720.666666666668,17696.7)(14180.133333333333,14274.7)(14761.6,14703.766666666666)(31682.233333333334,31275.9)(32070.566666666666,32062.466666666667)(26546.3,26578.1)(46941.833333333336,47158.833333333336)(40691.4,41542.3)(39364.23333333333,38978.833333333336)
        };

    \addplot[color=gray]
        coordinates {(1000,1000)(100000,100000) };

    \addplot[color=gray,style=dashed]
        coordinates {(1000,1000/2)(100000,100000/2) };

    \addplot[color=gray,style=dashed]
        coordinates {(1000,1000*2)(100000,100000*2) };

    \end{axis}
    \end{tikzpicture}
        \caption{Randomly generated}
        \label{fig:res3c_sym}
    \end{subfigure}
    \caption{Results Experiment 3 (Total number of symbols).}
    \label{fig:res3_sym}
\end{figure}

\begin{figure}[h]
    \centering
    \begin{subfigure}[b]{0.3\textwidth}
        \begin{tikzpicture}
    \begin{axis}[
        legend style={nodes={scale=0.5, transform shape}},
        width=4cm,height=4cm,
        xmode=log,
        ymode=log,
        xlabel=\tiny{Total resets baseline},
        ylabel=\tiny{Total resets \moe{*}},
        xtick={1000, 10000, 100000, 1000000},
        ytick={1000, 10000, 100000, 1000000},
        xtick pos=bottom,
        ytick pos=left,
        xmin=500, xmax=5000000,
        ymin=500, ymax=5000000,
        legend pos=north west,
    ]

    \addplot[
        only marks,
        color=magenta,
        mark=o,
        mark size=1pt
        ]
        coordinates {
            (254094.83333333334,165007.93333333332)(73676.83333333333,66312.4)(504056.56666666665,447443.1)
        };
        \addlegendentry{SSH}

    \addplot[
        only marks,
        color=green!60!black,
        mark=o,
        mark size=1pt
        ]
        coordinates {
            (328587.3333333333,255299.63333333333)(6337.633333333333,3942.6)(232023.9,215816.76666666666)
        };
        \addlegendentry{TCP}

    \addplot[
        only marks,
        color=violet,
        mark=o,
        mark size=1pt
        ]
        coordinates {
            (9473.666666666666,6510.366666666667)(4636.833333333333,1778.5)
        };
    \addlegendentry{TLS}

    \addplot[
        only marks,
        color=cyan,
        mark=o,
        mark size=1pt
        ]
        coordinates {
            (47519.63333333333,37959.933333333334)(92853.7,78029.4)(13061.0,9433.0)(23415.966666666667,24968.9)(45988.53333333333,38185.73333333333)(7234.166666666667,6243.6)(102960.26666666666,78748.1)(174330.6,143548.4)(91701.33333333333,62583.86666666667)(340635.6,289914.8)(406713.06666666665,353557.73333333334)(507184.6666666667,456457.4)(73080.96666666666,59513.53333333333)(50826.2,40064.26666666667)(234724.33333333334,215407.93333333332)(107074.16666666667,91603.46666666666)(111038.56666666667,83044.23333333334)(111038.56666666667,83044.23333333334)(126242.46666666666,93077.53333333334)(67533.26666666666,47647.0)(109309.26666666666,77606.8)(13921.366666666667,10656.2)(28200.566666666666,19757.433333333334)(84314.0,74777.16666666667)
        };
    \addlegendentry{DTLS}

    \addplot[
        only marks,
        color=orange,
        mark=o,
        mark size=1pt
        ]
        coordinates {
            (78492.06666666667,60539.1)(477742.7,325060.8333333333)
        };
    \addlegendentry{Philips}

    \addplot[color=gray]
        coordinates {(500,500)(5000000,5000000) };

    \addplot[color=gray,style=dashed]
        coordinates {(500,500/2)(5000000,5000000/2) };

    \addplot[color=gray,style=dashed]
        coordinates {(500,500*2)(5000000,5000000*2) };

    \end{axis}
    \end{tikzpicture}
        \caption{$\mathsf{ETS}, k=2$}
        \label{fig:res3a_resets}
    \end{subfigure}
    \begin{subfigure}[b]{0.3\textwidth}
        \begin{tikzpicture}
    \begin{axis}[
        legend style={nodes={scale=0.5, transform shape}},
        width=4cm,height=4cm,
        xmode=log,
        ymode=log,
        xlabel=\tiny{Total resets baseline},
        ylabel=\tiny{Total resets \moe{*}},
        xtick pos=bottom,
        ytick pos=left,
        xmin=100, xmax=300000,
        ymin=100, ymax=300000,
        legend pos=north west,
    ]

    \addplot[
        only marks,
        color=magenta,
        mark=o,
        mark size=1pt
        ]
        coordinates {
            (16362.366666666667,16851.066666666666)(1071.9666666666667,1113.5333333333333)(105585.93333333333,57915.96666666667)
        };
        \addlegendentry{SSH}

    \addplot[
        only marks,
        color=green!60!black,
        mark=o,
        mark size=1pt
        ]
        coordinates {
            (10437.233333333334,9792.6)(2513.5666666666666,1853.5)(9876.7,8554.433333333332)
        };
        \addlegendentry{TCP}

    \addplot[
        only marks,
        color=violet,
        mark=o,
        mark size=1pt
        ]
        coordinates {
            (17674.433333333334,5562.633333333333)(830.6333333333333,929.8666666666667)
        };
    \addlegendentry{TLS}

    \addplot[
        only marks,
        color=cyan,
        mark=o,
        mark size=1pt
        ]
        coordinates {
            (2092.9666666666667,2104.733333333333)(3433.4,3652.3)(1423.8333333333333,1378.0333333333333)(5143.133333333333,4884.933333333333)(4204.833333333333,5244.566666666667)(1867.1666666666667,1815.3333333333333)(5185.333333333333,4387.466666666666)(19151.733333333334,27640.233333333334)(26367.266666666666,25332.066666666666)(41671.666666666664,34032.666666666664)(71454.26666666666,64594.2)(93763.96666666666,79878.33333333333)(3552.766666666667,4165.033333333334)(9114.9,6854.466666666666)(13774.966666666667,12724.1)(18359.866666666665,16994.266666666666)(30850.7,11787.633333333333)(30850.633333333335,11787.6)(22734.466666666667,11735.333333333334)(9115.333333333334,8364.166666666666)(9595.533333333333,7401.8)(1816.4666666666667,1206.4666666666667)(1433.5666666666666,1561.0666666666666)(111037.93333333333,196073.66666666666)
        };
    \addlegendentry{DTLS}

    \addplot[
        only marks,
        color=orange,
        mark=o,
        mark size=1pt
        ]
        coordinates {
            (5174.5,4892.733333333334)(7052.066666666667,5544.2)
        };
    \addlegendentry{Philips}

    \addplot[color=gray]
        coordinates {(100,100)(300000,300000) };

    \addplot[color=gray,style=dashed]
        coordinates {(100,100/2)(300000,300000/2) };

    \addplot[color=gray,style=dashed]
        coordinates {(100,100*2)(300000,300000*2) };

    \end{axis}
    \end{tikzpicture}
        \caption{Randomised $\mathsf{ETS}$}
        \label{fig:res3b_resets}
    \end{subfigure}
    \begin{subfigure}[b]{0.3\textwidth}
        \begin{tikzpicture}
    \begin{axis}[
        legend style={nodes={scale=0.5, transform shape}},
        width=4cm,height=4cm,
        xmode=log,
        ymode=log,
        xlabel=\tiny{Total resets baseline},
        ylabel=\tiny{Total resets \moe{*}},
        xtick pos=bottom,
        ytick pos=left,
        xmin=100, xmax=20000,
        ymin=100, ymax=20000,
        legend pos=north west,
    ]

    \addplot[
        only marks,
        color=cyan,
        mark=o,
        mark size=1pt
        ]
        coordinates {
            (603.6666666666666,601.9666666666667)(531.7,528.0333333333333)(462.0,457.96666666666664)(1153.1333333333334,1153.2333333333333)(953.8333333333334,952.1333333333333)(904.6,910.3333333333334)(1869.8666666666666,1867.3)(1487.2,1482.3666666666666)(1412.9,1414.1666666666667)(1528.5,1518.8333333333333)(1167.1333333333334,1163.4666666666667)(1104.5666666666666,1106.7)(2834.366666666667,2830.1)(2382.3,2384.0666666666666)(1959.2666666666667,1956.9)(3951.1,3948.9333333333334)(3632.8333333333335,3635.8)(3092.2,3095.2)(2524.133333333333,2532.733333333333)(1967.9666666666667,1969.0666666666666)(1784.5666666666666,1787.1333333333334)(4280.033333333334,4279.133333333333)(3619.7,3631.8)(3232.3,3234.8)(6527.5,6524.533333333334)(5465.166666666667,5463.8)(4962.933333333333,4961.233333333334)
        };

    \addplot[color=gray]
        coordinates {(100,100)(20000,20000) };

    \addplot[color=gray,style=dashed]
        coordinates {(100,100/2)(20000,20000/2) };

    \addplot[color=gray,style=dashed]
        coordinates {(100,100*2)(20000,20000*2) };

    \end{axis}
    \end{tikzpicture}
        \caption{Randomly generated}
        \label{fig:res3c_resets}
    \end{subfigure}
    \caption{Results Experiment 3 (Total number of resets).}
    \label{fig:res3_resets}
\end{figure}

\begin{figure}[h]
    \centering
    \begin{subfigure}[b]{0.45\textwidth}
        \begin{tikzpicture}
\begin{axis}[
legend style={nodes={scale=0.5, transform shape}},width=9.0cm,height=6.0cm,ymode=log,ytick pos=left,xtick pos=bottom,xlabel={Hypothesis size},ylabel={Total symbols},xmin=1, xmax=30,ymin=100, ymax=100000000,xtick={0,10,20,30},ytick={100, 10000, 1000000, 100000000},legend pos=south east,scale=0.6,vasymptote=5,]

\addplot[color=cyan,mark=o,mark size=0.5pt,]
coordinates{
(1,120)(2,374)(3,12142)(5,35280)(7,48506)(10,175093)(13,413212)(17,1078804)(23,1113418)(27,1383567)(27,86348081)};
\addlegendentry{baseline}

\addplot[color=green!60!black]
coordinates{
(1,120)(2,374)(3,21142)(5,38443)(7,45678)(11,484665)(14,527874)(17,602844)(23,640028)(27,704076)(30,19922812)};
\addlegendentry{\moe{*}}

\addplot[only marks,color=green!60!black,mark=o,mark size=0.5pt,mark options={fill=green!60!black},]
coordinates{
(1,120)(2,374)(3,21142)(5,38443)(7,45678)(11,484665)};
\addlegendentry{E$_{\mathsf{T}}$}

\addplot[only marks,color=green!60!black,mark=triangle,mark size=0.5pt,mark options={fill=green!60!black},]
coordinates{
(30,19922812)};
\addlegendentry{E$_{\mathsf{AI}}$}

\addplot[only marks,color=green!60!black,mark=square,mark size=0.5pt,mark options={fill=green!60!black},]
coordinates{
(14,527874)(17,602844)(23,640028)(27,704076)};
\addlegendentry{E$_{\mathsf{F}}$}

\addplot[only marks,color=green!60!black,mark=diamond,mark size=0.5pt,mark options={fill=green!60!black},]
coordinates{
};

\addplot[color=cyan,mark=o,mark size=0.5pt,]
coordinates{
(1,120)(2,382)(3,20367)(4,28777)(5,138008)(7,164714)(11,294346)(12,356763)(17,410421)(27,480943)(30,498130)};

\addplot[color=cyan,mark=o,mark size=0.5pt,]
coordinates{
(1,120)(2,379)(3,9149)(4,23487)(5,29986)(7,37523)(10,56477)(13,71833)(14,526639)(23,696085)(23,84956440)};

\addplot[color=green!60!black]
coordinates{
(1,120)(2,382)(3,20367)(4,28777)(5,138008)(7,152513)(11,203732)(13,227776)(17,280737)(20,300028)(26,335893)(30,419281)};

\addplot[color=green!60!black]
coordinates{
(1,120)(2,379)(3,9149)(4,23487)(5,29986)(7,44269)(11,288437)(13,296335)(17,402562)(20,416771)(26,454449)(30,524178)};

\addplot[only marks,color=green!60!black,mark=o,mark size=0.5pt,mark options={fill=green!60!black},]
coordinates{
(1,120)(2,382)(3,20367)(4,28777)(5,138008)(11,203732)};

\addplot[only marks,color=green!60!black,mark=triangle,mark size=0.5pt,mark options={fill=green!60!black},]
coordinates{
(7,152513)(20,300028)};

\addplot[only marks,color=green!60!black,mark=square,mark size=0.5pt,mark options={fill=green!60!black},]
coordinates{
(13,227776)(17,280737)(26,335893)(30,419281)};

\addplot[only marks,color=green!60!black,mark=diamond,mark size=0.5pt,mark options={fill=green!60!black},]
coordinates{
};

\addplot[only marks,color=green!60!black,mark=o,mark size=0.5pt,mark options={fill=green!60!black},]
coordinates{
(1,120)(2,379)(3,9149)(4,23487)(5,29986)(7,44269)(11,288437)};

\addplot[only marks,color=green!60!black,mark=triangle,mark size=0.5pt,mark options={fill=green!60!black},]
coordinates{
(13,296335)};

\addplot[only marks,color=green!60!black,mark=square,mark size=0.5pt,mark options={fill=green!60!black},]
coordinates{
(17,402562)(20,416771)(26,454449)(30,524178)};

\addplot[only marks,color=green!60!black,mark=diamond,mark size=0.5pt,mark options={fill=green!60!black},]
coordinates{
};

\end{axis}
\end{tikzpicture}
    \end{subfigure}
    \begin{subfigure}[b]{0.45\textwidth}
        \begin{tikzpicture}
\begin{axis}[
legend style={nodes={scale=0.5, transform shape}},width=9.0cm,height=6.0cm,ymode=log,ytick pos=left,xtick pos=bottom,xlabel={Hypothesis size},ylabel={Total symbols},xmin=1, xmax=189,ymin=100, ymax=100000000,xtick={0,50,100,150,189},ytick={100, 10000, 1000000, 100000000},legend pos=south east,scale=0.6,vasymptote=5,]

\addplot[color=cyan,mark=o,mark size=0.5pt,]
coordinates{
(1,138)(2,443)(3,2417)(4,16784)(5,20617)(6,53875)(8,186283)(9,392413)(11,428967)(17,518642)(19,537419)(20,548120)(21,615169)(26,700454)(29,855967)(30,1463398)(36,2060935)(37,3017104)(39,3376122)(40,3424246)(42,3459952)(44,4108339)(45,4209767)(46,4482444)(49,5165907)(53,7147485)(54,8301242)(55,10493512)(56,11820891)(57,11924447)(63,11998258)(64,12073998)(69,12159492)(71,15913061)(74,17908530)(77,18082715)(81,19262371)(82,26537400)(92,28791209)(93,29194616)(95,29921151)(102,30449726)(105,31752049)(106,31878539)(107,32132865)(109,38646796)(110,38696202)(118,42294019)(119,48657404)(120,75529282)(122,83598478)(123,84990290)(123,90956064)};
\addlegendentry{baseline}

\addplot[color=green!60!black]
coordinates{
(1,138)(2,443)(3,2417)(4,16784)(5,20617)(6,106532)(8,114801)(9,122809)(10,396826)(12,655764)(19,715168)(22,748872)(23,759854)(27,803866)(31,837312)(32,846443)(36,910577)(37,928680)(38,941300)(39,955096)(40,967697)(41,996915)(42,1108155)(43,1244923)(46,1348710)(47,1720297)(48,1734382)(59,2052122)(60,2074405)(63,2126174)(65,2187733)(66,2211218)(69,2339647)(73,2449164)(74,2664974)(75,2706187)(77,4641901)(82,4689013)(85,6162954)(90,8797578)(99,9049647)(101,9136880)(102,9194049)(103,9249050)(104,9408905)(107,10383513)(114,10630848)(115,10752070)(117,11251408)(126,11590813)(129,11649041)(130,12563938)(131,13565477)(133,13604507)(135,36860292)(146,37240083)(148,37337193)(149,37729193)(151,40949950)(153,41435097)(157,45846473)(158,45900850)(160,46209837)(161,80885619)(165,80965837)(166,80985931)(166,92342337)};
\addlegendentry{\moe{*}}

\addplot[only marks,color=green!60!black,mark=o,mark size=0.5pt,mark options={fill=green!60!black},]
coordinates{
(1,138)(2,443)(3,2417)(4,16784)(5,20617)(6,106532)(8,114801)(12,655764)(19,715168)(82,4689013)(99,9049647)(146,37240083)};
\addlegendentry{E$_{\mathsf{T}}$}

\addplot[only marks,color=green!60!black,mark=triangle,mark size=0.5pt,mark options={fill=green!60!black},]
coordinates{
(38,941300)(48,1734382)(104,9408905)(135,36860292)};
\addlegendentry{E$_{\mathsf{AI}}$}

\addplot[only marks,color=green!60!black,mark=square,mark size=0.5pt,mark options={fill=green!60!black},]
coordinates{
(9,122809)(10,396826)(22,748872)(23,759854)(27,803866)(31,837312)(32,846443)(36,910577)(37,928680)(39,955096)(40,967697)(41,996915)(42,1108155)(43,1244923)(46,1348710)(47,1720297)(59,2052122)(60,2074405)(63,2126174)(65,2187733)(66,2211218)(73,2449164)(74,2664974)(75,2706187)(77,4641901)(85,6162954)(90,8797578)(101,9136880)(102,9194049)(103,9249050)(107,10383513)(114,10630848)(115,10752070)(117,11251408)(126,11590813)(130,12563938)(131,13565477)(133,13604507)(148,37337193)(149,37729193)(151,40949950)(153,41435097)(157,45846473)(158,45900850)(160,46209837)(161,80885619)(165,80965837)(166,80985931)(166,92342337)};
\addlegendentry{E$_{\mathsf{F}}$}

\addplot[only marks,color=green!60!black,mark=diamond,mark size=0.5pt,mark options={fill=green!60!black},]
coordinates{
(69,2339647)(129,11649041)};
\addlegendentry{E$_{\mathsf{C}}$}

\addplot[color=cyan,mark=o,mark size=0.5pt,]
coordinates{
(1,138)(2,453)(3,1659)(4,8488)(5,16232)(6,20504)(7,61161)(9,137499)(10,218001)(12,313635)(15,392373)(20,432162)(22,455293)(24,1071217)(28,1103036)(34,1217154)(35,1236381)(36,1247442)(40,1378677)(41,1478802)(46,1999649)(50,3749140)(55,4196064)(56,4469308)(57,4699052)(61,6402741)(62,6742527)(63,7273305)(64,7710437)(65,11298417)(68,13809998)(78,14045130)(79,14173367)(82,14343257)(84,14418853)(85,14565883)(86,15933590)(93,16250626)(96,16318035)(97,17404603)(99,18524284)(100,18706788)(101,24835907)(102,28537987)(104,31313615)(111,31935221)(113,34661099)(114,36655241)(115,40965721)(119,42791696)(120,42861667)(122,43839090)(123,48711798)(124,50830408)(124,90957406)};

\addplot[color=cyan,mark=o,mark size=0.5pt,]
coordinates{
(1,138)(2,449)(3,4852)(4,76463)(5,82207)(6,168663)(8,254441)(14,318947)(15,329211)(17,351999)(18,362911)(19,442297)(20,523869)(23,906560)(26,1002702)(30,1055074)(32,1644977)(34,2014750)(36,2194607)(37,2255579)(38,3069171)(41,4111768)(47,4578399)(48,7235905)(52,7487732)(53,7799796)(55,7948190)(56,8371372)(59,20765294)(61,20789746)(62,21814467)(63,25901982)(76,34822167)(77,35441142)(78,35475410)(79,35509487)(83,36148697)(84,36163629)(85,36180646)(92,36441855)(95,36521120)(98,37602882)(99,38947914)(100,41000846)(107,41333677)(108,42148373)(110,42561302)(112,42616361)(114,42909919)(116,47705771)(117,76956523)(118,77069201)(119,90288679)(119,91011313)};

\addplot[color=green!60!black]
coordinates{
(1,138)(2,453)(3,1659)(4,8488)(5,16232)(6,85478)(7,262674)(9,274235)(10,284158)(12,488930)(16,519545)(17,527481)(19,1537881)(23,1569704)(30,1655846)(34,1718923)(35,1725237)(36,1740349)(38,1771164)(39,1791447)(43,1856382)(47,1905489)(49,1934204)(52,1978825)(53,2017032)(56,2070742)(63,2231124)(68,2348591)(69,2387459)(70,2442200)(71,2471569)(74,2691110)(75,2733764)(80,2924375)(85,3056351)(86,3267392)(90,3368579)(91,3402142)(92,3445895)(96,3540745)(97,3558840)(98,3643740)(101,3673769)(102,3765206)(103,4120296)(106,4609542)(109,4772021)(113,4918161)(115,5026355)(116,5108894)(121,6483119)(122,7018504)(124,10563297)(134,27412604)(135,27439824)(147,27967216)(149,28331639)(151,28405444)(153,28518150)(155,28566769)(158,28672653)(160,30194002)(164,30363983)(166,50181309)(166,92192939)};

\addplot[color=green!60!black]
coordinates{
(1,138)(2,449)(3,4852)(4,76463)(5,82207)(6,353174)(8,362239)(9,371382)(10,386839)(11,393761)(21,602550)(25,647737)(29,717397)(32,798692)(33,823520)(36,877196)(40,919638)(42,941488)(43,992585)(49,1129215)(61,1395276)(62,1410946)(66,1489954)(67,1508785)(68,1528864)(69,1544555)(73,1622313)(74,1652336)(76,1701090)(77,1730927)(78,2079922)(79,2354246)(80,5010128)(84,6911592)(85,7180144)(95,10393237)(96,10411298)(105,10608998)(106,11879270)(110,11955449)(111,11983194)(113,12091348)(114,12298327)(115,14602046)(121,19685574)(122,19752197)(125,19847153)(130,22834460)(138,23094615)(139,23126564)(140,23278840)(142,23348313)(144,23851506)(154,29301598)(159,29380125)(160,29400804)(162,29792878)(166,29962477)(168,33191698)(168,92285185)};

\addplot[only marks,color=green!60!black,mark=o,mark size=0.5pt,mark options={fill=green!60!black},]
coordinates{
(1,138)(2,453)(3,1659)(4,8488)(5,16232)(6,85478)(7,262674)(9,274235)(16,519545)(23,1569704)(36,1740349)(85,3056351)};

\addplot[only marks,color=green!60!black,mark=triangle,mark size=0.5pt,mark options={fill=green!60!black},]
coordinates{
(19,1537881)(43,1856382)(68,2348591)(115,5026355)};

\addplot[only marks,color=green!60!black,mark=square,mark size=0.5pt,mark options={fill=green!60!black},]
coordinates{
(10,284158)(12,488930)(17,527481)(30,1655846)(35,1725237)(38,1771164)(39,1791447)(49,1934204)(52,1978825)(53,2017032)(56,2070742)(63,2231124)(69,2387459)(70,2442200)(71,2471569)(74,2691110)(75,2733764)(86,3267392)(90,3368579)(91,3402142)(92,3445895)(96,3540745)(97,3558840)(98,3643740)(101,3673769)(102,3765206)(103,4120296)(106,4609542)(109,4772021)(113,4918161)(116,5108894)(121,6483119)(122,7018504)(124,10563297)(134,27412604)(135,27439824)(147,27967216)(149,28331639)(153,28518150)(155,28566769)(158,28672653)(164,30363983)(166,50181309)(166,92192939)};

\addplot[only marks,color=green!60!black,mark=diamond,mark size=0.5pt,mark options={fill=green!60!black},]
coordinates{
(34,1718923)(47,1905489)(80,2924375)(151,28405444)(160,30194002)};

\addplot[only marks,color=green!60!black,mark=o,mark size=0.5pt,mark options={fill=green!60!black},]
coordinates{
(1,138)(2,449)(3,4852)(4,76463)(5,82207)(6,353174)(11,393761)(25,647737)(121,19685574)(159,29380125)};

\addplot[only marks,color=green!60!black,mark=triangle,mark size=0.5pt,mark options={fill=green!60!black},]
coordinates{
(8,362239)(9,371382)(29,717397)(49,1129215)(66,1489954)(78,2079922)(96,10411298)(154,29301598)};

\addplot[only marks,color=green!60!black,mark=square,mark size=0.5pt,mark options={fill=green!60!black},]
coordinates{
(10,386839)(21,602550)(32,798692)(33,823520)(40,919638)(42,941488)(61,1395276)(62,1410946)(67,1508785)(68,1528864)(74,1652336)(76,1701090)(77,1730927)(79,2354246)(80,5010128)(84,6911592)(85,7180144)(95,10393237)(105,10608998)(106,11879270)(110,11955449)(111,11983194)(113,12091348)(114,12298327)(115,14602046)(122,19752197)(125,19847153)(130,22834460)(138,23094615)(139,23126564)(140,23278840)(142,23348313)(144,23851506)(160,29400804)(162,29792878)(166,29962477)(168,33191698)(168,92285185)};

\addplot[only marks,color=green!60!black,mark=diamond,mark size=0.5pt,mark options={fill=green!60!black},]
coordinates{
(36,877196)(43,992585)(69,1544555)(73,1622313)};

\end{axis}
\end{tikzpicture}
    \end{subfigure}
    \caption{Results Experiment 4 for \emph{m159} (left) and \emph{m189} (right) with total number of symbols.}
    \label{fig:res4_sym}
    \bigbreak
    \begin{subfigure}[b]{0.45\textwidth}
        \begin{tikzpicture}
\begin{axis}[
legend style={nodes={scale=0.5, transform shape}},width=9.0cm,height=6.0cm,ymode=log,ytick pos=left,xtick pos=bottom,xlabel={Hypothesis size},ylabel={Total resets},xmin=1, xmax=30,ymin=100, ymax=100000000,xtick={0,10,20,30},ytick={100, 10000, 1000000, 100000000},legend pos=south east,scale=0.6,vasymptote=5,]

\addplot[color=cyan,mark=o,mark size=0.5pt,]
coordinates{
(1,120)(2,373)(3,11850)(5,29729)(7,35686)(10,77947)(13,144298)(17,307396)(23,313166)(27,359146)(27,13652568)};
\addlegendentry{baseline}

\addplot[color=green!60!black]
coordinates{
(1,120)(2,373)(3,11850)(5,29726)(7,31440)(11,169288)(14,176935)(17,191213)(23,196143)(27,204833)(30,3171021)};
\addlegendentry{\moe{*}}

\addplot[only marks,color=green!60!black,mark=o,mark size=0.5pt,mark options={fill=green!60!black},]
coordinates{
(1,120)(2,373)(3,11850)(5,29726)(7,31440)(11,169288)};
\addlegendentry{E$_{\mathsf{T}}$}

\addplot[only marks,color=green!60!black,mark=triangle,mark size=0.5pt,mark options={fill=green!60!black},]
coordinates{
(30,3171021)};
\addlegendentry{E$_{\mathsf{AI}}$}

\addplot[only marks,color=green!60!black,mark=square,mark size=0.5pt,mark options={fill=green!60!black},]
coordinates{
(14,176935)(17,191213)(23,196143)(27,204833)};
\addlegendentry{E$_{\mathsf{F}}$}


\addplot[color=cyan,mark=o,mark size=0.5pt,]
coordinates{
(1,120)(2,381)(3,20036)(4,25718)(5,87171)(7,99021)(11,143257)(12,158990)(17,165072)(27,174600)(30,177460)};

\addplot[color=cyan,mark=o,mark size=0.5pt,]
coordinates{
(1,120)(2,378)(3,8868)(4,18925)(5,22206)(7,24430)(10,29231)(13,32083)(14,142678)(23,175235)(23,15043566)};

\addplot[color=green!60!black]
coordinates{
(1,120)(2,381)(3,20036)(4,25718)(5,87171)(7,92559)(11,107479)(13,112960)(17,121973)(20,125298)(26,130033)(30,141905)};

\addplot[color=green!60!black]
coordinates{
(1,120)(2,378)(3,8868)(4,18925)(5,22206)(7,27488)(11,109640)(13,111079)(17,132662)(20,135013)(26,140035)(30,149710)};

\addplot[only marks,color=green!60!black,mark=o,mark size=0.5pt,mark options={fill=green!60!black},]
coordinates{
(1,120)(2,381)(3,20036)(4,25718)(5,87171)(11,107479)};

\addplot[only marks,color=green!60!black,mark=triangle,mark size=0.5pt,mark options={fill=green!60!black},]
coordinates{
(7,92559)(20,125298)};

\addplot[only marks,color=green!60!black,mark=square,mark size=0.5pt,mark options={fill=green!60!black},]
coordinates{
(13,112960)(17,121973)(26,130033)(30,141905)};

\addplot[only marks,color=green!60!black,mark=diamond,mark size=0.5pt,mark options={fill=green!60!black},]
coordinates{
};

\addplot[only marks,color=green!60!black,mark=o,mark size=0.5pt,mark options={fill=green!60!black},]
coordinates{
(1,120)(2,378)(3,8868)(4,18925)(5,22206)(7,27488)(11,109640)};

\addplot[only marks,color=green!60!black,mark=triangle,mark size=0.5pt,mark options={fill=green!60!black},]
coordinates{
(13,111079)};

\addplot[only marks,color=green!60!black,mark=square,mark size=0.5pt,mark options={fill=green!60!black},]
coordinates{
(17,132662)(20,135013)(26,140035)(30,149710)};

\addplot[only marks,color=green!60!black,mark=diamond,mark size=0.5pt,mark options={fill=green!60!black},]
coordinates{
};

\end{axis}
\end{tikzpicture}
    \end{subfigure}
    \begin{subfigure}[b]{0.45\textwidth}
        \begin{tikzpicture}
\begin{axis}[
legend style={nodes={scale=0.5, transform shape}},width=9.0cm,height=6.0cm,ymode=log,ytick pos=left,xtick pos=bottom,xlabel={Hypothesis size},ylabel={Total resets},xmin=1, xmax=189,ymin=100, ymax=100000000,xtick={0,50,100,150,189},ytick={100, 10000, 1000000, 100000000},legend pos=south east,scale=0.6,vasymptote=5,]

\addplot[color=cyan,mark=o,mark size=0.5pt,]
coordinates{
(1,138)(2,442)(3,2089)(4,11979)(5,13463)(6,27559)(8,77489)(9,141530)(11,151356)(17,170406)(19,172984)(20,174428)(21,186504)(26,198953)(29,223136)(30,320394)(36,406009)(37,534690)(39,581355)(40,587231)(42,591535)(44,675972)(45,689189)(46,724917)(49,811993)(53,1062142)(54,1210729)(55,1489965)(56,1627972)(57,1640758)(63,1651236)(64,1660933)(69,1669629)(71,2147045)(74,2400533)(77,2419079)(81,2561611)(82,3297530)(92,3549062)(93,3590769)(95,3667264)(102,3709670)(105,3835482)(106,3847588)(107,3871943)(109,4394423)(110,4397440)(118,4717936)(119,5280303)(120,7779531)(122,8373639)(123,8499778)(123,9043944)};
\addlegendentry{baseline}

\addplot[color=green!60!black]
coordinates{
(1,138)(2,442)(3,2089)(4,11979)(5,13463)(6,46717)(8,48468)(9,50235)(10,121999)(12,189994)(19,197546)(22,201202)(23,202316)(27,207054)(31,211501)(32,212840)(36,220878)(37,223038)(38,224366)(39,225712)(40,227006)(41,230548)(42,245070)(43,263628)(46,275532)(47,324397)(48,325697)(59,353207)(60,354447)(63,358140)(65,362864)(66,364227)(69,376496)(73,385215)(74,406478)(75,409985)(77,603043)(82,609745)(85,759055)(90,1021399)(99,1036849)(101,1046302)(102,1050362)(103,1054535)(104,1068410)(107,1153698)(114,1167037)(115,1175923)(117,1216565)(126,1244497)(129,1249782)(130,1329207)(131,1415680)(133,1419392)(135,3347649)(146,3370144)(148,3376288)(149,3408769)(151,3672679)(153,3710246)(157,4064716)(158,4068488)(160,4091010)(161,6738609)(165,6746765)(166,6748772)(166,7657663)};
\addlegendentry{\moe{*}}

\addplot[only marks,color=green!60!black,mark=o,mark size=0.5pt,mark options={fill=green!60!black},]
coordinates{
(1,138)(2,442)(3,2089)(4,11979)(5,13463)(6,46717)(8,48468)(12,189994)(19,197546)(82,609745)(99,1036849)(146,3370144)};
\addlegendentry{E$_{\mathsf{T}}$}

\addplot[only marks,color=green!60!black,mark=triangle,mark size=0.5pt,mark options={fill=green!60!black},]
coordinates{
(38,224366)(48,325697)(104,1068410)(135,3347649)};
\addlegendentry{E$_{\mathsf{AI}}$}

\addplot[only marks,color=green!60!black,mark=square,mark size=0.5pt,mark options={fill=green!60!black},]
coordinates{
(9,50235)(10,121999)(22,201202)(23,202316)(27,207054)(31,211501)(32,212840)(36,220878)(37,223038)(39,225712)(40,227006)(41,230548)(42,245070)(43,263628)(46,275532)(47,324397)(59,353207)(60,354447)(63,358140)(65,362864)(66,364227)(73,385215)(74,406478)(75,409985)(77,603043)(85,759055)(90,1021399)(101,1046302)(102,1050362)(103,1054535)(107,1153698)(114,1167037)(115,1175923)(117,1216565)(126,1244497)(130,1329207)(131,1415680)(133,1419392)(148,3376288)(149,3408769)(151,3672679)(153,3710246)(157,4064716)(158,4068488)(160,4091010)(161,6738609)(165,6746765)(166,6748772)(166,7657663)};
\addlegendentry{E$_{\mathsf{F}}$}

\addplot[only marks,color=green!60!black,mark=diamond,mark size=0.5pt,mark options={fill=green!60!black},]
coordinates{
(69,376496)(129,1249782)};
\addlegendentry{E$_{\mathsf{C}}$}

\addplot[color=cyan,mark=o,mark size=0.5pt,]
coordinates{
(1,138)(2,451)(3,1330)(4,5746)(5,9411)(6,10810)(7,26322)(9,52391)(10,76085)(12,101908)(15,120832)(20,126752)(22,130334)(24,248465)(28,253163)(34,271354)(35,273863)(36,275108)(40,294628)(41,309715)(46,390771)(50,660884)(55,720574)(56,758372)(57,790187)(61,1025681)(62,1071505)(63,1143134)(64,1201629)(65,1681598)(68,2012173)(78,2027390)(79,2040533)(82,2053161)(84,2058823)(85,2074428)(86,2222991)(93,2247511)(96,2252328)(97,2359791)(99,2466131)(100,2482351)(101,3096879)(102,3404818)(104,3678578)(111,3725578)(113,3975993)(114,4147935)(115,4519555)(119,4683744)(120,4689613)(122,4772113)(123,5190492)(124,5372904)(124,9042609)};

\addplot[color=cyan,mark=o,mark size=0.5pt,]
coordinates{
(1,138)(2,448)(3,4515)(4,55219)(5,57735)(6,95066)(8,127102)(14,140710)(15,142103)(17,145380)(18,146625)(19,159919)(20,172956)(23,235639)(26,248670)(30,254626)(32,343066)(34,396502)(36,422267)(37,430308)(38,548037)(41,693693)(47,750997)(48,1088018)(52,1113702)(53,1150615)(55,1165670)(56,1214368)(59,2652069)(61,2655888)(62,2773422)(63,3246967)(76,4257140)(77,4319117)(78,4321040)(79,4322937)(83,4389587)(84,4391631)(85,4393713)(92,4410306)(95,4415965)(98,4512246)(99,4636776)(100,4830468)(107,4849430)(108,4921472)(110,4954327)(112,4958706)(114,4980208)(116,5378792)(117,7797222)(118,7806025)(119,8925668)(119,8988696)};

\addplot[color=green!60!black]
coordinates{
(1,138)(2,451)(3,1330)(4,5746)(5,9411)(6,36246)(7,101629)(9,104205)(10,106397)(12,159484)(16,164230)(17,165416)(19,371114)(23,375828)(30,384998)(34,390122)(35,391378)(36,392641)(38,395094)(39,396964)(43,403287)(47,408317)(49,410852)(52,414731)(53,419221)(56,423180)(63,433379)(68,439748)(69,442636)(70,447595)(71,449366)(74,471356)(75,474932)(80,491269)(85,498206)(86,516972)(90,523057)(91,525066)(92,527912)(96,534127)(97,535638)(98,542899)(101,547515)(102,555947)(103,587780)(106,626937)(109,636007)(113,642517)(115,648410)(116,653898)(121,770387)(122,813561)(124,1122526)(134,2587071)(135,2589022)(147,2621153)(149,2651661)(151,2655475)(153,2661680)(155,2665564)(158,2671135)(160,2792903)(164,2800523)(166,4399015)(166,7807082)};

\addplot[color=green!60!black]
coordinates{
(1,138)(2,448)(3,4515)(4,55219)(5,57735)(6,164164)(8,165682)(9,167351)(10,171103)(11,172616)(21,214400)(25,219205)(29,227425)(32,235614)(33,238549)(36,245178)(40,250783)(42,253512)(43,260087)(49,274036)(61,291491)(62,292921)(66,298903)(67,300642)(68,302375)(69,303808)(73,309538)(74,311302)(76,314424)(77,316285)(78,349509)(79,376780)(80,641212)(84,830589)(85,856857)(95,1174611)(96,1176253)(105,1190418)(106,1306709)(110,1312809)(111,1314505)(113,1323401)(114,1341022)(115,1548927)(121,2000357)(122,2004417)(125,2009675)(130,2259640)(138,2274717)(139,2276738)(140,2288149)(142,2292513)(144,2331862)(154,2757961)(159,2768023)(160,2770083)(162,2800230)(166,2809322)(168,3067057)(168,7714815)};

\addplot[only marks,color=green!60!black,mark=o,mark size=0.5pt,mark options={fill=green!60!black},]
coordinates{
(1,138)(2,451)(3,1330)(4,5746)(5,9411)(6,36246)(7,101629)(9,104205)(16,164230)(23,375828)(36,392641)(85,498206)};

\addplot[only marks,color=green!60!black,mark=triangle,mark size=0.5pt,mark options={fill=green!60!black},]
coordinates{
(19,371114)(43,403287)(68,439748)(115,648410)};

\addplot[only marks,color=green!60!black,mark=square,mark size=0.5pt,mark options={fill=green!60!black},]
coordinates{
(10,106397)(12,159484)(17,165416)(30,384998)(35,391378)(38,395094)(39,396964)(49,410852)(52,414731)(53,419221)(56,423180)(63,433379)(69,442636)(70,447595)(71,449366)(74,471356)(75,474932)(86,516972)(90,523057)(91,525066)(92,527912)(96,534127)(97,535638)(98,542899)(101,547515)(102,555947)(103,587780)(106,626937)(109,636007)(113,642517)(116,653898)(121,770387)(122,813561)(124,1122526)(134,2587071)(135,2589022)(147,2621153)(149,2651661)(153,2661680)(155,2665564)(158,2671135)(164,2800523)(166,4399015)(166,7807082)};

\addplot[only marks,color=green!60!black,mark=diamond,mark size=0.5pt,mark options={fill=green!60!black},]
coordinates{
(34,390122)(47,408317)(80,491269)(151,2655475)(160,2792903)};

\addplot[only marks,color=green!60!black,mark=o,mark size=0.5pt,mark options={fill=green!60!black},]
coordinates{
(1,138)(2,448)(3,4515)(4,55219)(5,57735)(6,164164)(11,172616)(25,219205)(121,2000357)(159,2768023)};

\addplot[only marks,color=green!60!black,mark=triangle,mark size=0.5pt,mark options={fill=green!60!black},]
coordinates{
(8,165682)(9,167351)(29,227425)(49,274036)(66,298903)(78,349509)(96,1176253)(154,2757961)};

\addplot[only marks,color=green!60!black,mark=square,mark size=0.5pt,mark options={fill=green!60!black},]
coordinates{
(10,171103)(21,214400)(32,235614)(33,238549)(40,250783)(42,253512)(61,291491)(62,292921)(67,300642)(68,302375)(74,311302)(76,314424)(77,316285)(79,376780)(80,641212)(84,830589)(85,856857)(95,1174611)(105,1190418)(106,1306709)(110,1312809)(111,1314505)(113,1323401)(114,1341022)(115,1548927)(122,2004417)(125,2009675)(130,2259640)(138,2274717)(139,2276738)(140,2288149)(142,2292513)(144,2331862)(160,2770083)(162,2800230)(166,2809322)(168,3067057)(168,7714815)};

\addplot[only marks,color=green!60!black,mark=diamond,mark size=0.5pt,mark options={fill=green!60!black},]
coordinates{
(36,245178)(43,260087)(69,303808)(73,309538)};

\end{axis}
\end{tikzpicture}
    \end{subfigure}
    \caption{Results Experiment 4 for \emph{m159} (left) and \emph{m189} (right) with total number of resets.}
    \label{fig:res4_resets}
\end{figure}

}
    
\end{document}